\documentclass[12pt, twoside]{amsart}
\usepackage{secdot}
\scrollmode 
\usepackage{amsmath}
\usepackage{amsthm}
\usepackage{amssymb}
\usepackage{amsfonts}
\usepackage{dsfont}
\usepackage{mathrsfs}
\usepackage[matrix,arrow,curve]{xy}
\usepackage{tikz}
\usetikzlibrary{arrows.meta}

\newtheorem{theorem}{Theorem}[section]

\newtheorem{proposition}[theorem]{Proposition}
\newtheorem{corollary}[theorem]{Corollary}
\theoremstyle{definition}
\newtheorem{definition}[theorem]{Definition}

\newtheorem{conjecture}[theorem]{Conjecture}

\usepackage{hyperref}
\usepackage{titlesec}
\usepackage{mathtools}
\newcommand{\Lie}{%
  \mathchoice
    {\ooalign{$\displaystyle\mathcal{L}$\cr\hidewidth\rule[0.9ex]{0.6em}{0.6pt}\hidewidth\cr}}
    {\ooalign{$\textstyle\mathcal{L}$\cr\hidewidth\rule[0.8ex]{0.55em}{0.6pt}\hidewidth\cr}}
    {\ooalign{$\scriptstyle\mathcal{L}$\cr\hidewidth\rule[0.6ex]{0.4em}{0.5pt}\hidewidth\cr}}
    {\ooalign{$\scriptscriptstyle\mathcal{L}$\cr\hidewidth\rule[0.5ex]{0.35em}{0.5pt}\hidewidth\cr}}
}

\definecolor{softred}{RGB}{170,70,60}
\definecolor{nightblue}{RGB}{25,25,112}

\titleformat{\subsection}
  {\normalfont\scshape} % même style que section
  {\thesubsection}{1em}{}

\begin{document}

\title[Integrability in the $\text{BMS}_3$ scenario]
{Integrability in Asymptotic Symmetries of Spacetime: The $\text{BMS}_3$ scenario}

\author{Corentin Vitel}
\address{Université Bourgogne Europe, CNRS, IMB UMR 5584, F-21000 Dijon, France}
\email{corentin.vitel@u-bourgogne.fr}

\maketitle

\begin{abstract}
We revisit the proof of a $\mathfrak{bms}_3-$integrable hierarchy introduced in \cite{Fue.al.17}, using different structural methodologies. Specifically, from the variational complex on the ring of polynomial symbols, a $\mathfrak{bms}_3$ bi-Hamiltonian structure is constructed on which a Nijenhuis operator can be attached. A similar argument is made for the $\mathrm{AdS}_3$ case, where we check that the flat limit of the latter recovers the asymptotically flat situation. An alternative $\tau-$scheme description is also presented. Moreover, a Lie-Poisson description suggests that such a $\mathfrak{bms}_3-$hierarchy is not unique. For a subclass of so-called energy-dependent Schrödinger operators, it is shown that their Lax flows are described by the coadjoint orbits of $\mathfrak{bms}_3$. 
\end{abstract}

\section*{Introduction} 

Even if the story of \textit{completely} integrable systems dates back to Liouville, Jacobi, Euler and many others, it is only in the late 1960s that \cite{Gar.al67} drew attention back by discussing for the first time an infinite dimensional analog of the known features: the Korteweg-de Vries (KdV) hierarchy. Then, \cite{Lax68,ZakFad71,Gar71,Mag78} laid the groundwork for a proper understanding of an \textit{infinite-dimensional integrable system}. Roughly speaking, it is defined as a hierarchy of \textit{independent} Partial Differential Equations (PDEs) with conserved quantities in \textit{involution}. These PDEs are genuinely seen as \textit{evolution equations} $\partial f_i/\partial t=\mathcal{K}[f_i]$ where the infinitesimal generator of a transformation in time $t$ acts on some functions\footnote{The index $i$ runs over a countable set.} $f_i$ (\cite{MiwJimDat00}).

In the spirit of \cite{Sem02}, one should keep in mind that such PDEs are \textit{Hamiltonian} with respect to (at least) a Poisson bracket (see the pioneer works of \cite{GelDic76,GelDic76-2,GelManShu76}), and stand as compatibility conditions for some set of linear equations. 

Integrability encapsulates a vast amount of tools among which infinite-dimensional Lie groups play a decisive role. This is paradigmatic in the KdV situation whose structure is based on the Virasoro algebra. The literature on the subject is abundant (see for instance \cite{Seg81,Seg91}) and the matter has irrigated a huge amount of areas in mathematical and theoretical physics among which 2D topological gravity is a case in point. \cite{Wit91} conjectured that the intersection theory on the moduli space of curves is governed by the KdV hierarchy. He suspected that the partition function -- seen as the exponential in the generating function of the intersection numbers -- was actually a $\tau-$function for this hierarchy. In fact, \cite{Kon92} proved the conjecture by showing the equivalence of this partition function with the one of the "Airy matrix model"; the KdV hierarchy being at the core of the argument. 

On top of that, \textit{gravitational asymptotic symmetries} are an essential area in modern General Relativity. Originally due to the work of Bondi, Metzner, Sachs and van der Burg \cite{BonBurMet62, Sac62, Sac62-2}, the \textit{BMS group} brings the diffeomorphisms that preserve the asymptotic form of the space-time metric together. Since the early 2010s and the pioneering works of Barnich and Troessaert (non-exhaustively \cite{BarTro10, BarTro10-2, BarTro11}), these symmetries have shed light on numerous issues, from soft theorems and memory effects \cite{Str14,StrZhi16} to quantum gravity and holography \cite{BroHen86,Mal99,PasShaStr17}, without forgetting black hole physics \cite{Haw76, HawPerStr16, HawPerStr17}.

While on the one hand, relations between Conformal Field Theories (CFTs) and integrable systems are well documented \cite{DiFMatSén97, Zub91} (c.f. the role of \textit{W-algebras} \cite{Dic97} and \textit{Poisson Vertex Operators} \cite{Bel89,BarDeSKac09,Kac17}); and while the $3-$dimensional version of the BMS algebra is built from the Virasoro algebra, one may wonder whether there exists a so-called $\mathfrak{bms}_3$ integrable system. 

To our knowledge, \cite{Fue.al.17} were the first to tackle and answer positively this question (see also \cite{PérTemTro16} regarding the AdS$_3$ situation). Starting from the Poisson structure, they gave a zero-curvature formulation of a BMS$_3$ hierarchy based on a $2-$dimensional $\mathrm{ISL}(2,\mathbb{R})-$valued gauge field. In addition, they were able to show that, with an appropriate set of boundary conditions on the Einstein-Hilbert action with a vanishing cosmological constant, the Einstein equations were exactly the PDEs attached to this hierarchy, thus paving the way to a fruitful correspondence between the two domains.  

\subsection*{Main Results}

In this paper, we present a systematic approach to build a well-defined $\mathfrak{bms}_3$ bi-Hamiltonian hierarchy by exploiting its Lie algebra structure: \textit{i)} either by working on the variational complex in the ring of polynomial symbols ; \textit{ii)} or from the $\mathfrak{bms}_3-$Lie-Poisson bracket with an appropriate "frozen" partner. We make sure that this hierarchy is integrable. Crucially, the second approach hints that such a hierarchy is not unique. In a similar fashion, we detail the construction of an integrable hierarchy based on the $\mathfrak{ads}_3$ algebra. Independently of the zero-curvature formulation mentionned in \cite{Fue.al.17}, we show that $\mathfrak{bms}_3$ coadjoint orbits describe the Lax phase space for some Schrödinger-like spectral problems. 

\subsection*{Plan of the paper and Notations}

In \textbf{Section 1}, we give a brief summary of the main results and ideas concerning the (extended) BMS$_3$ group and its (extended) algebra. We emphasize on its connection with the Virasoro algebra and conclude by recalling the coadjoint action on its regular dual. In \textbf{Section 2}, we construct the bi-Hamiltonian $\mathfrak{bms}_3$ hierarchy; we explicitly state its first flows and (conserved) functionals; we prove it is integrable. We conclude with an alternative $\tau-$scheme formulation in the asymptotically flat situation. The process is repeated in \textbf{Section 3} on $\mathfrak{ads}_3$. We pay attention to how the flat limit allows one to recover one bi-Hamiltonian structure from the other. \textbf{Section 4} recovers the previous results in an `geometric/Euler-oriented' perspective. In particular, it is suggested how extra $\mathfrak{bms}_3-$like bi-Hamiltonian hierarchies should exist. Finally, \textbf{Section 5} examines the link between this integrable system and energy-dependent Schrödinger operators.    

Throughout this paper, a few notations may seem unusual. We clarify their meaning. 
By $\partial$, we denote the total derivative $d/dx$, and $(d/dx)^if$ by $f^{(i)}$. Similarly, $\partial_x$ refers to $\partial/\partial x$, $\delta_u$ is an abridged version of the variational derivative $\delta/\delta u$ and $\Lie_{X}$ stands for the Lie derivative (in the usual sense of the Cartan formula). The `$\propto$\hspace{0.1cm}' symbolizes `proportional to'. The letter $G$ represents the gravitational constant. For a given condition $\mathcal{P}$, we associate an indicator function $\mathds{1}_\mathcal{P}$ defined as follows: 
\begin{equation*}
    \mathds{1}_{\mathcal{P}}=\begin{cases}
        1\hspace{0.1cm} \mathrm{if}\hspace{0.1cm}\mathcal{P}\hspace{0.1cm}\mathrm{is \hspace{0.1cm}true,} \\
        0\hspace{0.1cm}\mathrm{if}\hspace{0.1cm}\mathcal{P}\hspace{0.1cm}\mathrm{is \hspace{0.1cm}false.}
    \end{cases}
\end{equation*}
 
Lastly, we believe that the following schematic figure may be a profitable tool for the reader to help visualize the structure of the paper. 

\begin{figure}[ht]
\begin{center}
\begin{tikzpicture}[
    scale=0.95,
    transform shape,
    >=Latex,
    every node/.style={font=\normalsize}
]

% Nodes

\node (A) at (-1.5,6)
{\shortstack{{\textbf{(classical) r-matrix}}\\ Virasoro-valued loop alg.}};

\node (A') at (-1.5,3)
{\shortstack{\textbf{Lax pair on}\\\textbf{Energy-dependent Schrödinger operators}       }};

\node (B) at (-6,-4.75)
{\shortstack{\textbf{Lie-Poisson on}\\
$\mathfrak{bms}_3$}};

\node (C) at (2.5,-5)
{\shortstack{\textbf{Variational Complex}\\(Bi-Hamiltonian pair\\ $\Leftrightarrow$ \\ Centrally ext. Lie. alg.\\ on $\widetilde{\Omega}^1$)}};

\node (D) at (4,-1.5)
{\shortstack{\textbf{Integrable $\mathfrak{ads}_3$}\\\textbf{Hierarchy}}};

\node (E) at (-6,-7)
{\shortstack{\textbf{Extra $\mathfrak{bms}_3-$like} \\ \textbf{bi-Hamiltonian hierarchies}}};

% Centre

\node (S) at (-1.5,0)
{\shortstack{\textbf{Integrable $\mathfrak{bms}_3$ Hierarchy}\\
"$(\mathcal{K}_n=\mathcal{E}\mathbf{d}\mathcal{H}_{n-1}=\mathcal{D}\mathbf{d}\mathcal{H}_n)$"}};

% Arrows

\draw[line width=1pt,dashed] (A)--(A') 
    node[midway, left]{\textit{\textcolor{nightblue}{Opening ?}}}
    node[midway, right]{$\mathcal{L}(\mathfrak{vir})_+/_r\lambda^2\mathcal{L}(\mathfrak{vir})_+\simeq\mathfrak{bms}_3$};

\draw[line width=1pt, ->] (S)--(A')
    node[midway, left]{\textit{\textcolor{nightblue}{Section.5.}}}
    node[midway, right]{\footnotesize\shortstack{Antonowicz-Fordy\\approach}};

\draw[,line width=1pt,->] (B)--(S)
        node[midway, sloped, above, yshift=2pt]{\textit{\textcolor{nightblue}{Section.4.}}}
        node[midway, sloped, below, yshift=-2pt]{\footnotesize\shortstack{Lie-Poisson bracket\\ + 'freezing point' method}};

\draw[line width=1pt,->] (C)--(S) 
    node[midway, sloped, above, yshift=2pt]{\textit{\textcolor{nightblue}{Section.2.}}}
    node[midway, sloped, below, yshift=-2pt] {\footnotesize $\mathfrak{bms}_3$ Lie alg. on $\widetilde{\Omega}^1$};

\draw[line width=1pt, ->] (C)--(D)
    node[midway, right]{\textit{\textcolor{nightblue}{Section.3.}}};

\draw[line width=1pt, ->] (D)--(S)
    node[midway, above]{\footnotesize'Flat}
    node[midway, below]{\footnotesize Limit'};

\draw[line width=1pt, ->] (B)--(E)
    node[midway, right]{\footnotesize Conjecture.4.3.};

\end{tikzpicture}
\end{center}
\caption{Overview of the structures involved in the integrable $\mathfrak{bms}_3$ hierarchy.}
\label{fig:bms3_hierarchy}
\end{figure}

\newpage

\section{BMS$_3$: From its physical intuition to its geometrical description}
After reviewing general results about exceptional semi-direct products, we comment on the particular case of the $\text{BMS}_3$ group. In particular, we recall the expression for the coadjoint action of its centrally extended algebra. The latter had been studied mainly in the context of $\text{BMS}_3$ particles \cite{Obl18, BarObl15}.

\subsection{Basics on exceptional semi-direct products}

Consider a semi-direct product structure $H:=G\ltimes_{\sigma}A$, for a Lie group $G$, a vector group $A$ and a group homomorphism $\sigma:G\to\text{GL}(A)$, $g\mapsto\sigma_g$. Denote the elements of $H$ by the pair $(g,\alpha)$. The group operation writes 
\begin{equation}
    (g_1,\alpha_1)\circ(g_2,\alpha_2):=(g_1.g_2, \alpha_1+\sigma_{g_1}\alpha_2)\hspace{0.1cm}.
    \label{Eq.1}
\end{equation}

For $\sigma$ at the origin $e\in G$, we have a corresponding Lie algebra homomorphism $\text{d}_e\sigma:\mathfrak{g}\to\text{End}(A)$, $X\mapsto\big(\text{d}_e\sigma\big)_X$ where for $\alpha\in A$, $\big(\text{d}_e\sigma\big)_X\alpha=\frac{d}{dt}\big(\sigma_{e^{tX}}\alpha\big)\big|_{t=0}$. Therefore, the Lie algebra $\mathfrak{h}$ has a semi-direct structure
\begin{equation}
    \mathfrak{h}=\mathfrak{g}\ltimes_{\text{d}_e\sigma}A, \hspace{0.2cm} (X,\alpha)\in\mathfrak{h}\simeq\mathfrak{g}\oplus A\hspace{0.1cm}.
    \label{Eq.2}
\end{equation}

By \textit{exceptional semi-direct product}\cite{BarObl14, Raw75, Bag98}, we mean the specific situation where $A\equiv\mathfrak{g}_{\text{ab}}$ the Lie algebra of $G$, in the sense of an abelian additive vector space. The group homomorphism reduces to $\sigma\equiv\text{Ad}$ with an attached Lie algebra homomorphism $\text{d}_e\sigma\equiv\text{ad}$, respectively the adjoint representation of $G$ acting on $\mathfrak{g}$ and its infinitesimal version. 

One shows that the adjoint representation of the Lie algebra $\mathfrak{h}$ is given by 
\begin{equation}
    \text{ad}_\mathfrak{h}\big(X_1,\alpha_1\big)\cdot\big(X_2,\alpha_2\big)
    =\big(\text{ad}_\mathfrak{g}X_1\cdot X_2,\hspace{0.1cm} \text{ad}_\mathfrak{g}X_1\cdot\alpha_2-\text{ad}_\mathfrak{g}X_2\cdot\alpha_1\big)\hspace{0.1cm},
    \label{Eq.3}
\end{equation}
and the coadjoint action of $\mathfrak{h}$ on its dual $\big(j,p\big)\in\mathfrak{h}^*\simeq\mathfrak{g}^*\oplus\mathfrak{g}^*_{\text{ab}}$ is
\begin{equation}
    \text{ad}_\mathfrak{h}^*\big(X,\alpha\big)\cdot\big(j,p\big)=\big(\text{ad}_\mathfrak{g}^*X\cdot j+\text{ad}_\mathfrak{g}^*\alpha\cdot p, \hspace{0.1cm}\text{ad}^*_\mathfrak{g}X\cdot p\big)\hspace{0.1cm}.
    \label{Eq.4}
\end{equation}

\subsection{BMS$_3$ stemming from Poincaré}
\textit{Gravitational asymptotic symmetries} are diffeomorphisms that leave invariant the asymptotic behavior of the gravitational field. 

On a $3\text{d}-$Minkowski spacetime, one commonly works with the so-called \textit{(retarded) Bondi coordinates}, namely a triplet $(u,r,x)$ that respectively represent a retarded time, a space-like radius and the angular coordinate of a corresponding sphere $S^1$.

The region reached when $r\to+\infty$ identifies with $S^1\times\mathbb{R}$, one (future) circle for each value of $u$. We name this cylinder the \textit{(future) null infinity} $\mathscr{I}^+$.

The \textit{BMS$_3$} group may be seen as a "double" generalization of the \textit{Poincaré} group, in a sense clarified below. The latter is the isometry group of the $\mathrm{d}-$dim Minkowski spacetime.  Its mathematical description lies on the semi-direct group structure \cite{Obl18-2}
\begin{equation*}
    \text{ISO}(\mathrm{d}-1,1)^\uparrow=\text{SO}(\mathrm{d}-1,1)^\uparrow\ltimes\mathbb{R}^\mathrm{d}\hspace{0.1cm},
\end{equation*}
whose \textit{Lorentz} transformations $f$ (technically speaking, the proper orthochronous ones) act on the space-time translations $\alpha$ in the standard way $(f_1,\alpha_1)\cdot(f_2,\alpha_2)=(f_1\cdot f_2,\alpha_1+f_1\cdot\alpha_2)$, the action of the right-hand side being understood as the usual matrix multiplication.

In the three-dimensional case, since
\begin{equation}
    \text{SO}(2,1)^\uparrow\simeq \text{SL}(2,\mathbb{R})/\{\pm 1\}\hspace{0.1cm},
    \label{Eq.5}
\end{equation}
it is common to confuse the $3\text{d}-$Poincaré group with its double cover. The action happens to be precisely the adjoint action on $\text{SL}(2,\mathbb{R})$, turning the group into an \textit{exceptional semi-direct product}:
\begin{equation}
    \text{ISO}(2,1)^\uparrow\equiv\text{SL}(2,\mathbb{R})\ltimes_\text{Ad}\mathfrak{sl}(2,\mathbb{R})_{(\text{ab})}\hspace{0.1cm}.
    \label{Eq.6}
\end{equation}

Bondi coordinates are suited in a way that makes Poincaré transformations well-defined at $\mathscr{I}^+$ \cite{Obl18-2}. There, the general action of the group is
\begin{equation}
    (u,x)\longmapsto\big(f^{(1)}(x)u+\alpha(f(x)),f(x)\big)\hspace{0.1cm},
    \label{Eq.7}
\end{equation}
where $f$ represents the projective transformation on the circle and $\alpha$ a space-time translation.

BMS$_3$ generates diffeomorphisms that preserve the asymptotic behavior of the Minkowski metric\footnote{Actually, the \textit{BMS}$_3$ group is the quotient of such diffeomorphisms by those whose associated surface charges are trivial.}. That is, we look for transformations (in Bondi coordinates) that conserve the \textit{fall-off} conditions of a $3\text{d}-$\textit{asymptotically flat spacetime} \cite{Obl18}:
\begin{equation}
    ds^2\stackrel{r\to+\infty}{\sim}\mathcal{O}(1)du^2-\big(2+\mathcal{O}(1/r)\big)dudr+r^2dx^2+\mathcal{O}(1)dudx\hspace{0.1cm}.
    \label{Eq.8}    
\end{equation} 
At the algebraic level, denoting by $\Xi_{(X,\alpha)}$ the \textit{asymptotic Killing vectors}, their Lie bracket satisfies \cite{BarCom07}: 
\begin{equation}
    \big[\Xi_{(X_1,\alpha_1)},\Xi_{(X_2,\alpha_2)} \big]=\Xi_{\big([X_1,X_2],\hspace{0.1cm} [X_1, \alpha_2]-[X_2,\alpha_1] \big)}+\mathcal{O}\hspace{0.1cm},
    \label{Eq.9}
\end{equation}
where $\mathcal{O}$ refers to neglected terms, in the sense that they have trivial surface charges. In a Fourier mode expansion $X_{m}\equiv\Xi_{(e^{imx},0)}$ and $\alpha_{m}\equiv\Xi_{(0,e^{imx})}$, one recovers the previous result \cite{BarCom07}: 
\begin{equation}
    i[X_{m},X_{n}]=(m-n)X_{m+n}, \hspace{0.2cm} i[X_{m},\alpha_{n}]=(m-n)\alpha_{m+n}, \hspace{0.2cm} i[\alpha_{m},\alpha_{n}]=0, \hspace{0.2cm} \forall m,n\in\mathbb{Z}\hspace{0.1cm}.
    \label{Eq.10}
\end{equation}

As for Poincaré, the BMS$_3$ group also carries a semi-direct structure that spans a Witt-like algebra, extending the Lorentzian part and acting on an infinite-dimensional abelian algebra. At the group level, BMS$_3$ transformations at $\mathscr{I}^+$ generalize Eq.\ref{Eq.7} where now $f\in\text{Diff}(S^1)$ and $\alpha\in C^\infty(S^1)$. We call them respectively \textit{superrotations} \cite{BarTro11-2} and \textit{supertranslations} \cite{Sac62}. In this sense, following \cite{PriSch22, Ruz20}, we are considering \textit{extended} BMS$_3$ actions.
\\
Similarly to the Poincaré situation with the Lorentz group, we slightly abuse notation. By $\text{Diff}(S^1)$ we actually understand $\widetilde{\text{Diff}}^+(S^1)$, the universal cover of the orientation-preserving diffeomorphisms group. 

A closer look at the group operation allows to deduce that BMS$_3$ possesses an exceptional semi-direct structure:
\begin{equation}
    (u,x)\xmapsto{(f_1,\alpha_1)\circ(f_2,\alpha_2)}\bigg((f_1\circ f_2)^{(1)}(x)u+\big[\alpha_1+\text{Ad}f_1\cdot\alpha_2\big](f_1\circ f_2)(x), \hspace{0.1cm} (f_1\circ f_2)(x)\bigg)\hspace{0.1cm}.
    \label{Eq.11}
\end{equation}

In the above expression, $\text{Ad}$ denotes the adjoint action of diffeomorphisms on vector fields. Rather than being seen as smooth functions on the circle, supertranslations are taken as $(-1)-$densities $\alpha=\alpha(x)dx^{-1}$ attached to an angle-dependent translation at given $u$.

Therefore, we view BMS$_3$ as the $\text{Diff}(S^1)-$exceptional semi-direct group:
\begin{equation}
    \text{BMS}_3:=\text{Diff}(S^1)\ltimes_{\text{Ad}}\text{Vect}(S^1)_{(\text{ab)}}\hspace{0.1cm}.
    \label{Eq.12}
\end{equation}

It is a well-defined infinite-dimensional Lie-Fréchet group \cite{KriMic98} (or \cite{KriMic97} for an exhaustive treatment) and admits a universal central extension \cite{Obl18}.

\subsection{Centrally extended $\mathfrak{bms}_3$ and coadjoint action}

Let $\text{Vect}(S^1)=\text{Lie}\big(\text{Diff}(S^1)\big)$ be the Lie algebra of vector fields on the circle. Its elements write $X=X(x)\partial_x$, $X$ being smooth, and the Lie bracket is $\big[X_1,X_2\big]_{\text{Vect}(S^1)}=\big(X_1X_2^{(1)}-X_2X_1^{(1)}\big)\partial_x$. 

$\mathcal{H}^2\big(\text{Vect}(S^1)\big)$ is one-dimensional and is generated by the \textit{Gel'fand-Fuks} cocycle \cite{GuiRog07}
\begin{equation}
    \omega(X_1,X_2)=\int_{S^1}dxX_1^{(3)}X_2\hspace{0.1cm}.
    \label{Eq.13}
\end{equation}

The \textit{Virasoro} algebra  corresponds to the universal central extension of $\text{Vect}(S^1)$ by means of this cocycle
\begin{equation}
\begin{aligned}
    \mathfrak{vir}:=\widehat{\text{Vect}}(S^1)&=\text{Vect}(S^1)\oplus\mathbb{R}\hspace{0.1cm},\\ \hspace{0.1cm} \big[(X_1,a),(X_2,b)\big]_{\mathfrak{vir}}&=\big(\big[X_1,X_2\big]_{\text{Vect}(S^1)},\omega(X_1,X_2)\big)\hspace{0.1cm}.
    \label{Eq.14}
\end{aligned}
\end{equation}

\begin{definition}
    The centrally extended Lie algebra $\widehat{\mathfrak{bms}_3}$ is the Virasoro exceptional semi-direct sum
    \begin{equation}
    \widehat{\mathfrak{bms}}_3:=\widehat{\text{Vect}}(S^1)\ltimes_{\widehat{\text{ad}}}\widehat{\text{Vect}}(S^1)_{\text{ab}}, \hspace{0.2cm} (X,a;\hspace{0.1cm} \alpha,b)\in\widehat{\mathfrak{bms}}_3\simeq\mathfrak{vir}\oplus\mathfrak{vir}_{\text{ab}}\hspace{0.1cm},
    \label{Eq.15}
\end{equation}
where $X$ represents an infinitesimal superrotation, $\alpha$ an infinitesimal supertranslations, $a$ and $b$ real numbers. Its Lie bracket stems from the one of the Virasoro algebra:
\begin{equation}
    \begin{split}
        &\big[(X_1,a_1;\hspace{0.1cm} \alpha_1,b_1),(X_2,a_2;\hspace{0.1cm} \alpha_2,b_2) \big]_{\widehat{\mathfrak{bms}_3}}\\
        &=\bigg(\big[X_1,X_2\big],\omega(X_1,X_2);\hspace{0.1cm}\big[X_1,\alpha_2 \big]-\big[X_2,\alpha_1 \big], \omega(X_1,\alpha_2)-\omega(X_2,\alpha_1) \bigg)\hspace{0.1cm},
    \end{split}
    \label{Eq.16}
\end{equation}
being understood that the brackets on the right-hand side correspond to those on $\text{Vect}(S^1)$.
\end{definition}

Except when there will be a risk of confusion, from now on we deliberately mix $\mathfrak{bms}_3$ with its extended version.

We call $\mathcal{F}_w(S^1)$ the infinite-dimensional vector space of densities of weight $w\in\mathbb{R}$. It is enough (\cite{Kir89}) to consider its \textit{regular dual} $\mathcal{F}^*_{w,\text{reg}}(S^1)$ for regular distributions. With the isomorphism (\cite{Kir89, GuiRog07}) $\mathcal{F}^*_{w,\text{reg}}(S^1)\simeq\mathcal{F}_{1-w}(S^1)$, elements dual to vector fields are quadratic densities. Thus, we identify
\begin{equation}
    \big(udx^2,c^*\big)\in\mathfrak{vir}^*_{\text{reg}}\simeq\mathcal{F}_2(S^1)\oplus\mathbb{R}\hspace{0.1cm}.
    \label{Eq.17}
\end{equation}

Since the space of coadjoint vectors of $\mathfrak{g}\oplus A$ is the dual of this semi-direct sum whose elements are pairs $(j,p)$ with the standard pairing $\big<(j,p),(X,\alpha)\big>=\big<j,X\big>_{\mathfrak{g}}+\big<p,\alpha\big>_A$, the regular dual of $\mathfrak{bms}_3$ again is deduced from the one of Virasoro: 
\begin{equation}
    (j\cdot dx^2,c_1;\hspace{0.1cm}p\cdot dx^2,c_2)\in\mathfrak{bms}^*_{3_{\text{reg}}}\simeq\mathfrak{vir}^*_\text{reg}\oplus\mathfrak{vir}^*_\text{reg}\hspace{0.1cm},
    \label{Eq.18}
\end{equation}
where both the \textit{angular supermomentum} $j$ and the \textit{supermomentum} $p$ are seen as quadratic differentials on the circle. 

We use the following natural pairing between the extended algebra and its dual: 
\begin{equation}
    \begin{split}
        \big<(j,c_1;\hspace{0.1cm}p,c_2),(X,a;\hspace{0.1cm}\alpha,b)\big>_{\mathfrak{bms}_3}&=\int_{\text{S}^1}dx\big(j(x)X(x)+p(x)\alpha(x)\big)+ac_1+bc_2\hspace{0.1cm}.
        \label{Eq.19}
    \end{split}
\end{equation}

\begin{proposition}[\cite{BarObl15}]%[Coadjoint action of $\mathfrak{bms}_3$ on its dual]
\begin{equation}
        \begin{split}
        &\mathrm{ad}^*_{\mathfrak{bms}_{3_\mathrm{reg}}}(X;\alpha)\cdot(j,c_1;\hspace{0.1cm}p,c_2)=\bigg(\delta_{(X,\alpha)}j\cdot dx^2,0;\hspace{0.1cm}\delta_{(X,\alpha)}p\cdot dx^2,0\bigg)\hspace{0.1cm},\\
            \mathrm{with}\hspace{0.1cm} &\delta_{(X,\alpha)}j=Xj^{(1)}+2jX^{(1)}-c_1X^{(3)}+ \alpha p^{(1)}+2p\alpha^{(1)}-c_2\alpha^{(3)}\hspace{0.1cm},\\
            \mathrm{and}\hspace{0.1cm} & \delta_{(X,\alpha)}p=Xp^{(1)}+2pX^{(1)}-c_2X^{(3)}\hspace{0.1cm}.
        \end{split}
        \label{Eq.20}
    \end{equation}
\end{proposition}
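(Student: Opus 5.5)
The plan is to compute the coadjoint action directly from its definition as minus the transpose of the adjoint action, using the pairing (Eq.\ref{Eq.19}) and the bracket (Eq.\ref{Eq.16}). Concretely, for $(X,a;\alpha,b)\in\widehat{\mathfrak{bms}}_3$, $(j,c_1;p,c_2)\in\mathfrak{bms}^*_{3_{\mathrm{reg}}}$ and a test element $(Y,a';\beta,b')$, I define
\[
\big<\mathrm{ad}^*(X;\alpha)\cdot(j,c_1;p,c_2),(Y,a';\beta,b')\big>=-\big<(j,c_1;p,c_2),\big[(X,a;\alpha,b),(Y,a';\beta,b')\big]\big>.
\]
The central parameters $a,b,a',b'$ drop out of the bracket. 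As a consequence, the coadjoint action only depends on $(X;\alpha)$, and the central components of the result pair with nothing. This explains the zeros in the first and third slots of Eq.\ref{Eq.20}. One could equally start from the general exceptional semi-direct formula (Eq.\ref{Eq.4}) applied to $\mathfrak{g}=\mathfrak{vir}$. That reduces the problem to knowing $\mathrm{ad}^*_{\mathfrak{vir}}$, but I prefer the direct computation so that the signs of the central terms are fixed by the conventions of Eq.\ref{Eq.13}.

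The right-hand side splits into two groups of terms. The first group is
\[
-\int_{S^1}dx\, j\,[X,Y]-c_1\,\omega(X,Y),
\]
coming from the $\mathfrak{vir}$ slot. The second group is
\[
-\int_{S^1}dx\, p\,\big([X,\beta]-[Y,\alpha]\big)-c_2\big(\omega(X,\beta)-\omega(Y,\alpha)\big),
\]
coming from the abelian slot. I then integrate by parts on $S^1$, with no boundary terms, to isolate $Y$ and $\beta$ undifferentiated. The coefficient of $Y$ gives $\delta_{(X,\alpha)}j$ and the coefficient of $\beta$ gives $\delta_{(X,\alpha)}p$.

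The standard Virasoro computation handles each piece:
\[
-\int j(XY'-YX')=\int (Xj'+2jX')\,Y,
\]
and, for the cocycle,
\[
\omega(X,Y)=\int X'''Y .
\]
The $c_1$ term therefore contributes $-c_1X'''$ against $Y$. The same manipulation with $(p,\beta)$ in place of $(j,Y)$ yields $Xp'+2pX'-c_2X'''$ as the coefficient of $\beta$.

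The cross terms $[Y,\alpha]$ and $\omega(Y,\alpha)$ pair with $p$ and $c_2$ but are linear in $Y$. After integration by parts they give $\alpha p'+2p\alpha'$ together with a term in $c_2\alpha'''$, all landing in the $j$-component. This reproduces the structure of Eq.\ref{Eq.4}: supertranslations act on $p$ and feed into $j$.

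The main obstacle is the bookkeeping of signs. Two places need care. First, for $\omega(Y,\alpha)=\int Y'''\alpha$ one must integrate by parts three times to get $-\int Y\alpha'''$. With the minus sign in front of $\omega(Y,\alpha)$ in the bracket and the overall minus of the transpose, this gives the stated $-c_2\alpha'''$. Second, the overall sign convention of $\mathrm{ad}^*$ must be chosen consistently with the one implicit in Eq.\ref{Eq.20} (infinitesimal coadjoint action as $-\mathrm{ad}^T$). I will check that convention on the centreless Virasoro part, where $Xj'+2jX'$ is the Lie derivative of the quadratic density $j\,dx^2$, and then carry it through unchanged.
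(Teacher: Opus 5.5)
Your computation is correct, and every sign checks out. With $\mathrm{ad}^*=-\mathrm{ad}^T$ and the conventions of Eqs.~\ref{Eq.13}, \ref{Eq.16}, \ref{Eq.19}, the six contributions are:
\begin{itemize}
\item $-\int j(XY'-YX')=\int(Xj'+2jX')Y$;
\item $-c_1\omega(X,Y)=-c_1\int X'''Y$;
\item $-\int p(X\beta'-\beta X')=\int(Xp'+2pX')\beta$;
\item $-c_2\omega(X,\beta)=-c_2\int X'''\beta$;
\item $\int p(Y\alpha'-\alpha Y')=\int(\alpha p'+2p\alpha')Y$;
\item $+c_2\omega(Y,\alpha)=-c_2\int Y\alpha'''$.
\end{itemize}
Together these give exactly Eq.~\ref{Eq.20}. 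The paper uses the same sign convention later: at a constant point with $c_1\neq0$, Eq.~\ref{Eq.72} reproduces $2j_0X'-c_1^0X'''+2p_0\alpha'-c_2^0\alpha'''$ and $2p_0X'-c_2^0X'''$.

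The paper itself gives no argument. It imports the formula from \cite{BarObl15}. Within the paper, the natural justification is the structural one: Eq.~\ref{Eq.4} of Section~1.1 with $\mathfrak{g}=\mathfrak{vir}$, fed by the Virasoro coadjoint action $\mathrm{ad}^*_{\mathfrak{vir}}X\cdot(j,c)=Xj'+2jX'-cX'''$.

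That route has two advantages. It explains at once the triangular shape and why $\alpha'''$ comes with $c_2$ rather than $c_1$: that term is $\mathrm{ad}^*_{\mathfrak{vir}}\alpha$ acting on $(p,c_2)$. It also works verbatim for any exceptional semi-direct product. However, it takes the Virasoro formula and its signs as given. Your direct transpose computation is self-contained, and it fixes the signs and normalisation of all central terms from Eqs.~\ref{Eq.13}, \ref{Eq.16}, \ref{Eq.19} themselves. The bare citation leaves exactly that unchecked.

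Two small corrections.
\begin{itemize}
\item The zeros in Eq.~\ref{Eq.20} sit in the second and fourth slots (the central components), not the first and third.
\item ``Pair with nothing'' should be sharpened. Pairing the result with a purely central test vector $(0,a';0,b')$ gives $-\langle m,[(X,a;\alpha,b),(0,a';0,b')]\rangle=0$, because central elements bracket trivially.
\end{itemize}
Reading off the coefficients of $Y$ and $\beta$ also implicitly uses nondegeneracy of the pairing on the regular dual, which is harmless.
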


It is well-known that the coadjoint action on $\mathfrak{vir}$ corresponds to the infinitesimal transformation of the stress-energy tensor in a $2\mathrm{d}-$CFT under a holomorphic change of coordinates (see for instance \cite{DiFMatSén97}). Equivalently (\cite{Bak88,Gie91}), we have that the Schrödinger operator $\partial_x^2+p(x)$ is \textit{conformally covariant} under $\mathrm{Diff}(S^1)$, mapping fields of weight $-1/2$ to fields of weight $3/2$.

In $(2+1)-$Einstein gravity with negative cosmological constant, \cite{BroHen86} proved that asymptotically AdS spacetimes admit a symmetry group larger than $\mathrm{SO}(2,2)$ : the $2\mathrm{d}-$\textit{conformal group}. On the other hand, the Einstein-Hilbert action is described by a conformally invariant Liouville theory \cite{CouHenDri95}. 

For a vanishing cosmological constant, based on the Liouville action at $\mathscr{S}^+$, two limits may be taken, both leading to $\mathrm{BMS}_3-$invariant field theories: \textit{i)} a first one which has no central extension; \textit{ii)} a second one based on a Hamiltonian framework where the gravitational constant remains finite. For the latter, we speak of \textit{flat Liouville theory} (\cite{BarGomGon13}). It admits a central extension. The conserved charges in this \textit{flat} case identify with the gravitational surface charges of an asymptotically flat spacetime. Moreover, the stress tensor transforms infinitesimally according to \textbf{Proposition.1.2.} for $c_1=0$ and $c_2=3/G$ \cite{BarGomGon13}. Hence, the supermomentum $p$ and the angular supermomentum $j$ relate to the components of this stress tensor. 

\section{Integrability in the bms$_3$ scenario}

We start by recalling a classical characterization of an infinite-dimensional integrable system. For such a system to be well-defined, we first need a \textit{bi-Hamiltonian structure} $(\mathcal{E},\mathcal{D})$: that is, two \textit{Hamiltonian operators} that are \textit{compatible}. By \textit{Hamiltonian}, we roughly mean an operator whose bracket $\{f,g\}_\mathcal{E}=\int \delta f\cdot\mathcal{E}\delta g$ is \textit{Poisson} (skew-symmetric and satisfies Jacobi identity). The \textit{compatibility} of such a pair is ensured whenever any arbitrary linear combinations $\mu\mathcal{E}+\nu\mathcal{D}$, $\mu,\nu\in\mathbb{R}$, is still Hamiltonian. We attach to this structure a \textit{Nijenhuis} operator $\mathcal{R}:=\mathcal{E}\mathcal{D}^{-1}$. From a Casimir functional $\mathcal{H}_{-1}$ of $\mathcal{D}$, a first flow $\mathcal{K}_0=\mathcal{E}\mathbf{d}\mathcal{H}_{-1}$ is derived, that we may also write in terms of $\mathcal{D}$ thanks to the differential of some $\mathcal{H}_0$. 

If we can show that there are infinitely many flows $\mathcal{K}_n$ together with infinitely many $1-$forms $\mathcal{P}_n$ such that: 

\begin{equation}
    \frac{\partial}{\partial t_n}\equiv \mathcal{K}_n=\mathcal{E}\mathcal{P}_{n-1}=\mathcal{D}\mathcal{P}_n\hspace{0.5cm} n\geq0\hspace{0.1cm},
    \label{Eq.21}
\end{equation}

\vspace{1em}
\begin{center}
\begin{tikzpicture}[>=Stealth]

\node (dH_-1) at (0,0) {$\mathbf{d}\mathcal{H}_{-1}$};
\node (K_0)  at (2,0.8) {$\mathcal{K}_0$};
\node (dH_0) at (4,0) {$\mathcal{P}_0\equiv\mathbf{d}\mathcal{H}_0$};
\node (K_1)  at (6,0.8) {$\mathcal{K}_1$};
\node (dH_1) at (8,0) {$\mathcal{P}_1$};
\node (K_2)  at (10,0.8) {$\mathcal{K}_2$};
\node (dH_2) at (12,0) {$\mathcal{P}_2$};

\draw[->] (dH_-1) -- node[midway, sloped, above] {$\mathcal{E}$} (K_0);

\draw[<-] (K_0) -- node[midway, sloped, below] {$\mathcal{D}$} (dH_0);

\draw[->] (dH_0) -- node[midway, sloped, above] {$\mathcal{E}$} (K_1);
\draw[<-] (K_1) -- node[midway, sloped, below] {$\mathcal{D}$} (dH_1);

\draw[->] (dH_1) -- node[midway, sloped, above] {$\mathcal{E}$} (K_2);
\draw[<-] (K_2) -- node[midway, sloped, below] {$\mathcal{D}$} (dH_2);

\draw[->] (dH_2) -- ++(0.75,0.4);

\node[rotate=27] at (13.3,0.7) {$\cdots$};

\end{tikzpicture}
\end{center}
\vspace{1em}

then all $1-$forms are closed $\mathcal{P}_i=\mathbf{d}\mathcal{H}_i$ \cite{Dor93}. Moreover, it follows that the $\mathcal{H}_i$ are in involution with respect to both the $\mathcal{E}-$bracket and the $\mathcal{D}-$bracket. If also all flows $\mathcal{K}_n$ commute, the hierarchy is \textit{integrable}.

After a short overview on the variational complex, we use a characterization due to \cite{GelDor81} to construct a Hamiltonian pair on the $\mathfrak{bms}_3$ algebra. We prove that the resulting hierarchy is indeed integrable.

\subsection{A few words on the variational complex}

Let $\mathfrak{a}$ be a Lie algebra and $\big(\Omega,\mathbf{d}\big)$ a complex of linear spaces $\Omega=\bigoplus\Omega^q$ whose universal differential mapping $\mathbf{d}$ satisfies $\mathbf{d}\Omega^q\subset \Omega^{q+1}$ and $\mathbf{d}^2\equiv 0$. We speak of an $\mathfrak{a}-$\textit{complex} $\big(\Omega,\mathbf{d},\mathfrak{a}\big)$ whenever there exists an interior mapping for each $a\in\mathfrak{a}$. For any linear subspace $\mathrm{Z}\subset\mathfrak{a}$, we denote by $\mathfrak{a}_\mathrm{Z}$ the corresponding centralizer. 

A $q-$ form $\omega$ is said to be \textit{equivalent to} $0$ if there exists $\omega_i\in\Omega^q$ and $z_k\in\mathrm{Z}$ such that $\omega=\sum_k\Lie_{z_k}\omega_k$. We denote by $\Omega^q_\mathrm{Z}=\big\{\omega\in\Omega | \omega \sim0\big\}$ and we set $\widetilde{\Omega}^q:=\Omega^q/\Omega^q_\mathrm{Z}$. Since the differential and the interior mappings remain naturally defined in $\widetilde{\Omega}$, we call $\big(\widetilde{\Omega},\mathbf{d},\mathfrak{a}_\mathrm{Z}\big)$ the \textit{factor-complex} of $\big(\Omega,\mathbf{d},\mathfrak{a}\big)$.

Let $\mathcal{A}$ be the ring of polynomials in symbols $u_{\alpha}^{(i)}$, $i\in\mathbb{N}$ and $\alpha$ being an index running over a countable set $\mathcal{I}$. We consider the \textit{de Rham complex} $\Omega(\mathcal{A})$ of $\mathcal{A}$. The set of derivations 
\begin{equation}
    \partial_{\bar{f}}=\sum_{\alpha, i}f_{\alpha,i}\frac{\partial}{\partial u_\alpha^{(i)}}, \hspace{0.2cm} f_{\alpha, i}\in\mathcal{A}\hspace{0.1cm},
    \label{Eq.22}
\end{equation}
is a natural Lie algebra $\mathrm{Der}(\mathcal{A})$ over this ring.

Let $\mathrm{Z}$ be the $1-$dim subspace of $\mathrm{Der}(\mathcal{A})$ whose derivations are proportional to the total derivative $\partial$:
\begin{equation}
    \partial=\sum_{\alpha, i}u_\alpha^{(i+1)}\frac{\partial}{\partial u_\alpha^{(i)}}\hspace{0.1cm}. 
    \label{Eq.23}
\end{equation}

\begin{definition}
    The \textit{variational complex} is the factor-complex $\big(\widetilde{\Omega}(\mathcal{A}),\mathbf{d}, \mathrm{Der}(\mathcal{A})_\mathrm{Z}\big)$ connected with $\mathrm{Z}$.
\end{definition}

The formal integral notation $\int$ associates with $\omega\in\Omega$ its equivalence class.

$0-$forms on the variational complex consist of functionals $\widetilde{\mathcal{F}}\in\widetilde{\mathcal{A}}\equiv\widetilde{\Omega}^0=\mathcal{A}/\partial\mathcal{A}$.

Moreover, one checks that 
\begin{equation}
    \mathrm{Der}(\mathcal{A})_\mathrm{Z}=\bigg\{\partial_{\bar{f}}=\sum_{\alpha,i}f_\alpha^{(i)}\frac{\partial}{\partial u_\alpha^{(i)}}, \hspace{0.2cm} f_\alpha^{(i)}=\bigg(\frac{d}{dx}\bigg)^if_\alpha\bigg\}\hspace{0.1cm}.
    \label{Eq.24}
\end{equation}

\begin{proposition}[\cite{GelDor79}]
     The centralizer $\mathrm{Der}(\mathcal{A})_\mathrm{Z}$ is in one-to-one correspondence with the set of sequences $\bar{\mathcal{A}}:=\big\{\bar{f}=\{f_\alpha\}, \alpha\in\mathcal{I}\big\}$. Its elements are called (evolutionary) vector fields.   
\end{proposition}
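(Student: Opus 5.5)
To prove the proposition, I will construct two mutually inverse maps between $\bar{\mathcal{A}}$ and $\mathrm{Der}(\mathcal{A})_\mathrm{Z}$, and show that an element of the centralizer is determined by its action on the generators $u_\alpha=u_\alpha^{(0)}$.

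\textbf{From sequences to the centralizer.} Given $\bar{f}=\{f_\alpha\}$, I define
\[
\partial_{\bar f}=\sum_{\alpha,i}\partial^i(f_\alpha)\,\frac{\partial}{\partial u_\alpha^{(i)}} .
\]
This is a well-defined derivation of $\mathcal{A}$: on any polynomial only finitely many $u_\alpha^{(i)}$ occur, so only finitely many terms act nontrivially. To show it commutes with $\partial$, it suffices to check $[\partial_{\bar f},\partial]=0$ on the generators $u_\alpha^{(i)}$, since both sides are derivations. We have
\[
\partial_{\bar f}\,\partial\, u_\alpha^{(i)}=\partial_{\bar f}\,u_\alpha^{(i+1)}=\partial^{i+1}f_\alpha,
\qquad
\partial\,\partial_{\bar f}\,u_\alpha^{(i)}=\partial(\partial^i f_\alpha)=\partial^{i+1}f_\alpha .
\]
Hence $\partial_{\bar f}\in\mathrm{Der}(\mathcal{A})_\mathrm{Z}$.

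\textbf{From the centralizer to sequences.} Take a derivation $D=\sum f_{\alpha,i}\,\partial/\partial u_\alpha^{(i)}$ with $[D,\partial]=0$, and set $f_\alpha:=f_{\alpha,0}=D(u_\alpha)$. Then
\[
f_{\alpha,i+1}=D(u_\alpha^{(i+1)})=D\partial u_\alpha^{(i)}=\partial D u_\alpha^{(i)}=\partial f_{\alpha,i}.
\]
By induction on $i$, $f_{\alpha,i}=\partial^i f_\alpha$, so $D=\partial_{\bar f}$.

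\textbf{Bijectivity.} A derivation of the polynomial ring $\mathcal{A}$ is uniquely determined by its values on the generators $u_\alpha^{(i)}$. Together with the recursion above, this means $D$ is determined by the sequence $\{D(u_\alpha)\}$. Conversely, $\partial_{\bar f}(u_\alpha)=f_\alpha$, so the two constructions are mutually inverse.

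The only delicate point is that each $D(u^{(i)}_\alpha)$ must lie in $\mathcal{A}$, and that the infinite sums make sense. Both hold because each element of $\mathcal{A}$ involves only finitely many symbols. Once this is settled, the identification with evolutionary vector fields is the definition.
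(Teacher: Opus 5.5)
Your proof is correct and takes essentially the paper's approach. The paper gives no separate proof: it records in Eq.~\ref{Eq.24} that centralizing derivations are exactly those with coefficients $f_{\alpha,i}=\partial^i f_\alpha$ and defers to Gel'fand--Dorfman. Your two-directional check is that verification, with the bijection $\bar f\mapsto\partial_{\bar f}$ following because a derivation of $\mathcal{A}$ is determined by its values on the generators $u_\alpha^{(i)}$.
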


In the following, we will use the column-vector index notation $\bar{f}\equiv(f_\alpha)^\top$ for vector fields, equally with the abused notation $f\equiv\sum_{\alpha\in\mathcal{I}}f_\alpha\partial/\partial u_\alpha$, being understood that the latter is identified with $\sum_{\alpha,i}f_\alpha^{(i)}\partial/\partial u_\alpha^{(i)}$.

For each $\mathcal{F}\in\mathcal{A}$, we introduce the \textit{variational derivative} $\delta_{u_\alpha}\equiv\frac{\delta}{\delta u_\alpha}$ using successive integration by parts:
\begin{equation}
    \frac{\partial\mathcal{F}}{\partial u_\alpha^{(i)}}\mathbf{d}u_\alpha^{(i)}\equiv\frac{\delta\mathcal{F}}{\delta u_\alpha}\mathbf{d}u_\alpha\hspace{0.1cm}\mathrm{mod}\partial\mathcal{A}\hspace{0.1cm}\hspace{0.1cm}. 
    \label{Eq.25}
\end{equation}

Because it commutes with $\partial$, $\delta_{u_\alpha}$ is well-defined on the reduced space of $1-$forms: 
\begin{equation}
    \mathbf{d}\widetilde{\mathcal{F}}=\sum
    _{\alpha}\frac{\delta\widetilde{\mathcal{F}}}{\delta u_\alpha}\mathbf{d}u_\alpha\hspace{0.1cm} \in \widetilde{\Omega}^1\hspace{0.1cm}.
    \label{Eq.26}
\end{equation}

Finally, we mention that $\widetilde{\mathcal{A}}$ is a left $\bar{\mathcal{A}}-$module with respect to the action: 
\begin{equation}
    \bar{f}\cdot\widetilde{\mathcal{F}}=\int\sum_\alpha f_\alpha\frac{\delta\widetilde{\mathcal{F}}}{\delta u_\alpha} \hspace{0.3cm} \big(=\mathbf{i}_{\bar{f}}\hspace{0.1cm}\mathbf{d}\widetilde{\mathcal{F}}\big)\hspace{0.1cm}.
    \label{Eq.27}
\end{equation}

\subsection{A bms$_3$ bi-Hamiltonian structure}

Consider a linear operator $\mathcal{E}:\widetilde{\Omega}^1\longmapsto\bar{\mathcal{A}}$ as the differential matrix operator whose entries are of the form
\begin{equation}
    \mathcal{E}_{ij}=\sum_{\alpha,k,l}a^\alpha_{ijkl}u_\alpha^{(k)}\partial^{l}, \hspace{0.5cm} a^\alpha_{ijkl}\in\mathbb{R}\hspace{0.1cm}.
    \label{Eq.28}
\end{equation}

It is said to be \textit{Hamiltonian} whenever there exists on $\widetilde{\mathcal{A}}$ a well-defined \textit{Poisson bracket} attached with it : 
\begin{equation}
    \big\{\widetilde{\mathcal{F}},\widetilde{\mathcal{G}}\big\}_\mathcal{E}=\int \sum_{\alpha,\beta}\bigg(\mathcal{E}_{\alpha\beta}\frac{\delta\widetilde{\mathcal{F}}}{\delta u_\alpha}\bigg)\cdot\frac{\delta\widetilde{\mathcal{G}}}{\delta u_\beta}\hspace{0.1cm}.
    \label{Eq.29}
\end{equation}

The Hamiltonian character of $\mathcal{E}$ strongly relies on the existence of a Lie algebra structure on $\widetilde{\Omega}^1$. It is convenient to introduce a relabeling of the previous constants 
\begin{equation}
    c^\alpha_{ijkl}=\sum_qa^\alpha_{ji,l+q,k-q}\big(-1\big)^{l+q}\binom{l+q}{q}\hspace{0.1cm}.
    \label{Eq.30}
\end{equation}

In the sequel, we will base our construction of a $\mathfrak{bms}_3$ bi-Hamiltonian hierarchy on the following result, originally stated in \cite{GelDor81}:

\begin{theorem}
    The operator $\mathcal{E}$ is Hamiltonian if and only if $\widetilde{\Omega}^1$ is  a Lie algebra with respect to the bilinear operation
    \begin{equation}
        \big[\xi,\eta\big]_{\widetilde{\Omega}^1}=\zeta \hspace{0.3cm} \mathrm{where} \hspace{0.3cm} \zeta_\alpha=\sum_{i,j,k,l} c^\alpha_{ijkl}\xi_i^{(k)}\eta_j^{(l)}\hspace{0.1cm}.
        \label{Eq.31}
    \end{equation}
\end{theorem}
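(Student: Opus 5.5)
The strategy is to read the bracket \eqref{Eq.31} as the bracket induced on $\widetilde{\Omega}^1$ by $\mathcal{E}$,
\begin{equation*}
[\xi,\eta]_{\widetilde{\Omega}^1}=\Lie_{\mathcal{E}\xi}\eta-\Lie_{\mathcal{E}\eta}\xi+\mathbf{d}\big(\mathbf{i}_{\mathcal{E}\eta}\xi\big)\quad(\text{mod }\partial),
\end{equation*}
that is, the cotangent (Koszul-type) bracket of the bivector attached to $\mathcal{E}$. Both implications then reduce to statements about this bracket.

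\textbf{Step 1: the two brackets coincide.} Write $\xi=\sum\xi_i\mathbf{d}u_i$ and $\eta=\sum\eta_j\mathbf{d}u_j$, with $\mathcal{E}\xi$ the evolutionary vector field of Proposition 2.2. The coefficient of $\mathbf{d}u_\alpha$ in the expression above is a sum of terms $\partial(\mathcal{E}_{ji})/\partial u_\alpha^{(k)}$ acting on $\xi,\eta$. Since $\mathcal{E}$ is linear in the $u_\alpha^{(k)}$, these coefficients are constants $a^\alpha_{ijkl}$.

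To read off the coefficient, integrate by parts: move the derivatives off $\mathbf{d}u^{(k)}_\alpha$ and apply Leibniz to $\partial^{k}(\xi^{(l)}\eta)$ modulo $\partial\mathcal{A}$. This yields exactly the binomial sum \eqref{Eq.30}, so the bracket has components $\zeta_\alpha=\sum c^\alpha_{ijkl}\xi_i^{(k)}\eta_j^{(l)}$. Equivalently, $[\mathbf{d}F,\mathbf{d}G]=\mathbf{d}\{F,G\}_\mathcal{E}$. This is routine bookkeeping.

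\textbf{Step 2: skew-symmetry.} Skew-adjointness of $\mathcal{E}$ amounts to the identity $a^\alpha_{ijkl}$-relations that make $c^\alpha_{ijkl}\xi_i^{(k)}\eta_j^{(l)}$ antisymmetric under $\xi\leftrightarrow\eta$. I would check both directions by comparing the bilinear forms $\int(\mathcal{E}\xi)\cdot\eta$ and $-\int(\mathcal{E}\eta)\cdot\xi$. The point is that, for linear $\mathcal{E}$, $\mathbf{i}_{\mathcal{E}\eta}\xi$ has $\mathbf{d}$ determined by the coefficients, so skew-symmetry of the bracket on all $1$-forms forces skewness of the operator. To see this, pair with a $1$-form whose components are constants.

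\textbf{Step 3: Jacobi.} By the general theory of the variational complex (Gel'fand–Dorfman), for a skew $\mathcal{E}$ one has
\begin{equation*}
\mathrm{Jac}_{\widetilde{\Omega}^1}(\xi,\eta,\zeta)=\text{contraction of the Schouten trivector }[\mathcal{E},\mathcal{E}],
\end{equation*}
and $[\mathcal{E},\mathcal{E}]=0$ is equivalent to the Jacobi identity for $\{\,,\}_\mathcal{E}$.

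For ($\Rightarrow$): Jacobi for exact forms follows from Step 1. Because the bracket has constant coefficients and is bilinear differential, it suffices to test on forms whose components are arbitrary elements of $\mathcal{A}$. The trivector identity, being tensorial, then extends the vanishing from exact to all forms.

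For ($\Leftarrow$): take exact forms and use $\mathbf{d}\{F,G\}=[\mathbf{d}F,\mathbf{d}G]$. This gives $\mathbf{d}$ of the Jacobiator of functionals equal to zero. Since $\ker\mathbf{d}$ on $\widetilde{\Omega}^0$ consists of constants and the Jacobiator is homogeneous of positive degree, it vanishes.

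The main obstacle is Step 3 ($\Rightarrow$). The difficulty is justifying that vanishing of the Jacobiator on exact forms implies it on all of $\widetilde{\Omega}^1$, that is, that the Schouten trivector is $C^\infty$-trilinear modulo $\partial$. I would handle this by writing the Jacobiator explicitly in terms of $c^\alpha_{ijkl}$ as a quadratic polynomial identity in the constants. I would then note that exact forms $\mathbf{d}\int u_\beta^{(m)}u_\gamma\cdots$ generate enough independent monomials to detect each coefficient.
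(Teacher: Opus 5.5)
\textbf{There is a genuine gap, located in Step~1, and the rest of your argument depends on it.}

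The bracket of Eq.~\ref{Eq.31} has constant coefficients. It never differentiates the components $\xi_i,\eta_j\in\mathcal{A}$ with respect to the symbols $u_\alpha^{(k)}$. The Koszul bracket $\Lie_{\mathcal{E}\xi}\eta-\Lie_{\mathcal{E}\eta}\xi+\mathbf{d}(\mathbf{i}_{\mathcal{E}\eta}\xi)$ does. Using $\Lie_X\eta=\eta'[X]+X'^{*}\eta$ one finds, for skew $\mathcal{E}$, that it equals $\varepsilon[\xi,\eta]_{\widetilde{\Omega}^1}+\eta'[\mathcal{E}\xi]-\xi'[\mathcal{E}\eta]$, with $\varepsilon=\pm1$ the sign of Eq.~\ref{Eq.32}. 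Your bookkeeping keeps only the term coming from $\partial\mathcal{E}_{ij}/\partial u_\alpha^{(k)}$ and drops the Fr\'echet derivatives $\xi',\eta'$.

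The two brackets therefore agree only on forms with $u$-independent components. In the ring of polynomial symbols these are just the constant forms, which only see $c^\alpha_{ij00}$. In particular, $[\mathbf{d}F,\mathbf{d}G]_{\widetilde{\Omega}^1}=\mathbf{d}\{F,G\}_{\mathcal{E}}$ is false. Take $\mathcal{E}=2u\partial+u^{(1)}$; Eqs.~\ref{Eq.30}--\ref{Eq.31} give $[\xi,\eta]=\xi^{(1)}\eta-\xi\eta^{(1)}$. With $F=\int u^2/2$ and $G=\int u$, one gets $\{F,G\}_{\mathcal{E}}=\int 3uu^{(1)}=0$, while $[\mathbf{d}F,\mathbf{d}G]=u^{(1)}\mathbf{d}u\neq0$ in $\widetilde{\Omega}^1$.

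Both directions of your Step~3 use this identity exactly on exact forms, whose components depend on $u$, so neither direction goes through. Your outline is really about the Koszul bracket. The theorem concerns the Lie algebra of Eq.~\ref{Eq.31}, whose dual carries $\mathcal{E}$ as its Lie--Poisson structure. The tensoriality issue you flag is secondary.

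\textbf{What the argument needs instead.} The paper does not prove the statement itself; it quotes it from Gel'fand--Dorfman. The mechanism it relies on is Eq.~\ref{Eq.32}: integration by parts gives $\{F,G\}_{\mathcal{E}}=\varepsilon\int\sum_\alpha u_\alpha[\mathbf{d}F,\mathbf{d}G]_\alpha$. Differentiating, for skew $\mathcal{E}$,
$\mathbf{d}\{F,G\}_{\mathcal{E}}=\varepsilon[\mathbf{d}F,\mathbf{d}G]+(\mathbf{d}G)'\,\mathcal{E}\,\mathbf{d}F-(\mathbf{d}F)'\,\mathcal{E}\,\mathbf{d}G$.

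The step your proposal is missing is that these Hessian terms cancel in the cyclic sum. This holds because $(\mathbf{d}F)'$ is self-adjoint (closedness of $\mathbf{d}F$) and $\mathcal{E}$ is skew. The result is
$\{\{F,G\}_{\mathcal{E}},H\}_{\mathcal{E}}+\text{cyclic}=\int\sum_\alpha u_\alpha\,\mathrm{Jac}(\mathbf{d}F,\mathbf{d}G,\mathbf{d}H)_\alpha$,
where $\mathrm{Jac}$ is the Jacobiator of Eq.~\ref{Eq.31}. From this, ($\Leftarrow$) is immediate, since skewness of the bracket gives skewness of $\mathcal{E}$ directly through Eq.~\ref{Eq.32}.

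For ($\Rightarrow$), and for the implication from skewness of $\mathcal{E}$ to skewness of the bracket in your Step~2, you still need a nondegeneracy lemma. It should say: a constant-coefficient multidifferential expression $T$ with $\int\sum_\alpha u_\alpha T_\alpha=0$ for all admissible arguments vanishes identically. One route is through the structure functions of Eq.~\ref{Eq.35}. Another is to adjoin auxiliary symbols $v_1,v_2,\dots$, with $\mathcal{E}$ extended by zero, so that gradients like $\mathbf{d}\int v_1u_i$ have a free symbol as $u_i$-component. Pairing with constant forms, as you propose, cannot do this, since $\partial$ kills constants and only $c^\alpha_{ij00}$ is detected.
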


A Hamiltonian operator is essentially analogous to an infinite-dimensional Kirillov-Konstant structure. This idea is captured in the formal notation\footnote{We chose the Lie bracket in $\mathrm{Vect}(S^1)$ to be the opposite of the usual Lie bracket of vector fields; this sign subtlety is taken into account in the bracket above. Moreover, the functional-valued pairing $(.,.)$ between $1-$forms and vector fields is well-defined and non-degenerate.}
\begin{equation}
    \big(\mathcal{E}\cdot\xi,\eta\big)=-\big(u,\big[\xi,\eta\big]_{\widetilde{\Omega}^1}\big)\hspace{0.1cm}.
    \label{Eq.32}
\end{equation}

Therefore, the non-centrally extended $\mathfrak{bms}_3$ algebra is attached to a Hamiltonian structure. We introduce on $\widetilde{\Omega}^1$ the bilinear operation: 
\begin{equation}
    \big[(\xi_1,\xi_2),(\eta_1,\eta_2) \big]_{\widetilde{\Omega}^1}:=\big([\xi_1,\eta_1]_{\mathrm{Vect}(S^1)}, [\xi_1,\eta_2]_{\mathrm{Vect}(S^1)}-[\eta_1,\xi_2]_{\mathrm{Vect}(S^1)}\big)\hspace{0.1cm}.
    \label{Eq.33}
\end{equation}
It is then straightforward to deduce that a first Hamiltonian is given by
\begin{equation}
    \begin{pmatrix}
        2u_1\partial+u_1^{(1)} & 2u_2\partial+u_2^{(1)} \\ 2u_2\partial+u_2^{(1)} & 0
    \end{pmatrix}\hspace{0.1cm}.
    \label{Eq.34}
\end{equation}

Alternatively, the previous statement and result can be found at the level of the coefficients $c^\alpha_{ijkl}$. Consider a set of elements $x_{il}$, $i\in\mathcal{J}$ a countable set and $l\in\mathbb{Z}$, together with $U$ the set of finite linear combinations of such $x_{il}$ with real coefficients. If we denote by $\varphi^k_{ij}(l,m)$ a polynomial in both $l$ and $m$ that depends on the three indices $i,j,k\in\mathcal{J}$, it is common to view them as \textit{structure functions} on $U$ with respect to the bilinear operation
\begin{equation}
    \big[x_{il},x_{jm}]=\sum_k\varphi^k_{ij}(l,m)x_{k,l+m}\hspace{0.1cm}.
    \label{Eq.35}
\end{equation}

For $\varphi_{ij}^k(l,m)=\sum_{q,r}c^k_{ijqr}l^qm^r$, the equivalence of \textbf{Theorem 2.3.} is recovered at the level of structure functions whenever they define a Lie algebra on $U$ \cite{GelDor81}. 

Regarding the commutation relations Eqs.\ref{Eq.10}, valid for an asymptotically flat spacetime, we get back to the previous Hamiltonian operator from 
\begin{equation}
    \begin{split}
    &\varphi^1_{11}(l,m)=\varphi_{12}^2(l,m)=l-m\hspace{0.1cm},\\
    &\varphi_{11}^2(l,m)=\varphi_{12}^1(l,m)=\varphi_{22}^1(l,m)=\varphi_{22}^2(l,m)=0\hspace{0.1cm}.
    \end{split}
    \label{Eq.36}
\end{equation}

On top of that, the possibility of extending centrally a Lie algebra is deeply connected to the existence of an extra compatible Hamiltonian structure. Consider another differential matrix operator $\mathcal{D}$ whose entries are constants:
\begin{equation}
    \mathcal{D}_{ij}=\sum_k d_{ijk}\bigg(\frac{d}{dx}\bigg)^k, \hspace{0.5cm} d_{ijk}\in\mathbb{R}\hspace{0.1cm}.
    \label{Eq.37}
\end{equation}

It follows immediately that such an operator is Hamiltonian; but more importantly: it is compatible with $\mathcal{E}$ if and only if $\mathcal{E}+\mathcal{D}$ is again Hamiltonian. \cite{Dor93} proved that the latter is in one-to-one correspondence with the existence of a skew-symmetric bilinear form $\big<\xi,\eta\big>:=-\big(\mathcal{D}\cdot\xi,\eta\big)$ that is a $2-$cocycle.

Such a two-parametric cocycle naturally stems from the centrally extended $\mathfrak{bms}_3$ algebra: 
\begin{equation}
    \begin{split}
    \big<(\xi_1,\xi_2),(\eta_1,\eta_2)\big>_{\beta_1,\beta_2}:=&\int\bigg(c_1\xi_1^{(3)}-\beta_1\xi_1^{(1)}\bigg)\eta_1 +\int\bigg(c_2\xi_1^{(3)}-\beta_2\xi_1^{(1)}\bigg)\eta_2\\
    &-\int\bigg(c_2\eta_1^{(3)}-\beta_2\eta_1^{(1)}\bigg)\xi_2\hspace{0.1cm},\hspace{0.5cm} \beta_1,\beta_2\in\mathbb{R}\hspace{0.1cm}.
    \end{split}
    \label{Eq.38}
\end{equation}

As a consequence, the following operator is Hamiltonian 
\begin{equation}
    \begin{pmatrix}
        -c_1\partial^3+\beta_1\partial+2u_1\partial+u_1^{(1)} & -c_2\partial^3+\beta_2\partial+2u_2\partial+u_2^{(1)} \\
        -c_2\partial^3+\beta_2\partial+2u_2\partial+u_2^{(1)} & 0
    \end{pmatrix}\hspace{0.1cm}.
    \label{Eq.39}
\end{equation}

At the level of the structure functions, if we define the polynomials $d_{ij}(l):=\sum_{k}d_{ijk}l^k$, the compatibility condition is equivalent to the existence of a $2-$cocycle $<x_{il},x_{jm}>:=d_{ij}(l)\delta_{l+m,0}$ on $U$ \cite{Dor93}.

Since for the extended $\mathfrak{bms}_3$ we have that (recall that $x_1\equiv X$ and $x_2\equiv\alpha$)
\begin{equation}
    <x_{1l},x_{1m}>=\big(c_1l^3-\beta_1l\big)\delta_{l+m,0}\hspace{0.1cm}, \hspace{0.2cm} <x_{1l},x_{2l}>=\big(c_2l^3-\beta_2l\big)\delta_{l+m,0}\hspace{0.1cm},\hspace{0.2cm} <x_{2l},x_{2m}>=0\hspace{0.1cm},
    \label{Eq.40}
\end{equation}
with $\beta_1,\beta_2$ constants, we recover a second constant compatible operator so that the above Eq.\ref{Eq.39} is indeed Hamiltonian. 

We should remember that the central charges $c_{1,2}$ are \textit{a priori} distinct non-zero numbers. From now on, we take into account the vanishing of one of these in terms of the indicator function $\mathds{1}_{c_i\not=0}$ (in particular for the asymptotic flat situation $c_1=0$).

The next statement summarizes the results: 

\begin{proposition}
    A $\widehat{\mathfrak{bms}}_3$ bi-Hamiltonian structure is given by the pair
    \begin{equation}
        \mathcal{E}=\begin{pmatrix}
            -c_1\partial^3+2u_1\partial+u_1^{(1)} & -c_2\partial^3+2u_2\partial+u_2^{(1)} \\ -c_2\partial^3+2u_2\partial+u_2^{(1)} & 0
        \end{pmatrix} \hspace{0.1cm} \mathrm{and} \hspace{0.1cm} \mathcal{D}=\begin{pmatrix}
            \mathds{1}_{c_1\not=0}\partial & \mathds{1}_{c_2\not=0}\partial 
            \\ \mathds{1}_{c_2\not=0}\partial & 0
        \end{pmatrix}\hspace{0.1cm}.
    \label{Eq.41}
    \end{equation}
\end{proposition}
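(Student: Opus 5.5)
We have to show three things: $\mathcal{E}$ is Hamiltonian, $\mathcal{D}$ is Hamiltonian, and the two are compatible. We treat all three through the Gel'fand--Dorfman criterion (Theorem 2.3) and the cocycle characterization of \cite{Dor93}.

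\emph{Step 1: $\mathcal{E}$ is Hamiltonian.} We split $\mathcal{E}=\mathcal{E}_0+\mathcal{C}$, where $\mathcal{E}_0$ is the operator of Eq.~\ref{Eq.34}, linear in $u_1,u_2$, and $\mathcal{C}$ has constant entries $-c_1\partial^3$ and $-c_2\partial^3$. For $\mathcal{E}_0$ we use Theorem 2.3. We read off the coefficients $a^\alpha_{ijkl}$ from Eq.~\ref{Eq.34} and compute the $c^\alpha_{ijkl}$ via Eq.~\ref{Eq.30}. We then check that the bracket in Eq.~\ref{Eq.31} coincides with the bracket in Eq.~\ref{Eq.33}, remembering the sign convention of the footnote attached to Eq.~\ref{Eq.32}.

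Equivalently, on Fourier modes the corresponding structure functions are those of Eq.~\ref{Eq.36}. Skew-symmetry of these is immediate. For the Jacobi identity, the first component is the Witt algebra. The second component is the semidirect action of $\mathrm{Vect}(S^1)$ on its adjoint module, whose Jacobi identity reduces to that of $\mathrm{Vect}(S^1)$. The abelian ideal $\varphi^\cdot_{22}=0$ is trivially consistent. So $\mathcal{E}_0$ is Hamiltonian.

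\emph{Step 2: $\mathcal{E}_0+\mathcal{C}$ is Hamiltonian.} Any constant-coefficient skew-adjoint operator is Hamiltonian, and $\mathcal{C}$ is skew-adjoint because $\partial^3$ is and the matrix is symmetric. By \cite{Dor93}, $\mathcal{E}_0+\mathcal{C}$ is Hamiltonian iff $\langle\xi,\eta\rangle=-(\mathcal{C}\xi,\eta)$ is a 2-cocycle for the Lie algebra of Step 1. This form is exactly Eq.~\ref{Eq.38} with $\beta_1=\beta_2=0$.

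To verify the cocycle property, it suffices to work on modes, i.e.\ Eq.~\ref{Eq.40}. Skew-symmetry holds because $l^3$ is odd and $\delta_{l+m,0}$ is symmetric. For the cocycle identity
\[\langle[x,y],z\rangle+\langle[y,z],x\rangle+\langle[z,x],y\rangle=0,\]
we check each type of triple. Three superrotations reduce to the Gel'fand--Fuks identity $\sum_{\mathrm{cyc}}(l-m)(l+m)^3\delta_{l+m+n,0}=0$. Two superrotations and one supertranslation give the same polynomial identity with $c_2$ in place of $c_1$, since $[x_1,x_2]$ has the same structure function. Triples with at least two supertranslations vanish, because $[\alpha,\alpha]=0$ and $\langle x_{2},x_{2}\rangle=0$. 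Hence $\mathcal{E}$ is Hamiltonian.

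\emph{Step 3: $\mathcal{D}$ is Hamiltonian and compatible with $\mathcal{E}$.} $\mathcal{D}$ is constant-coefficient and skew-adjoint, so it is Hamiltonian. For compatibility we need $\mu\mathcal{E}+\nu\mathcal{D}$ to be Hamiltonian for all $\mu,\nu$. For $\mu\neq0$, rescaling reduces this to $\mathcal{E}+t\mathcal{D}$, which is Eq.~\ref{Eq.39} with $\beta_i=t\,\mathds{1}_{c_i\neq0}$; the case $\mu=0$ is $\mathcal{D}$ alone. By \cite{Dor93} once more, it is enough that the $\beta$-parts of Eq.~\ref{Eq.40} are cocycles. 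These are the coboundary-type terms $\beta l\,\delta_{l+m,0}$, and the cocycle identity follows from $\sum_{\mathrm{cyc}}(l-m)(l+m)=0$ on $l+m+n=0$. By linearity in $(c_1,\beta_1,c_2,\beta_2)$, the sum of cocycles is again a cocycle. The indicator functions only switch off a block of $\mathcal{D}$, and a single block is still a cocycle.

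\emph{Main obstacle.} The delicate point is the bookkeeping in Step 1: matching the relabeled constants of Eq.~\ref{Eq.30} to the bracket in Eq.~\ref{Eq.33}, with the sign conventions of Eq.~\ref{Eq.32} and the footnote. A sign error there would turn the off-diagonal $2u_2\partial+u_2^{(1)}$ into its adjoint. I would settle this first by testing Eq.~\ref{Eq.31} on the explicit monomials $\xi_1^{(1)}\eta_2$ and $\xi_1\eta_2^{(1)}$.
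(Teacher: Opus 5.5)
Your proposal is correct and follows the paper's own route. You use Theorem 2.3, through the bracket Eq.~\ref{Eq.33} or equivalently the structure functions Eq.~\ref{Eq.36}, for the part linear in $u$, and Dorfman's cocycle criterion with the central-extension cocycle Eq.~\ref{Eq.38}/Eq.~\ref{Eq.40} (the $\beta$-terms being coboundaries) for the constant parts. Beyond that, you make explicit what the paper only asserts: the case-by-case check of the cocycle identity on modes, and the rescaling and indicator-function bookkeeping that reduces compatibility of $\mu\mathcal{E}+\nu\mathcal{D}$ to Hamiltonianity of $\mathcal{E}+t\mathcal{D}$.
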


\textit{Remark.} Any translation of the variables $u_i\longmapsto u_i+\beta_i$ is similar to a shift of the cocycle by $<,>_\beta$, which is a coboundary of the $1-$cochain $\beta$ given by $\beta(x_{il})=\beta_i\delta_{l,0}$. 

Next, we introduce the notation
\begin{equation}
    \mathcal{J}_i=-c_i\partial^3+2u_i\partial+u_i^{(1)}, \hspace{0.3cm}i\in\{1,2\}\hspace{0.1cm}.
    \label{Eq.42}
\end{equation}
Since the case $(c_2=0$, $c_1\not=0)$ is not physically relevant (on top of leading to a degenerate $\mathcal{D}$), we will assume that  $c_2\not=0$. 

It is instructive to look at the first flows attached to this $\mathfrak{bms}_3$ hierarchy. We consider the simplest exact $1-$form one can start with: $\mathbf{d}\widetilde{\mathcal{H}}_{-1}=b\mathbf{d}u_1+a\mathbf{d}u_2$, $a,b\in\mathbb{R}$. Anticipating a bit with a coming result, the functional $\widetilde{\mathcal{H}}_{-1}=\int bu_1+au_2$ corresponds to the first \textit{Hamiltonian}. It is precisely the general form of a Casimir with respect to $\{,\}_\mathcal{D}$. The flow $\bar{\mathcal{K}}_0$ is simply 
\begin{equation}
    \bar{\mathcal{K}}_0=\begin{pmatrix}
        bu_1^{(1)}+au_2^{(1)} \\ bu_2^{(1)}
    \end{pmatrix}\hspace{0.1cm}.
    \label{Eq.43}
\end{equation}

The recursion pattern furnishes the next exact $1-$form with its associated flow:
\begin{equation}
    \begin{split}
    &\widetilde{\mathcal{H}}_0=\int\bigg( bu_1u_2+\frac{1}{2}\big(a-\mathds{1}_{c_1\not=0}b\big)u_2^2\bigg)\hspace{0.1cm}, \\ 
    &\bar{\mathcal{K}}_1=\begin{pmatrix}
        -bc_1u_2^{(3)}+3b\big(u_1u_2\big)^{(1)}-bc_2u_1^{(3)}-\big(a-\mathds{1}_{c_1\not=0}b\big)c_2u_2^{(3)}+3\big(a-\mathds{1}_{c_1\not=0}b\big)u_2u_2^{(1)} \\
        -bc_2u_2^{(3)}+3bu_2u_2^{(1)}
    \end{pmatrix}\hspace{0.1cm}.
    \end{split}
    \label{Eq.44}
\end{equation}

A third exact $1-$form $\mathbf{d}\widetilde{\mathcal{H}}_1$ may again be computed from the functional
\begin{equation}
    \widetilde{\mathcal{H}}_1=\int\bigg(-bc_2u_1u_2^{(2)}+\frac{3b}{2}u_1u_2^2+\frac{bc_1}{2}u_2^{{(1)}^2}+\frac{1}{2}\big(a-2\mathds{1}_{c_1\not=0}b\big)\big(u_2^3+c_2u_2^{{(1)}^2}\big)\bigg)\hspace{0.1cm}.
    \label{Eq.45}
\end{equation}

\subsection{An exact bms$_3$ bi-Hamiltonian pair}

It is common to speak about a \textit{Poisson pencil} $\mathcal{E}-\lambda\mathcal{D}$, $\lambda\in\mathbb{R}$, for a given bi-Hamiltonian structure. 

Denote by $\sigma'$ the Fréchet derivative of an object $\sigma$. The Lie derivative of a linear operator $\mathcal{E}: \widetilde{\Omega}^1\longmapsto\bar{\mathcal{A}}$ along a vector field $X$ is $\Lie_{X}\mathcal{E}=\mathcal{E}'X-X'\mathcal{E}-\mathcal{E}(X^*)'$, where $*$ refers to the adjoint. Along a constant vector field, the Lie derivative reduces to the Fréchet derivative of the operator. In particular, one checks that for $X=(1/2)\partial/\partial u_1+(1/2)\partial/\partial u_2$
\begin{equation}
    \Lie_{X}\mathcal{E}=\begin{pmatrix}
        \partial & \partial \\ \partial & 0
    \end{pmatrix}=\mathcal{D}\hspace{0.1cm}.
    \label{Eq.46}
\end{equation}

The compatibility of $\Lie_X\mathcal{E}$ is guaranteed by the Hamiltonian nature of $\mathcal{E}$; again, it is clear that the former is also Hamiltonian. Following \cite{DubZha01, FalLor12}, we have that 
\begin{proposition}
    $\big(\mathcal{E}, \mathcal{D}=\Lie_X\mathcal{E}\big)$ is an exact bi-Hamiltonian structure.
\end{proposition}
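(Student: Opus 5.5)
The plan is to check directly the two defining properties of an exact Poisson pencil in the sense of \cite{DubZha01, FalLor12}. First, $\mathcal{E}$ and $\mathcal{D}=\Lie_X\mathcal{E}$ must form a compatible Hamiltonian pair. Second, the vector field $X$ must satisfy $\Lie_X\mathcal{D}=0$, so that $\mathcal{D}$ is itself obtained from the pencil by a single Lie derivative. The key structural input is that the entries of $\mathcal{E}$ in Eq.~\ref{Eq.41} are affine in the symbols $u_1,u_2$ and their derivatives, while $X$ is a constant vector field.

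\textbf{Step 1: compute $\Lie_X\mathcal{E}$.} For $X=\tfrac12\partial/\partial u_1+\tfrac12\partial/\partial u_2$ we have $X'=0$ and $(X^*)'=0$, so $\Lie_X\mathcal{E}=\mathcal{E}'X$. Because each $\mathcal{J}_i$ in Eq.~\ref{Eq.42} is linear in $u_i$, the Fréchet derivative $\mathcal{E}'X$ is obtained by substituting $u_i\mapsto X_i=\tfrac12$ in the $u$-dependent part and discarding the constant terms $-c_i\partial^3$. The derivative terms $u_i^{(1)}$ vanish on a constant, and $2\cdot\tfrac12\,\partial=\partial$. This yields the matrix in Eq.~\ref{Eq.46}, which coincides with $\mathcal{D}$ of Eq.~\ref{Eq.41} when $c_1,c_2\neq0$.

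In the asymptotically flat case $c_1=0$, I would instead take $X=\tfrac12\partial/\partial u_2$. The same computation then gives the matrix with a vanishing upper-left entry, as prescribed by the indicator functions. This keeps the statement uniform under the standing assumption $c_2\neq0$.

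\textbf{Step 2: show the pencil is Hamiltonian.} Since $\mathcal{E}$ is affine in $u$, for every $\lambda$
\begin{equation*}
\mathcal{E}(u)-\lambda\mathcal{D}=\mathcal{E}(u-\lambda X)
\end{equation*}
holds exactly, where $u-\lambda X$ denotes the constant translation $u_i\mapsto u_i-\lambda X_i$. A constant shift of the variables commutes with $\partial$ and with variational derivatives. It therefore transports the Poisson bracket of $\mathcal{E}$ to that of $\mathcal{E}(u-\lambda X)$, which is exactly the mechanism of the Remark following Eq.~\ref{Eq.41}: the shift only adds the coboundary-type cocycle $\langle\cdot,\cdot\rangle_\beta$.

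Hence $\mathcal{E}-\lambda\mathcal{D}$ is Hamiltonian for all $\lambda$. Dividing by $-\lambda$ and letting $\lambda\to\infty$, or simply observing that $\mathcal{D}$ has constant coefficients, shows that $\mathcal{D}$ is Hamiltonian. Compatibility then follows, consistently with the criterion of \cite{Dor93} recalled before Eq.~\ref{Eq.38}.

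\textbf{Step 3: check $\Lie_X\mathcal{D}=0$.} The entries of $\mathcal{D}$ do not depend on $u$, so $\mathcal{D}'=0$; together with $X'=0$ and $(X^*)'=0$ this gives $\Lie_X\mathcal{D}=0$. This completes the exactness.

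I expect the only delicate point to be the sign and adjoint conventions in $\Lie_X\mathcal{E}=\mathcal{E}'X-X'\mathcal{E}-\mathcal{E}(X^*)'$, together with the footnote's reversed sign for the $\mathrm{Vect}(S^1)$ bracket. I would double-check the computation on a test pair of $1$-forms $\xi,\eta$ via $\big(\Lie_X\mathcal{E}\cdot\xi,\eta\big)=\frac{d}{d\epsilon}\big(\mathcal{E}(u+\epsilon X)\cdot\xi,\eta\big)\big|_{\epsilon=0}$. This identity is immediate for constant $X$ and makes the translation argument of Step 2 transparent.
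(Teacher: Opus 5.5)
Your proof is correct, and its skeleton matches the paper's. Step 1 is exactly the computation behind Eq.~\ref{Eq.46}. After that computation the paper only asserts that compatibility of $\Lie_X\mathcal{E}$ is ``guaranteed by the Hamiltonian nature of $\mathcal{E}$''. This amounts to the general Schouten identity $[\mathcal{E},\Lie_X\mathcal{E}]=\tfrac12\Lie_X[\mathcal{E},\mathcal{E}]=0$, valid for any vector field $X$. The paper then notes that $\mathcal{D}$ is Hamiltonian because it has constant coefficients, and defers to \cite{DubZha01, FalLor12} for exactness.

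Where you genuinely differ is Step 2. The translation identity $\mathcal{E}-\lambda\mathcal{D}=\mathcal{E}(u-\lambda X)$ gives the Hamiltonian property of the whole pencil in one stroke, with no Schouten-bracket machinery. It also ties the result directly to the coboundary Remark after Eq.~\ref{Eq.41}. The price is that it works only because $\mathcal{E}$ is affine in $u$ and $X$ is constant. The paper's implicit argument gives compatibility for an arbitrary Liouville field, but then needs the Hamiltonian property of $\mathcal{D}$ separately.

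Your Step 3 ($\Lie_X\mathcal{D}=0$) is part of the definition of an exact pencil, which the paper leaves unstated. Your separate choice $X=\tfrac12\partial/\partial u_2$ for $c_1=0$ is a real improvement. With the paper's $X$ one gets $\begin{pmatrix}\partial&\partial\\ \partial&0\end{pmatrix}$ irrespective of $c_1$, which is not the $\mathcal{D}$ of Eq.~\ref{Eq.41} in the flat case. That pair is still exact, just with a different second operator. Your choice instead reproduces Eq.~\ref{Eq.75} and matches the frozen point $m_0=(0,0;1/2,0)$ used in Section~4.
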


The vector field $X$ is then often called a \textit{Liouville} field and is not unique. Since it is immediate that $\mathcal{D}$ is (formally) non-degenerate, $\mathcal{R}: \bar{\mathcal{A}}\longmapsto\bar{\mathcal{A}}$, $\mathcal{R}:=\mathcal{E}\mathcal{D}^{-1}$ is a well-defined \textit{Nijenhuis operator} \cite{Dor93}.

If we call $Y:=\mathcal{R}X=\big(u_1+\frac{1}{2}xu_1^{(1)}\big)\frac{\partial}{\partial u_1}+\big( u_2+\frac{1}{2}xu_2^{(1)}\big)\frac{\partial}{\partial u_2}$, one shows that
\begin{proposition}
    $Y$ is a conformal symmetry of the pencil $\mathcal{E}-\lambda\mathcal{D}$, i.e. $\Lie_Y\mathcal{R}=\mathcal{R}$.
\end{proposition}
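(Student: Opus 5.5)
The plan is to reduce the statement about $\mathcal{R}=\mathcal{E}\mathcal{D}^{-1}$ to two separate computations, one for $\mathcal{E}$ and one for $\mathcal{D}$. I will show that
\[
\Lie_Y\mathcal{E}=0
\qquad\text{and}\qquad
\Lie_Y\mathcal{D}=-\mathcal{D},
\]
with the Lie derivative of Hamiltonian operators as recalled before Proposition 2.5, $\Lie_{Y}\mathcal{E}=\mathcal{E}'Y-Y'\mathcal{E}-\mathcal{E}(Y')^*$.

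The Nijenhuis operator $\mathcal{R}$ maps $\bar{\mathcal{A}}\to\bar{\mathcal{A}}$, i.e. it is a $(1,1)$-tensor. It is built from the bivector-type operator $\mathcal{E}$ and the inverse $\mathcal{D}^{-1}:\bar{\mathcal{A}}\to\widetilde{\Omega}^1$. Lie derivatives along $Y$ obey the Leibniz rule on such compositions. Differentiating $\mathcal{D}\mathcal{D}^{-1}=\mathrm{id}$ gives $\Lie_Y(\mathcal{D}^{-1})=-\mathcal{D}^{-1}(\Lie_Y\mathcal{D})\mathcal{D}^{-1}$. Hence
\[
\Lie_Y\mathcal{R}=(\Lie_Y\mathcal{E})\mathcal{D}^{-1}-\mathcal{E}\mathcal{D}^{-1}(\Lie_Y\mathcal{D})\mathcal{D}^{-1},
\]
and the two identities above immediately give $\Lie_Y\mathcal{R}=\mathcal{R}$. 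This first step, making the Leibniz rule for the mixed composition precise with the paper's sign conventions, is where I expect the main care to be needed. I will check it directly on the formula for $\Lie$ by writing $\mathcal{R}$'s Lie derivative as $\mathcal{R}'Y-Y'\mathcal{R}+\mathcal{R}Y'$.

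The key technical point is that $Y$ depends explicitly on $x$, so its Fréchet derivative acts diagonally on both components as $Y'=1+\tfrac12 x\partial$. Its formal adjoint is
\[
(Y')^*=1-\tfrac12\partial\circ x=\tfrac12-\tfrac12 x\partial .
\]

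For the constant operator $\mathcal{D}$ we have $\mathcal{D}'Y=0$, so it suffices to treat the scalar entry $\partial$:
\[
-\bigl(1+\tfrac12 x\partial\bigr)\partial-\partial\bigl(\tfrac12-\tfrac12 x\partial\bigr)=-\partial ,
\]
using $\partial\circ x\partial=\partial+x\partial^2$. Since both components of $Y$ have the same form, this applies entrywise and gives $\Lie_Y\mathcal{D}=-\mathcal{D}$.

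For $\mathcal{E}$ I treat each entry $\mathcal{J}_i=-c_i\partial^3+2u_i\partial+u_i^{(1)}$, split into its constant part and its $u$-dependent part.

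The dispersive part contributes
\[
-\bigl(1+\tfrac12x\partial\bigr)\partial^3-\partial^3\bigl(\tfrac12-\tfrac12x\partial\bigr),
\]
which vanishes once we use $\partial^3\circ x=3\partial^2+x\partial^3$.

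For the $u$-dependent part, $\mathcal{J}_i'Y$ substitutes $Y_i=u_i+\tfrac12xu_i^{(1)}$ for $u_i$. This gives $2Y_i\partial+Y_i^{(1)}$, i.e. the terms $3u_i^{(1)}+\tfrac12xu_i^{(2)}$ in the multiplicative part and $(2u_i+xu_i^{(1)})\partial$ in the first-order part. I then expand $-(1+\tfrac12x\partial)(2u_i\partial+u_i^{(1)})-(2u_i\partial+u_i^{(1)})(\tfrac12-\tfrac12x\partial)$ in normal order. The $x\partial^2$ terms cancel, and I expect the remaining terms to cancel exactly against $\mathcal{J}_i'Y$. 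This is a routine but sign-sensitive bookkeeping step that I will carry out explicitly.

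The zero entry of $\mathcal{E}$ is trivial. Since $\mathcal{J}_i$ only involves $u_i$, the off-diagonal entries behave exactly like the diagonal one. This yields $\Lie_Y\mathcal{E}=0$, and with the Leibniz identity above, $\Lie_Y\mathcal{R}=\mathcal{R}$.

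As a consistency check, this expresses that $Y$ is the generator of the scaling $u_i(x)\mapsto e^{s}u_i(e^{s/2}x)$, under which $\mathcal{E}$ is invariant and $\mathcal{D}$ has weight $-1$. The pencil $\mathcal{E}-\lambda\mathcal{D}$ is then mapped to itself up to rescaling $\lambda$, which is precisely the meaning of a conformal symmetry.
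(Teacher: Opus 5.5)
Your proof is correct. It also differs from the paper, which gives no proof at all: it only asserts the proposition (``one shows that'') after establishing $\mathcal{D}=\Lie_X\mathcal{E}$ for a constant $X$ and citing Falqui--Lorenzoni.

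The argument that setting points to is structural rather than computational:
\begin{itemize}
\item Since $X$ has constant components, $X'=0$, so $\Lie_X\mathcal{R}=\mathcal{R}'X=(\mathcal{E}'X)\mathcal{D}^{-1}=\mathcal{D}\mathcal{D}^{-1}=\mathrm{id}$.
\item Vanishing of the Nijenhuis torsion of $\mathcal{R}$ is equivalent to $\Lie_{\mathcal{R}V}\mathcal{R}=\mathcal{R}\,\Lie_V\mathcal{R}$ for all $V$.
\item Hence $\Lie_Y\mathcal{R}=\Lie_{\mathcal{R}X}\mathcal{R}=\mathcal{R}$, without ever writing $Y$ explicitly or handling $x$-dependent operators.
\end{itemize}
That route depends on the Nijenhuis property, i.e.\ on Dorfman's theorem for the compatible pair.

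Your route bypasses the Nijenhuis property entirely and gives sharper information: $\Lie_Y\mathcal{E}=0$ and $\Lie_Y\mathcal{D}=-\mathcal{D}$. These are the individual weights of the two brackets, which is what ``conformal symmetry of the pencil'' literally means. It also treats $c_1\neq0$ and $c_1=0$ uniformly. In both cases $Y=\mathcal{R}X$ comes out the same, and $\mathcal{D}$ only has entries $\partial$ or $0$, so your entrywise computation applies verbatim. Your Leibniz formula for $\Lie_Y(\mathcal{E}\mathcal{D}^{-1})$ is consistent with the paper's conventions, including $\Lie_Y\partial^{-1}=\partial^{-1}$, which follows from $\partial^{-1}\circ x\partial=x-\partial^{-1}$.

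There is one arithmetic slip to fix when you write out the bookkeeping. You have $Y_i^{(1)}=\tfrac32u_i^{(1)}+\tfrac12xu_i^{(2)}$, not $3u_i^{(1)}+\tfrac12xu_i^{(2)}$. With the correct coefficient the cancellation is exact: for $P=2u_i\partial+u_i^{(1)}$,
\[
Y'P+P\,(Y')^*=(2u_i+xu_i^{(1)})\partial+\tfrac32u_i^{(1)}+\tfrac12xu_i^{(2)}=2Y_i\partial+Y_i^{(1)},
\]
so the $u$-dependent part of each $\mathcal{J}_i$ has vanishing Lie derivative along $Y$, as you expected.
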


Therefore, based on \cite{FalLor12}, we may deduce that 
\begin{proposition}
    The family of vector fields $\{Y_n:=\mathcal{R}^{n+1}X\}_{n\in\mathbb{N}}$ spans an upper Virasoro algebra ($n\geq0$) with vanishing central charge.
\end{proposition}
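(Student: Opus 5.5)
The plan is to derive the bracket relations $[Y_m,Y_n]=(n-m)Y_{m+n}$ (up to the sign convention fixed for $\mathrm{Vect}(S^1)$) from three facts: $\mathcal{R}$ is Nijenhuis, $\Lie_X\mathcal{R}$ is computable, and $\Lie_Y\mathcal{R}=\mathcal{R}$ (Proposition 2.7). This is the mechanism of \cite{FalLor12}. Write $Y_{-1}:=X$ and $Y_0:=Y=\mathcal{R}X$, so that $Y_n=\mathcal{R}^{n+1}X$ for all $n\geq-1$.

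The first step is to compute $\Lie_X\mathcal{R}$. We have $\mathcal{R}=\mathcal{E}\mathcal{D}^{-1}$, and $\mathcal{D}$ has constant coefficients, so $\Lie_X\mathcal{D}=0$ (for a constant field $X$ the Lie derivative reduces to the Fréchet derivative, which vanishes here). Combined with $\Lie_X\mathcal{E}=\mathcal{D}$ from Eq.~\ref{Eq.46}, this gives
\[
\Lie_X\mathcal{R}=(\Lie_X\mathcal{E})\mathcal{D}^{-1}-\mathcal{E}\mathcal{D}^{-1}(\Lie_X\mathcal{D})\mathcal{D}^{-1}=\mathrm{Id}.
\]
Proposition 2.7 supplies $\Lie_{Y_0}\mathcal{R}=\mathcal{R}$. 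The next claim is that, inductively, $\Lie_{Y_n}\mathcal{R}=\mathcal{R}^{n+1}$ for all $n\geq-1$.

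To prove this I would use the standard identity valid for any Nijenhuis tensor $\mathcal{R}$ and any vector field $Z$:
\[
\Lie_{\mathcal{R}Z}\mathcal{R}=\mathcal{R}\,\Lie_Z\mathcal{R}.
\]
It is equivalent to the vanishing of the torsion $T_\mathcal{R}(Z,W)=[\mathcal{R}Z,\mathcal{R}W]-\mathcal{R}[\mathcal{R}Z,W]-\mathcal{R}[Z,\mathcal{R}W]+\mathcal{R}^2[Z,W]$ after rewriting the brackets as $\Lie_Z(\mathcal{R}W)=(\Lie_Z\mathcal{R})W+\mathcal{R}[Z,W]$. Applying it with $Z=Y_{n-1}$ gives $\Lie_{Y_n}\mathcal{R}=\mathcal{R}\,\mathcal{R}^{n}=\mathcal{R}^{n+1}$, which closes the induction. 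As a consistency check, the cases $n=-1,0$ agree with the computations above.

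Then I would compute the brackets. Using $\Lie_Z(\mathcal{R}^kW)=\sum_{i}\mathcal{R}^i(\Lie_Z\mathcal{R})\mathcal{R}^{k-1-i}W+\mathcal{R}^k[Z,W]$ we get
\[
[Y_m,Y_n]=\Lie_{Y_m}(\mathcal{R}^{n-m}Y_m)=(n-m)\mathcal{R}^{m+1}\mathcal{R}^{n-m-1}Y_m=(n-m)Y_{m+n}\qquad(n>m).
\]
Here $\Lie_{Y_m}\mathcal{R}=\mathcal{R}^{m+1}$ commutes with $\mathcal{R}$, and $[Y_m,Y_m]=0$. Skew-symmetry gives the remaining cases. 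These are exactly the Witt relations for $n,m\geq0$ (including $-1$ if one wishes), with no central term: the generators close on the linear span without extra constants, so the central charge vanishes.

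The main obstacle is making the Nijenhuis identity rigorous in the present formal setting. Lie derivatives act here on operators $\widetilde{\Omega}^1\to\bar{\mathcal{A}}$, and $\mathcal{R}$ involves $\partial^{-1}$, so $\mathcal{R}^{n}X$ must be checked to lie in $\bar{\mathcal{A}}$ (possibly after the explicit $x$-dependence seen in $Y$). I would handle this by invoking \cite{Dor93} for the Nijenhuis property of $\mathcal{E}\mathcal{D}^{-1}$ for a compatible, nondegenerate pair, and by working in the extended space allowing explicit $x$, as already done for $Y$.
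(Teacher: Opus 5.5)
Your proposal is correct and is essentially the paper's argument. The paper combines the same three ingredients: the Leibniz rule, the Nijenhuis identity $\Lie_{\mathcal{R}Z}\mathcal{R}=\mathcal{R}\Lie_Z\mathcal{R}$, and the conformal symmetry $\Lie_Y\mathcal{R}=\mathcal{R}$. It simply expands $\Lie_{\mathcal{R}^nY}(\mathcal{R}^mY)$ directly instead of first proving $\Lie_{Y_n}\mathcal{R}=\mathcal{R}^{n+1}$ by induction. Your extra observation $\Lie_X\mathcal{R}=\mathrm{Id}$ is a bonus: through the same Nijenhuis identity it re-derives the conformal symmetry $\Lie_Y\mathcal{R}=\mathcal{R}$ and adjoins $Y_{-1}=X$ as the $L_{-1}$ generator.
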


Indeed, using the Leibniz rule and the property of $\mathcal{R}$ being Nijenhuis
\begin{equation*}
    \begin{split}
        \Lie_{Y_n}Y_m&=\Lie_{\mathcal{R}^nY}\big(\mathcal{R}^m\big)Y+\mathcal{R}^m\Lie_{\mathcal{R}^nY}Y=\mathcal{R}^n\Lie_Y\big(\mathcal{R}^m\big)Y-(m\leftrightarrow n)=(m-n)Y_{m+n}\hspace{0.1cm}.
    \end{split}
\end{equation*}

The latter then suggests a connection between the $\mathfrak{bms}_3$ hierarchy and the concept of \textit{master symmetries} originally introduced by \cite{Fuc83}. Regarding KdV for instance, it was shown in \cite{ZubMag91}, how these symmetries could be used to characterize certain class of Schrödinger operators\footnote{See also \cite{DuiGrü86} for a first discussion on the subject.} (the so-called \textit{bispectral} potentials). 

Recently, this link between master symmetries, Darboux transformations and the KdV hierarchy was rediscovered in the context of perturbed black hole physics: see \cite{LenSop21, JarLenSop24}.

In this perspective, beyond a connection with some `Schrödinger-like' operator that is hinted (and that will be the proper matter of \textbf{Section.5.}), we believe that these $\mathfrak{bms}_3$ mastersymmetries should be properly investigated in future work. 

\subsection{The integrable nature of the bms$_3$ hierarchy}

We have exhibited a bi-Hamiltonian structure starting from the centrally extended $\mathfrak{bms}_3$ algebra. We now properly prove that it is integrable. Recall that we need to check that $\textit{i)}$ there exist infinitely many closed $1-$forms $\mathbf{d}\widetilde{\mathcal{H}}_n$ such that the $\widetilde{\mathcal{H}}_n$ are in involution with respect to both the $\mathcal{E}-$ and $\mathcal{D}-$Poisson brackets ; \textit{ii)} all flows $\bar{\mathcal{K}}_n=\mathcal{E}\mathbf{d}\widetilde{\mathcal
H}_{n-1}=\mathcal{D}\mathbf{d}\widetilde{\mathcal{H}}_{n}$ commute in $\bar{\mathcal{A}}$. Then, the $\widetilde{\mathcal{H}}_n$ are conserved quantities for all the $\bar{\mathcal{K}}_m$ and are called the \textit{Hamiltonians}.

\begin{proposition}
    Let $\bar{\mathcal{K}}_0$ be defined as in Eq.\ref{Eq.43}. For each $n\geq0$, the vector field $\bar{\mathcal{K}}_n=\mathcal{R}^n\bar{\mathcal{K}}_0$ lies in the image of $\mathcal{D}$ and hence we can recursively define $\bar{\mathcal{K}}_{n+1}=\mathcal{R}\bar{\mathcal{K}}_n$.
\end{proposition}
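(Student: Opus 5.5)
The plan is to prove the claim by induction on $n$, with the standard Lenard--Magri argument for a compatible pair $(\mathcal{E},\mathcal{D})$ in which $\mathcal{D}$ is a constant-coefficient, formally invertible operator. Since $c_2\neq0$, $\mathcal{D}$ is a block matrix with entries proportional to $\partial$, and its determinant is $-\mathds{1}_{c_2\neq0}\partial^2\neq0$. So a vector field $\bar{\mathcal{K}}$ lies in the image of $\mathcal{D}$ (in $\bar{\mathcal{A}}$, not merely over pseudo-differential symbols) exactly when both components of $\bar{\mathcal{K}}$ are total derivatives, i.e. lie in $\partial\mathcal{A}$. Concretely, one inverts $\mathcal{D}$ on the lower-triangular system: the second row gives $\partial\xi_1=\mathcal{K}^{(1)}$, and the first row then gives $\partial\xi_2=\mathcal{K}^{(2)}-\mathds{1}_{c_1\neq0}\xi_1$ (up to relabeling). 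The statement is therefore equivalent to showing that each component of $\mathcal{E}\,\mathcal{P}_n$ is $\partial$-exact, where $\mathcal{P}_n=\mathcal{D}^{-1}\bar{\mathcal{K}}_n$.

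\textbf{Base case.} From Eq.\ref{Eq.43}, $\bar{\mathcal{K}}_0=\mathcal{D}\,\mathbf{d}\widetilde{\mathcal{H}}_0$, with $\widetilde{\mathcal{H}}_0$ the explicit functional of Eq.\ref{Eq.44}, so $\mathcal{P}_0=\mathbf{d}\widetilde{\mathcal{H}}_0$ is exact. Likewise $\bar{\mathcal{K}}_0=\mathcal{E}\,\mathbf{d}\widetilde{\mathcal{H}}_{-1}$, where $\widetilde{\mathcal{H}}_{-1}$ is a Casimir of $\mathcal{D}$.

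\textbf{Inductive step.} Assume $\mathcal{P}_0,\dots,\mathcal{P}_n$ have been constructed with $\mathcal{P}_k=\mathbf{d}\widetilde{\mathcal{H}}_k$ exact and $\mathcal{E}\mathcal{P}_{k-1}=\mathcal{D}\mathcal{P}_k$. The goal is to show that $\mathcal{E}\mathcal{P}_n$ is in the image of $\mathcal{D}$, with a closed preimage. I will use the criterion that a vector field $V$ is $\mathcal{D}\mathbf{d}\widetilde{\mathcal{G}}$ for some $\widetilde{\mathcal{G}}$ precisely when $\mathcal{D}^{-1}V$, formally defined, is a closed 1-form. The key tool is the Lenard orthogonality relations: by skew-symmetry of $\mathcal{E}$ and $\mathcal{D}$ and the recursion,
\[
\{\widetilde{\mathcal{H}}_k,\widetilde{\mathcal{H}}_l\}_{\mathcal{E}}=\{\widetilde{\mathcal{H}}_{k},\widetilde{\mathcal{H}}_{l+1}\}_{\mathcal{D}}=\{\widetilde{\mathcal{H}}_{k-1},\widetilde{\mathcal{H}}_{l+1}\}_{\mathcal{E}},
\]
and shifting indices down to the Casimir $\widetilde{\mathcal{H}}_{-1}$ shows that all these brackets vanish. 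I will then invoke the result quoted above from \cite{Dor93}: because $\mathcal{R}=\mathcal{E}\mathcal{D}^{-1}$ is Nijenhuis (by the compatibility established in Proposition 2.4 and the exactness statement of Proposition 2.5), the 1-form $\mathcal{R}^*\mathcal{P}_n=\mathcal{D}^{-1}\mathcal{E}\mathcal{P}_n$ is closed whenever $\mathcal{P}_{n-1}$ and $\mathcal{P}_n$ are closed. The reason is that the Nijenhuis torsion identity lets one rewrite $\mathbf{d}(\mathcal{R}^*\mathcal{P}_n)$ in terms of $\mathbf{d}\mathcal{P}_n$, $\mathbf{d}\mathcal{P}_{n-1}$ and $\mathcal{R}^*$, all of which vanish or are controlled. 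In the variational complex, closed 1-forms in $\widetilde{\Omega}^1$ with polynomial coefficients are exact, since the complex is acyclic in positive degree. Hence $\mathcal{P}_{n+1}=\mathbf{d}\widetilde{\mathcal{H}}_{n+1}$ with $\widetilde{\mathcal{H}}_{n+1}\in\widetilde{\mathcal{A}}$, and therefore $\bar{\mathcal{K}}_{n+1}=\mathcal{R}\bar{\mathcal{K}}_n=\mathcal{D}\mathcal{P}_{n+1}$ lies in the image of $\mathcal{D}$.

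\textbf{Main obstacle.} The difficult step is the passage from a \emph{formal} preimage $\mathcal{D}^{-1}\mathcal{E}\mathcal{P}_n$, which a priori involves $\partial^{-1}$, to a genuinely local (polynomial) 1-form. Closedness alone holds in a larger complex with nonlocal terms, so I must show that both components of $\mathcal{E}\mathcal{P}_n$ are total derivatives. I would check this directly. The second component is $\mathcal{J}_2\xi_1$, and $\int\mathcal{J}_2\xi_1\propto\{\widetilde{\mathcal{H}}_n,\int u_1\}$-type pairings; more precisely, $\int(\mathcal{E}\mathcal{P}_n)_\alpha$ is, up to sign, $\{\widetilde{\mathcal{H}}_n,\int u_\alpha\}_{\mathcal{E}}$. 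Since both $\int u_1$ and $\int u_2$ are Casimirs of $\mathcal{D}$ (this is the general form of $\widetilde{\mathcal{H}}_{-1}$ with $(a,b)$ free), the Lenard chain above, started from either Casimir, gives $\{\widetilde{\mathcal{H}}_n,\int u_\alpha\}_{\mathcal{E}}=0$. A functional with vanishing integral is an element of $\partial\mathcal{A}$ (modulo constants, which are excluded by homogeneity), so each component is $\partial$-exact and $\mathcal{D}$ can be inverted locally row by row. This Casimir-pairing argument is the point I would verify most carefully, in particular for the case $c_1=0$ where $\mathcal{D}$ is anti-diagonal.
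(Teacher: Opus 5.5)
Your reduction is correct. Since $\mathcal{D}=\partial M$ with $M$ an invertible constant matrix, $\bar{\mathcal{K}}_{n+1}=\mathcal{E}\mathcal{P}_n$ lies in the image of $\mathcal{D}$ iff $\int(\mathcal{E}\mathcal{P}_n)_\alpha=0$ for \emph{both} $\alpha=1,2$. Equivalently, $\{\int u_1,\widetilde{\mathcal{H}}_n\}_{\mathcal{E}}=\{\int u_2,\widetilde{\mathcal{H}}_n\}_{\mathcal{E}}=0$. These two conditions are equivalent to the ones the paper checks by hand in Eqs.~\ref{Eq.93}--\ref{Eq.95}: that $u_2\mathcal{K}_{n,2}$ and $u_2\mathcal{K}_{n,1}+(u_1-2u_2)\mathcal{K}_{n,2}$ are total derivatives.

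The gap is in how you obtain them. Run inside the chain seeded by $\widetilde{\mathcal{H}}_{-1}=\int(bu_1+au_2)$, the Lenard shift-and-skew-symmetry argument yields only $\{\widetilde{\mathcal{H}}_{-1},\widetilde{\mathcal{H}}_n\}_{\mathcal{E}}=0$. That is the single relation $b\{\int u_1,\widetilde{\mathcal{H}}_n\}_{\mathcal{E}}+a\{\int u_2,\widetilde{\mathcal{H}}_n\}_{\mathcal{E}}=0$.

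``Starting from the other Casimir'' produces a different sequence of Hamiltonians and says nothing about the one at hand. Let $F_n$ and $G_n$ be the chains seeded by $\int u_1$ and $\int u_2$. Shifting across the two chains gives only $\{\int u_2,F_n\}_{\mathcal{E}}=-\{\int u_1,G_n\}_{\mathcal{E}}$, not vanishing. Equivalently, polarizing the one-chain identity in $(a,b)$ gives three relations for four unknown brackets. Since $\mathcal{D}$ has a two-dimensional space of Casimirs, one orthogonality relation per chain is not enough. The missing one is exactly what the Proposition must establish; in the abstract Lenard--Magri theorem this image condition is a hypothesis, not a conclusion.

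The missing ingredient is cheap:
\begin{itemize}
\item Since $\mathcal{J}_i\cdot 1=u_i^{(1)}$, one has $\mathcal{E}\,\mathbf{d}\int u_1=(u_1^{(1)},u_2^{(1)})^\top$, the generator of $x$-translations. Hence $\{\int u_1,\widetilde{\mathcal{H}}\}_{\mathcal{E}}=\pm\int\partial h=0$ for every $\widetilde{\mathcal{H}}=\int h$ with $h\in\mathcal{A}$, for either value of $c_1$. This is where your inductive hypothesis that $\mathcal{P}_n=\mathbf{d}\widetilde{\mathcal{H}}_n$ is exact is genuinely used.
\item The one-chain Lenard relation then forces $\{\int u_2,\widetilde{\mathcal{H}}_n\}_{\mathcal{E}}=0$ whenever $a\neq 0$.
\item The case $a=0$ follows because $\bar{\mathcal{K}}_n=\mathcal{R}^n\bar{\mathcal{K}}_0$ and the image condition are linear in $(a,b)$.
\end{itemize}
Alternatively, because $\mathcal{R}$ is upper triangular, $\delta_{u_1}\widetilde{\mathcal{H}}_n=\partial^{-1}\mathcal{K}_{n,2}=b\,\partial^{-1}(\mathcal{J}_2\partial^{-1})^n u_2^{(1)}$ is a scalar KdV gradient in $u_2$. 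Then $\int u_2^{(1)}\delta_{u_1}\widetilde{\mathcal{H}}_n=0$ is the scalar KdV fact, which is the paper's Eq.~\ref{Eq.93}.

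With this step added, your route is a legitimate and more conceptual alternative to the appendix. It explains the locality of $\mathcal{P}_{n+1}$ as conservation of the $\mathcal{D}$-Casimirs along the flows. The paper's explicit integrations by parts instead show that each obstruction density $E$ satisfies $E\equiv -E \bmod \partial\mathcal{A}$. That route avoids any appeal to closedness of the $\mathcal{P}_n$ or to the Nijenhuis property.
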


We refer the reader to the attached appendix for a proof of this statement. Therefore, we have that for each $n\geq 0$, the RHS of Eq.\ref{Eq.21} holds. In other words, there exist functionals $\widetilde{\mathcal{H}}_n$ such that $\{.,\widetilde{\mathcal{H}}_{n-1}\}_\mathcal{E}=\{.,\widetilde{\mathcal{H}}_n\}_\mathcal{D}$ and thus point \textit{i)} is satisfied.

In addition, $\bar{\mathcal{K}}_0$ is a symmetry of $\mathcal{R}$ and so are all the $\bar{\mathcal{K}}_n$. Because $\Lie_{\bar{\mathcal{K}}_n}\mathcal{R}=\mathcal{R}^n\Lie_{\bar{\mathcal{K}}_0}\mathcal{R}$ and $\Lie_{\bar{\mathcal{K}}_n}\bar{\mathcal{K}}_m=\big(\Lie_{\bar{\mathcal{K}}_n}\mathcal{R}^m\big)\bar{\mathcal{K}}_0-\mathcal{R}^m\Lie_{\bar{\mathcal{K}}_0}\bar{\mathcal{K}}_n$, the flows indeed commute.

\begin{theorem}
    The $\mathfrak{bms}_3$ bi-Hamiltonian hierarchy Eq.\ref{Eq.41} is integrable.    
\end{theorem}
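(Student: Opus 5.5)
The plan is to assemble the theorem from the structural results already established, following the two-point checklist stated at the start of this subsection. First, I record that $(\mathcal{E},\mathcal{D})$ of Eq.\ref{Eq.41} is a genuine bi-Hamiltonian pair. Theorem 2.3 together with the bracket Eq.\ref{Eq.33} shows $\mathcal{E}$ is Hamiltonian. The cocycle Eq.\ref{Eq.38}, via the correspondence of \cite{Dor93}, gives compatibility with the constant operator $\mathcal{D}$; equivalently one can use exactness, $\mathcal{D}=\Lie_X\mathcal{E}$ (Proposition 2.5). Under the standing assumption $c_2\neq0$, $\mathcal{D}$ is formally invertible: its off-diagonal entries are $\partial$, so $\mathcal{D}^{-1}$ can be written explicitly with entries in $\partial^{-1}$. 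Hence $\mathcal{R}=\mathcal{E}\mathcal{D}^{-1}$ is Nijenhuis, with vanishing torsion coming from compatibility.

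Next I handle point \emph{i)}. Start from the Casimir $\widetilde{\mathcal{H}}_{-1}=\int bu_1+au_2$ of $\{,\}_\mathcal{D}$ and set $\bar{\mathcal{K}}_0=\mathcal{E}\mathbf{d}\widetilde{\mathcal{H}}_{-1}$. Proposition 2.8 says every $\bar{\mathcal{K}}_n=\mathcal{R}^n\bar{\mathcal{K}}_0$ lies in $\mathrm{Im}\,\mathcal{D}$, so there are $1$-forms $\mathcal{P}_n$ with $\bar{\mathcal{K}}_n=\mathcal{E}\mathcal{P}_{n-1}=\mathcal{D}\mathcal{P}_n$. By the result of \cite{Dor93} recalled before Eq.\ref{Eq.21}, all $\mathcal{P}_n$ are closed, so $\mathcal{P}_n=\mathbf{d}\widetilde{\mathcal{H}}_n$; on the variational complex, closed forms are exact in the polynomial setting. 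Involution then follows from the Lenard–Magri argument. For $m<n$, I would shift indices repeatedly using $\mathcal{E}\mathbf{d}\widetilde{\mathcal{H}}_{k-1}=\mathcal{D}\mathbf{d}\widetilde{\mathcal{H}}_k$ and skew-symmetry:
\begin{equation*}
\{\widetilde{\mathcal{H}}_m,\widetilde{\mathcal{H}}_n\}_\mathcal{D}=\{\widetilde{\mathcal{H}}_m,\widetilde{\mathcal{H}}_{n-1}\}_\mathcal{E}=\{\widetilde{\mathcal{H}}_{m+1},\widetilde{\mathcal{H}}_{n-1}\}_\mathcal{D}=\cdots
\end{equation*}
This continues until the indices coincide, where skew-symmetry forces vanishing; the same argument works for the $\mathcal{E}$-bracket.

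For point \emph{ii)}, I would first check by direct computation that $\Lie_{\bar{\mathcal{K}}_0}\mathcal{R}=0$. Since $\bar{\mathcal{K}}_0=b\,u_1^{(1)}\partial_{u_1}+\dots$ is a linear combination of the $x$-translation generators $u_i^{(1)}\partial_{u_i}$ and the "mixing" field $u_2^{(1)}\partial_{u_1}$, and the entries of $\mathcal{E},\mathcal{D}$ carry no explicit $x$-dependence, this reduces to checking that $\mathcal{E}$ and $\mathcal{D}$ are each preserved. Alternatively, $\bar{\mathcal{K}}_0$ is $\mathcal{E}$-Hamiltonian and $\mathcal{D}$-Hamiltonian (of a Casimir), which gives $\Lie_{\bar{\mathcal{K}}_0}\mathcal{E}=\Lie_{\bar{\mathcal{K}}_0}\mathcal{D}=0$. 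The Nijenhuis identity $\Lie_{\mathcal{R}Z}\mathcal{R}=\mathcal{R}\Lie_Z\mathcal{R}$, valid when $\Lie_Z\mathcal{R}$ is handled appropriately, then propagates this to $\Lie_{\bar{\mathcal{K}}_n}\mathcal{R}=0$. Finally, I expand $\Lie_{\bar{\mathcal{K}}_n}\bar{\mathcal{K}}_m=(\Lie_{\bar{\mathcal{K}}_n}\mathcal{R}^m)\bar{\mathcal{K}}_0+\mathcal{R}^m[\bar{\mathcal{K}}_n,\bar{\mathcal{K}}_0]$. The first term vanishes, and the second reduces to $-\mathcal{R}^{m+n}\Lie_{\bar{\mathcal{K}}_0}\bar{\mathcal{K}}_0=0$ after the same manipulation, so all flows commute.

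The main obstacle is the image statement, Proposition 2.8, which I take as given from the appendix. Within this proof, the delicate step is the symmetry property $\Lie_{\bar{\mathcal{K}}_0}\mathcal{R}=0$ and its propagation under $\mathcal{R}$. Here the non-locality of $\mathcal{D}^{-1}$ and the degenerate-looking lower-right zero block require care. I would verify it at the level of $\mathcal{E}$ and $\mathcal{D}$ separately, which are local operators, rather than working with $\mathcal{R}$ directly.
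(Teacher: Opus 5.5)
Your proposal is correct and follows essentially the paper's route: Proposition 2.8 for the Lenard chain and closedness of the $1$-forms, the symmetry $\Lie_{\bar{\mathcal{K}}_0}\mathcal{R}=0$ propagated through $\Lie_{\bar{\mathcal{K}}_n}\mathcal{R}=\mathcal{R}^n\Lie_{\bar{\mathcal{K}}_0}\mathcal{R}$, and the same expansion of $\Lie_{\bar{\mathcal{K}}_n}\bar{\mathcal{K}}_m$ to show the flows commute. The paper simply asserts that $\bar{\mathcal{K}}_0$ is a symmetry of $\mathcal{R}$ and leaves the Lenard--Magri involution implicit, and you supply both; your observation that $\bar{\mathcal{K}}_0$ preserves $\mathcal{E}$ and $\mathcal{D}$ separately is the right justification, and applied to $\bar{\mathcal{K}}_n=\mathcal{E}\mathbf{d}\widetilde{\mathcal{H}}_{n-1}=\mathcal{D}\mathbf{d}\widetilde{\mathcal{H}}_n$ it would even give $\Lie_{\bar{\mathcal{K}}_n}\mathcal{R}=0$ directly, without the Nijenhuis propagation.
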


This theorem systematizes and completes, in the present framework, the result originally stated in \cite{Fue.al.17}

\subsection{A $\tau-$scheme integrability perspective for bms$_3$ in the asymptotically flat case}

So far, we have based our description of an integrable hierarchy on the so-called \textit{Lenard scheme}. The \textit{$\tau-$scheme} formalism\footnote{This terminology has, to our knowledge, nothing to do with the notion of $\tau-$function, widely used regarding the solutions of integrable flows.} offers an alternative picture. Recall that an infinite sequence of flows is given by $\partial_{t_n}u=\bar{\mathcal{K}}_n[u]$ at $n\in\mathbb{N}$, where $\bar{\mathcal{K}}_n[u]$ is an infinitesimal generator of symmetry, that is, an evolutionary vector field.

We speak of a $\tau-$scheme whenever, for some fixed \textit{seed} $\bar{\mathcal{K}}_0[u]$, there exists an evolutionary vector field $\tau$ such that \textit{i)} elements $\bar{\mathcal{K}}_n[u]$ are obtained via the recurrence procedure 
\begin{equation}
    \bar{\mathcal{K}}_{n+1}=\big[\tau,\bar{\mathcal{K}}_n[u]\big]\equiv\mathrm{ad}\tau\cdot\bar{\mathcal{K}}_n[u]\hspace{0.1cm},
    \label{Eq.47}
\end{equation}
and \textit{ii)} the resulting symmetries mutually commute.

Whereas in the previous discussion the recursion operator was at the core of the induction pattern, now $\mathrm{ad}\tau$ plays the role of the generator and has, \textit{a priori}, pretty much nothing in common with the former. 

A $\tau-$scheme paves the way for the construction of a Hamiltonian pair. Indeed, it can be shown that for $\mathcal{D}$ a Hamiltonian operator and $\tau$ the generator, if the Lie derivative of $\mathcal{D}$ along $\tau$ is proportional to a second Hamiltonian operator $\mathcal{E}$, then the two are compatible. 

Conversely, given a bi-Hamiltonian structure, one can extract a $\tau-$ scheme:

\begin{theorem}[\cite{Dor93}]
    Let $\mathcal{D},\mathcal{E}:\Omega^1\longmapsto\bar{\mathcal{A}}$ be a Hamiltonian pair with nondegenerate $\mathcal{D}$. If $\mathrm{H}^2\big(\Omega,\mathrm{d}\big)$ is trivial, then there exists $\tau\in\bar{\mathcal{A}}$ such that $\Lie_\tau\mathcal{D}\propto\mathcal{E}$.
\end{theorem}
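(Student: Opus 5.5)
Since $\mathcal{D}$ is nondegenerate, the Poisson bivector it defines is invertible, and $\mathcal{D}^{-1}:\bar{\mathcal{A}}\to\Omega^1$ is a symplectic operator. The plan is to reduce the existence of $\tau$ to a cohomological statement about a closed $2$-form attached to $\mathcal{E}$.

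\emph{Step 1 (Hamiltonian conditions as closedness).} Let $\Theta=\mathcal{D}^{-1}$, viewed as a $2$-form on $\bar{\mathcal{A}}$. The Jacobi identity for $\mathcal{D}$ is equivalent to $\mathbf{d}\Theta=0$. Compatibility of $(\mathcal{E},\mathcal{D})$ together with the Hamiltonian property of $\mathcal{E}$ means that $\mathcal{R}=\mathcal{E}\mathcal{D}^{-1}$ is Nijenhuis. Hence the $2$-form $\Theta_1:=\mathcal{D}^{-1}\mathcal{E}\mathcal{D}^{-1}$, i.e. $\Theta\circ\mathcal{R}$, is again closed. This is the standard statement that $\mathcal{D}^{-1}$ and $\mathcal{D}^{-1}\mathcal{E}\mathcal{D}^{-1}$ form a compatible pair of symplectic operators.

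\emph{Step 2 (use of the cohomology hypothesis).} Since $\mathrm{H}^2(\Omega,\mathbf{d})$ is trivial and $\mathbf{d}\Theta_1=0$, there is a $1$-form $\theta$ with $\Theta_1=\mathbf{d}\theta$. Define
\[
\tau:=\mathcal{D}\theta\in\bar{\mathcal{A}}.
\]

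\emph{Step 3 (computing the Lie derivative).} By Cartan's formula, using $\mathbf{d}\Theta=0$ and $\mathbf{i}_\tau\Theta=\mathcal{D}^{-1}\tau=\theta$ (up to a sign convention),
\[
\Lie_\tau\Theta=\mathbf{d}\,\mathbf{i}_\tau\Theta=\pm\mathbf{d}\theta=\pm\Theta_1 .
\]
For an invertible operator, $\Lie_\tau(\mathcal{D}^{-1})=-\mathcal{D}^{-1}(\Lie_\tau\mathcal{D})\mathcal{D}^{-1}$. This follows by differentiating $\mathcal{D}^{-1}\mathcal{D}=\mathrm{id}$, since the Lie derivative of a bivector and of its inverse $2$-form are related through the same Fréchet-derivative expression $\Lie_X\mathcal{E}=\mathcal{E}'X-X'\mathcal{E}-\mathcal{E}(X^*)'$. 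We therefore get $-\mathcal{D}^{-1}(\Lie_\tau\mathcal{D})\mathcal{D}^{-1}=\pm\mathcal{D}^{-1}\mathcal{E}\mathcal{D}^{-1}$, hence $\Lie_\tau\mathcal{D}=\mp\mathcal{E}$. So $\Lie_\tau\mathcal{D}\propto\mathcal{E}$, with the proportionality constant $\pm1$ fixed by the sign conventions for the pairing.

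\emph{Main obstacle.} The key point is Step 1: proving that $\mathcal{D}^{-1}\mathcal{E}\mathcal{D}^{-1}$ is closed. I would first check that it is skew. Then I would express $\mathbf{d}\Theta_1$ in terms of the Schouten bracket $[\mathcal{E},\mathcal{D}]$ transported by $\mathcal{D}^{-1}$. Vanishing of $[\mathcal{E},\mathcal{D}]$ (compatibility) and of $[\mathcal{D},\mathcal{D}]$ then gives $\mathbf{d}\Theta_1=0$.

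A secondary technical point is making sense of $\mathcal{D}^{-1}$ in the formal variational setting. For the operators at hand, a constant-coefficient matrix like $\mathcal{D}$ of Proposition~2.4, I would work in the completion by pseudo-differential operators, or restrict to $\mathrm{Im}\,\mathcal{D}$. I would then verify that the $\theta$ produced in Step 2 yields a genuine element $\tau\in\bar{\mathcal{A}}$, possibly allowing explicit $x$-dependence as with the Liouville field $X$ above.
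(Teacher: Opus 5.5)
The paper gives no proof of this statement and simply quotes it from \cite{Dor93}. Your argument is the standard cohomological proof of that result, and it is correct: compatibility makes $\mathcal{D}^{-1}\mathcal{E}\mathcal{D}^{-1}$ closed, triviality of $\mathrm{H}^2$ makes it equal to $\mathbf{d}\theta$, and $\tau=\mathcal{D}\theta$ gives $\Lie_\tau\mathcal{D}\propto\mathcal{E}$.

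One fix is needed in Step~1. Closedness should be attributed to $[\mathcal{D},\mathcal{E}]=0$ transported by $\mathcal{D}^{-1}$, as in your final paragraph, not to the Nijenhuis property of $\mathcal{R}$. The Nijenhuis property alone does not give it: $f\,\mathrm{id}$ is Nijenhuis, yet $f\omega$ need not be closed in dimension $\geq 4$.

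Your nonlocality caveat is also genuine. The generator in Proposition~2.11 contains $\partial^{-1}$ and explicit $x$ terms, so $\tau$ does not lie in $\bar{\mathcal{A}}$ proper.
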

The exactness of the variational complex in all positive dimensions has been established in \cite{Dor93}.

We illustrate the construction for the asymptotically flat situation $c_1=0$ and $c_2=3/G$, being understood that it generalizes immediately for arbitrary central charges. 

\begin{proposition}
    There exists a $\tau-$scheme related to the asymptotically flat $\mathfrak{bms}_3$ bi-Hamiltonian structure whose generator $\tau\in\bar{\mathcal{A}}$ is given by: 
    \begin{equation}
        \tau=\tau_1\frac{\partial}{\partial u_1} +\tau_2\frac{\partial}{\partial u_2}\hspace{0.1cm}, 
        \label{Eq.48}
    \end{equation}
\begin{equation}
    \begin{split}
        \tau_1=\frac{d}{dx}&\frac{\delta}{\delta u_2}\int x\bigg(\frac{3}{2}bu_1u_2^2-c_2bu_1u_2^{(2)}-2ac_2u_2^{{(1)}^2}\bigg)\\
        &-2c_2u_1^{(2)}+u_2^{(1)}\partial^{-1} u_1+u_1^{(1)}\partial^{-1} u_2+5u_1u_2\hspace{0.1cm},
    \end{split}
    \label{Eq.49}
\end{equation}

\begin{equation}
    \tau_2=\frac{d}{dx}\frac{\delta}{\delta u_1}\int x\bigg(\frac{3}{2}bu_1u_2^2-c_2bu_1u_2^{(2)}\bigg)-2c_2u_2^{(2)}+u_2^{(1)}\partial^{-1} u_2+\frac{5}{2}u_2^2\hspace{0.1cm}, 
    \label{Eq.50}
\end{equation}
for $a$, $b\in\mathbb{R}$. Then, $\Lie_{\tau}\mathcal{D}=-4\mathcal{E}$.
\end{proposition}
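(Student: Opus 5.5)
The plan is to verify the identity $\Lie_{\tau}\mathcal{D}=-4\mathcal{E}$ directly, using the formula for the Lie derivative of an operator recalled before Eq.\ref{Eq.46}: $\Lie_\tau\mathcal{D}=\mathcal{D}'\tau-\tau'\mathcal{D}-\mathcal{D}(\tau')^*$. Since $\mathcal{D}$ has constant entries ($c_1=0$, so $\mathcal{D}=\begin{pmatrix}0&\partial\\ \partial&0\end{pmatrix}$), its Fréchet derivative vanishes and the task reduces to computing
\begin{equation*}
-\Lie_\tau\mathcal{D}=\tau'\mathcal{D}+\mathcal{D}(\tau')^*,
\end{equation*}
where $\tau'$ is the $2\times2$ matrix of (pseudo)differential operators $(\tau_i)'_{u_j}$. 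The target is $4\mathcal{E}$ with $\mathcal{E}=\begin{pmatrix}\mathcal{J}_1&\mathcal{J}_2\\ \mathcal{J}_2&0\end{pmatrix}$ and $\mathcal{J}_1=2u_1\partial+u_1^{(1)}$. The existence of some such $\tau$ is guaranteed abstractly by Theorem 2.9 together with the exactness of the variational complex; the content of the proposition is that the explicit $\tau$ works. So I would check it entry by entry.

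First I would split $\tau$ into three pieces and treat them separately, using linearity of $\tau\mapsto\Lie_\tau\mathcal{D}$. The first is the linear part $-2c_2u_1^{(2)}$, $-2c_2u_2^{(2)}$. Its Fréchet derivative is $-2c_2\partial^2\,\mathrm{Id}$, which is self-adjoint, so it contributes $-2c_2(\partial^2\mathcal{D}+\mathcal{D}\partial^2)=-4c_2\partial^3$ in the off-diagonal entries. That is exactly $4(-c_2\partial^3)$ as required.

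The second is the quadratic, nonlocal part: $u_2^{(1)}\partial^{-1}u_1+u_1^{(1)}\partial^{-1}u_2+5u_1u_2$ and $u_2^{(1)}\partial^{-1}u_2+\tfrac52u_2^2$. This is the analogue of the KdV master symmetry $u^{(2)}+\ldots+u^{(1)}\partial^{-1}u$. Here I would compute the Fréchet derivatives, for instance $(u_2^{(1)}\partial^{-1}u_2)'=u_2^{(1)}\partial^{-1}+\partial^{-1}u_2\cdot\partial$, and use $(\partial^{-1})^*=-\partial^{-1}$. The nonlocal terms should then cancel between $\tau'\mathcal{D}$ and $\mathcal{D}(\tau')^*$, leaving local first-order operators that combine to $4(2u_i\partial+u_i^{(1)})$ in the correct slots. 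The structure of $\mathcal{D}$ swaps indices, which explains why the $(1,1)$ slot receives $\mathcal{J}_1$ from the mixed terms of $\tau_1$ and the $(2,2)$ slot stays zero.

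The third is the gradient-type part $\partial\,\delta_{u_j}\int x\,h$. Writing it as $\mathcal{D}\,\mathbf{d}F$ with $F=\int x h$, I would use the standard fact that $\tau'\mathcal{D}+\mathcal{D}(\tau')^*$ vanishes when $\tau=\mathcal{D}\mathbf{d}F$. The reason is that the Fréchet derivative of a variational gradient is self-adjoint, so $\tau'=\mathcal{D}H$ with $H=H^*$, and then $\mathcal{D}H\mathcal{D}+\mathcal{D}H^*\mathcal{D}^*=\mathcal{D}H\mathcal{D}-\mathcal{D}H\mathcal{D}=0$. The explicit $x$-dependence does not spoil the self-adjointness of the Hessian, although the non-autonomous setting deserves a remark. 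I therefore expect these terms, which carry the parameters $a,b$, to contribute zero; this is why $a,b$ remain arbitrary. I also need to confirm that the two components given in Eq.\ref{Eq.49}--\ref{Eq.50} really form $\mathcal{D}\mathbf{d}F$ for a single $F$, the extra $-2ac_2u_2^{(1)2}$ term being a separate Hamiltonian vector field.

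The main obstacle is the second step. Tracking the $\partial^{-1}$ terms, and fixing the constants $5$ and $\tfrac52$ so that the coefficients come out as exactly $4\cdot2u_i\partial+4u_i^{(1)}$, requires careful operator bookkeeping. I would do it first in the $(2,2)$ entry, which is KdV-like and only involves $u_2$, and require that it vanish. I would then do the $(1,2)$ entry and deduce the $(2,1)$ entry by skew-adjointness of the result.
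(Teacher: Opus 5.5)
Your proposal is correct and follows the paper's route. The paper likewise discards the $x$-weighted gradient terms: they form the single $\mathcal{D}$-Hamiltonian field $\mathcal{D}\mathbf{d}\int x\big(\tfrac32bu_1u_2^2-c_2bu_1u_2^{(2)}-2ac_2(u_2^{(1)})^2\big)$, so $\Lie_\tau\mathcal{D}=\Lie_{\hat\tau}\mathcal{D}$, and the direct computation with $\hat\tau$ does give $-4\mathcal{E}$. One small bookkeeping slip: the $(2,2)$ entry involves only $\tau'_{21}=0$ and vanishes trivially, while the KdV-like operator $\tau'_{22}$ (equal to $\tau'_{11}$) enters the off-diagonal entries.
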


Notice how the semi-direct structure of $\mathfrak{bms}_3$ is suggested by the above generator.

We purposefully chose to write the components of $\tau$ as above, since then the Lie derivative of $\mathcal{D}$ along it simplifies as 
\begin{equation}
    \Lie_\tau\mathcal{D}=\Lie_{\hat{\tau}}\mathcal{D}, \hspace{0.5cm} \hat{\tau}=\hat{\tau}_1\frac{\partial}{\partial u_1}+\hat{\tau}_2\frac{\partial}{\partial u_2}\hspace{0.1cm},
    \label{Eq.51}
\end{equation}
where 
\begin{equation}
    \begin{split}
    \hat{\tau}_1&=-2c_2u_1^{(2)}+u_2^{(1)}\partial^{-1}\cdot u_1+u_1^{(1)}\partial^{-1}\cdot u_2+5u_1u_2\hspace{0.1cm},\\
    \bar{\tau}_2&=-2c_2u_2^{(2)}+u_2^{(1)}\partial^{-1}\cdot u_2+\frac{5}{2}u_2^2\hspace{0.1cm}. 
    \end{split}
    \label{Eq.52}
\end{equation}

Actually, the $\tau-$induction scheme fixes $b\in\{0,1\}$. This can be easily seen from Eq.\ref{Eq.47} at $n=0$, where 
\begin{equation}
    \big[\tau,\bar{\mathcal{K}_0} \big]=\begin{pmatrix}
        b\tau_1^{(1)}+a\tau_2^{(1)}-\tau_{11}'\mathcal{K}_{0,1}-\tau_{12}'\mathcal{K}_{0,2} \\
        b\tau_2^{(1)}-\tau_{22}'\mathcal{K}_{0,2}
    \end{pmatrix}=\begin{pmatrix}
        \mathcal{J}_1\cdot\delta_{u_1}\widetilde{\mathcal{H}}_1+\mathcal{J}_2\cdot\delta_{u_2}\widetilde{\mathcal{H}}_2 \\
        \mathcal{J}_2\cdot\delta_{u_1}\widetilde{\mathcal{H}}_1
    \end{pmatrix}\hspace{0.1cm},
    \label{Eq.53}
\end{equation}
holds\footnote{We denote by $\tau_{ij}'$ the entries of the $2\times2$ matrix representing the Fréchet derivative of $\tau$.} in the aforementioned condition. 

The case $b=1$ mimics the situation tackled in \cite{Fue.al.17}, while $b=0$ leads to another sequence of Hamiltonians in involution, commuting with the sequence for $b=1$. The two sequences are related to one another, and this translates at the level of the Hamiltonian vector fields.

Therefore, the Lenard procedure and the $\tau-$scheme are intrinsically related. The correspondence between the two is also recovered at the level of Hamiltonians
\begin{equation}
    \widetilde{\mathcal{H}}_{n+1}\propto\big(\tau,\mathrm{d}\widetilde{\mathcal{H}}_n\big)\hspace{0.1cm},
    \label{Eq.54}
\end{equation}
where $(\cdot,\cdot)$ is the natural pairing between $1-$forms and evolutionary vector fields.

For instance, one checks that 
\begin{equation}
    \widetilde{\mathcal{H}}_0=\frac{1}{3}\big(\tau,\mathrm{d}\widetilde{\mathcal{H}}_{-1}\big), \hspace{0.5cm} \widetilde{\mathcal{H}}_1=\frac{1}{5}\big(\tau,\mathrm{d}\widetilde{\mathcal{H}}_0\big), \hspace{0.2cm} \cdots
    \label{Eq.55}
\end{equation}
\newpage

\section{An integrable AdS$_3$ structure and its connection with bms$_3$}

The AdS$_3$ spacetime is a manifold diffeomorphic to $\mathrm{S}^1\times\mathbb{R}^2$ whose isometry group is $\mathrm{O}(2,2)$. We have a diffeomorphism with respect to the quotient of this group with its stabilizer $\mathrm{O}(2,1)$. 

Prior to the BMS$_3$ study, \cite{BroHen86} looked at the transformations that preserved the fall-off conditions - so-called \textit{Brown-Henneaux conditions} - of an \textit{asymptotically AdS$_3$ spacetime}. They showed that the corresponding asymptotic isometry algebra was given by two copies of the Witt algebra, enhancing the original $\mathfrak{so}(2,2)$ case. 

Again, we consider the centrally extended version of this asymptotically AdS$_3$ algebra, which we shall denote $\widehat{\mathfrak{ads}}_3$.

\begin{definition}
    The centrally extended Lie algebra $\widehat{\mathfrak{ads}}_3$ is the Virasoro direct sum
    \begin{equation}
        \widehat{\mathfrak{ads}}_3:=\mathfrak{vir}\oplus\mathfrak{vir}\hspace{0.1cm},
        \label{Eq.56}
    \end{equation}
    whose Lie bracket derives from the one of the Virasoro algebra. Denoting by $x_m$ and $\bar{x}_m$ the generators, it is given in Fourier modes by
    \begin{equation}
        \begin{split}
            [x_m,x_n]&=(m-n)x_{m+n}+\big(cm^3-\lambda m\big)\delta_{n+m,0}\hspace{0.1cm}, \\
            [x_m,\bar{x}_n]&=0\hspace{0.1cm}, \\
            [\bar{x}_m,\bar{x}_n]&=(m-n)\bar{x}_{m+n}+\big(\bar{c}m^3-\bar{\lambda}m\big)\delta_{n+m,0}\hspace{0.1cm},
        \end{split}
        \label{Eq.57}
    \end{equation}
    where $\lambda$, $\bar{\lambda}\in\mathbb{R}$ and $c=\bar{c}=\frac{3l}{2G}$ represent the Brown-Henneaux central charges.
    \end{definition}

Since any translation $\lambda\delta_{n+m,0} \longmapsto\lambda'\delta_{n+m,0}$ is cohomologous to the original central extension, except when stated otherwise, we confuse $\lambda\equiv\bar{\lambda}\equiv 1$.

Let us introduce the bilinear operation 
\begin{equation}
    \big[(\xi_1,\xi_2),(\eta_1,\eta_2)\big]_{\widetilde{\Omega}_1}:=\big([\xi_1,\eta_1]_{\mathrm{Vect}(S^1)}, [\xi_2,\eta_2]_{\mathrm{Vect}(S^1)}\big)\hspace{0.1cm},
    \label{Eq.58}
\end{equation}
together with the following two-parametric cocycle
\begin{equation}
    \big<(\xi_1,\xi_2),(\eta_1,\eta_2)\big>_{}:=\int\bigg(c\xi_1^{(3)}-\xi_1^{(1)}\bigg)\eta_1+\int\bigg(\bar{c}\xi_2^{(3)}-\xi_2^{(1)}\bigg)\eta_2 \hspace{0.1cm}.
    \label{Eq.59}
\end{equation}

\begin{proposition}
    An $\widehat{\mathfrak{ads}}_3$ bi-Hamiltonian structure is given by:
    \begin{equation}
        \mathcal{E}=\begin{pmatrix}
            -c\partial^3+2u_1\partial+u_1^{(1)} & 0 \\
            0 & -\bar{c}\partial^3+2u_2\partial+u_2^{(1)}
        \end{pmatrix}\hspace{0.1cm}, \hspace{0.2cm} \mathrm{and} \hspace{0.2cm} 
        \mathcal{D}=\begin{pmatrix}
            \partial & 0 \\
            0 & \partial
        \end{pmatrix}\hspace{0.1cm}.
        \label{Eq.60}
    \end{equation}
\end{proposition}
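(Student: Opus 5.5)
The plan mirrors the $\mathfrak{bms}_3$ construction of Proposition 2.4, now for the direct sum $\mathfrak{vir}\oplus\mathfrak{vir}$. There are three things to show: $\mathcal{E}$ is Hamiltonian, $\mathcal{D}$ is Hamiltonian, and the two are compatible.

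First, for $\mathcal{E}$ without the central terms, I will invoke Theorem 2.3 with the bilinear operation Eq.\ref{Eq.58} on $\widetilde{\Omega}^1$. Since Eq.\ref{Eq.58} is the direct sum of two copies of the (sign-reversed) $\mathrm{Vect}(S^1)$ bracket, skew-symmetry and the Jacobi identity hold componentwise, so $\widetilde{\Omega}^1$ is a Lie algebra. I then identify the operator from the pairing Eq.\ref{Eq.32}. The computation is the same one that produced the diagonal entries of Eq.\ref{Eq.34}: each $\mathrm{Vect}(S^1)$ factor gives $2u_i\partial+u_i^{(1)}$. Because the bracket does not mix the components, there are no off-diagonal entries.

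Equivalently, at the level of structure functions I would check that Eq.\ref{Eq.57} without central terms corresponds to
\[
\varphi^1_{11}(l,m)=\varphi^2_{22}(l,m)=l-m,
\]
with all other $\varphi^k_{ij}$ vanishing. These define a Lie algebra on $U$, so the criterion of \cite{GelDor81} applies. I would verify the normalization of the $c^\alpha_{ijkl}$ through Eq.\ref{Eq.30} on the $(1,1)$ entry only; the $(2,2)$ entry is identical by symmetry.

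Second, for the central part and compatibility, I use the result of \cite{Dor93} recalled before Eq.\ref{Eq.39}. A constant-coefficient operator $\mathcal{D}_0$ is automatically Hamiltonian, and $\mathcal{E}+\mathcal{D}_0$ is Hamiltonian if and only if $\langle\xi,\eta\rangle=-(\mathcal{D}_0\xi,\eta)$ is a $2$-cocycle on $\widetilde{\Omega}^1$. The form Eq.\ref{Eq.59} is a sum of two Gel'fand--Fuks-type cocycles, each shifted by the coboundary term $\xi_i^{(1)}\eta_i$. Each summand depends only on one component, and the bracket Eq.\ref{Eq.58} does not mix components, so the cocycle identity reduces to the Virasoro one in each factor. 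In Fourier modes this reads $\langle x_l,x_m\rangle=(cl^3-l)\delta_{l+m,0}$, $\langle\bar x_l,\bar x_m\rangle=(\bar c l^3-l)\delta_{l+m,0}$ and $\langle x_l,\bar x_m\rangle=0$, exactly the central terms in Eq.\ref{Eq.57}.

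Reading off $\mathcal{D}_0$ gives $\mathrm{diag}(-c\partial^3+\partial,\,-\bar c\partial^3+\partial)$. Hence the full operator $\mathrm{diag}(-c\partial^3+\partial+2u_1\partial+u_1^{(1)},\,\dots)$ is Hamiltonian. By the remark after Proposition 2.4, the $\partial$ terms can be removed by the shift $u_i\mapsto u_i+\beta_i$, which is a coboundary. This leaves $\mathcal{E}$ as in Eq.\ref{Eq.60}, and $\mathcal{E}$ is Hamiltonian.

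Third, compatibility of $\mathcal{E}$ with $\mathcal{D}=\mathrm{diag}(\partial,\partial)$ follows from the same criterion. The constant operator $\mathcal{D}$ corresponds to the cocycle $-\int(\xi_1^{(1)}\eta_1+\xi_2^{(1)}\eta_2)$. This is the coboundary of $\beta(x_{il})=\delta_{l,0}$, hence a cocycle, so $\mathcal{E}+\mu\mathcal{D}$ is Hamiltonian for every $\mu$. The cleanest alternative is the exactness argument of Eq.\ref{Eq.46}: take $X=\tfrac12\partial/\partial u_1+\tfrac12\partial/\partial u_2$. Along a constant field the Lie derivative is the Fréchet derivative, so $\Lie_X\mathcal{E}=\mathrm{diag}(\partial,\partial)=\mathcal{D}$. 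Then $\mathcal{E}-\lambda\mathcal{D}$ is simply $\mathcal{E}$ evaluated at the shifted point $u_i-\lambda/2$, which is Hamiltonian because $\mathcal{E}$ is.

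I expect no real obstacle. The only delicate point is bookkeeping: the sign convention in the footnote to Eq.\ref{Eq.32}, and the correct transcription through Eq.\ref{Eq.30}, which fixes the signs of $-c\partial^3$ and $2u\partial+u^{(1)}$. I will fix these by comparing with the already-established $(1,1)$ entry of Eq.\ref{Eq.41}.
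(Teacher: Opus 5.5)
Your proposal is correct and follows the same route the paper implicitly takes: Theorem 2.3 (Gel'fand--Dorfman) applied to the direct-sum bracket Eq.58 gives the linear part of $\mathcal{E}$. Dorfman's cocycle criterion applied to Eq.59 then splits the cocycle into a Gel'fand--Fuks part, which yields the $-c\partial^3,\,-\bar c\partial^3$ terms of $\mathcal{E}$, and a coboundary part, which yields the compatible constant operator $\mathcal{D}=\mathrm{diag}(\partial,\partial)$. Your extra check via the Liouville field $X=\tfrac12\partial/\partial u_1+\tfrac12\partial/\partial u_2$, mirroring Eq.46, is a valid and convenient confirmation of that compatibility.
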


From the direct structure of $\mathfrak{ads}_3$, these \textit{left} and \textit{right} KdV operators should not be surprising (see \cite{PérTemTro16}). For the sake of clarity, we illustrate how this double KdV hierarchy is recovered at the level of the flows. 

As before, we look for a functional $\widetilde{\mathcal{H}}_{-1}=\int bu_1+au_2$, $a$, $b\in\mathbb{R}$ which is a Casimir with respect to $\mathcal{D}$. The first flow is 
\begin{equation}
    \bar{\mathcal{K}}_0=\begin{pmatrix}
        bu_1^{(1)} \\a u_2^{(1)}
    \end{pmatrix}\hspace{0.1cm}.
    \label{Eq.61}
\end{equation}

Similarly, $\widetilde{\mathcal{H}}_0=\int\frac{b}{2}u_1^2+\frac{a}{2}u_2^2$ and $\widetilde{\mathcal{H}}_1=\int\frac{b}{2}\bigg(u_1^3+cu_1^{{(1)}^2}\bigg)+\frac{a}{2}\bigg(u_2^3+\bar{c}u_2^{{(1)}^2}\bigg)$, and
\begin{equation}
    \bar{\mathcal{K}}_1=\begin{pmatrix}
        -cbu_1^{(3)}+3bu_1u_1^{(1)} \\ -a\bar{c}u_2^{{3)}}+3au_2u_2^{(1)}
    \end{pmatrix}\hspace{0.1cm}.
    \label{Eq.61}
\end{equation}

The Nijenhuis operator $\mathcal{R}:=\mathcal{E}\cdot\mathcal{D}^{-1}$ is well-defined. As for the $\mathfrak{bms}_3$ situation, we may prove that for $n\geq 0$, each $\bar{\mathcal{K}}_n=\mathcal{R}\bar{\mathcal{K}}_{n-1}$ is the characteristic of a symmetry of the $\mathfrak{ads}_3$ hierarchy. Therefore, we conclude that 
\begin{theorem}
    The $\mathfrak{ads}_3$ bi-Hamiltonian hierarchy is integrable. 
\end{theorem}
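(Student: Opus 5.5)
The plan is to mirror the argument already given for the $\mathfrak{bms}_3$ case. The key simplification is that here everything decouples, because $\mathcal{E}$ and $\mathcal{D}$ in Eq.\ref{Eq.60} are diagonal. Following the criterion recalled at the start of Section 2, there are three things to establish. First, $(\mathcal{E},\mathcal{D})$ is a genuine Hamiltonian pair. Second, the Lenard recursion $\bar{\mathcal{K}}_n=\mathcal{E}\mathbf{d}\widetilde{\mathcal{H}}_{n-1}=\mathcal{D}\mathbf{d}\widetilde{\mathcal{H}}_n$ can be continued indefinitely, which by \cite{Dor93} yields closed $1$-forms and Hamiltonians in involution for both brackets. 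Third, the flows $\bar{\mathcal{K}}_n=\mathcal{R}^n\bar{\mathcal{K}}_0$ mutually commute.

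For the first point I would invoke Theorem 2.3 with the bracket Eq.\ref{Eq.58}. This bracket is the direct sum of two copies of the (sign-adjusted) $\mathrm{Vect}(S^1)$ bracket, hence a Lie algebra on $\widetilde{\Omega}^1$, so the dispersionless part of $\mathcal{E}$ is Hamiltonian. Equivalently, the structure functions are $\varphi^1_{11}=\varphi^2_{22}=l-m$ with all others zero. Compatibility with the constant operator follows from the criterion of \cite{Dor93}: Eq.\ref{Eq.59} is a sum of Gel'fand--Fuks cocycles plus coboundaries, one on each summand, and is therefore a $2$-cocycle. This makes $\mathcal{E}-\lambda\mathcal{D}$ Hamiltonian for all $\lambda$. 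Alternatively, one notes as in Eq.\ref{Eq.46} that $\mathcal{D}=\Lie_X\mathcal{E}$ for the constant field $X=\tfrac12\partial/\partial u_1+\tfrac12\partial/\partial u_2$, so the pair is exact. Since $\mathcal{D}=\mathrm{diag}(\partial,\partial)$ is formally nondegenerate, $\mathcal{R}=\mathcal{E}\mathcal{D}^{-1}=\mathrm{diag}(\mathcal{J}_1\partial^{-1},\bar{\mathcal{J}}_2\partial^{-1})$ is Nijenhuis, where $\mathcal{J}_1=-c\partial^3+2u_1\partial+u_1^{(1)}$ and $\bar{\mathcal{J}}_2$ is the same operator with $c\to\bar c$, $u_1\to u_2$.

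The second point is the main obstacle: showing that each $\mathcal{R}^n\bar{\mathcal{K}}_0$ lies in the image of $\mathcal{D}$, i.e. that $\mathcal{E}\mathbf{d}\widetilde{\mathcal{H}}_{n-1}$ is a total derivative of a gradient. Here the decoupling reduces it to the classical KdV statement. The component $i$ of $\bar{\mathcal{K}}_n$ is a multiple of the $n$-th KdV flow in the variable $u_i$, with dispersion $c$ or $\bar c$; the multiples are $b$ and $a$ respectively, as in Eq.\ref{Eq.61}. For a single KdV operator it is standard (Lenard/Gel'fand--Dikii) that $\mathcal{J}\,\delta_u h_{n-1}=\partial\,\delta_u h_n$ is solvable in differential polynomials for all $n$. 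I would prove this either by citing it or, self-containedly, via the same argument as in the appendix for the $\mathfrak{bms}_3$ proposition. That argument works because $\mathbf{d}\widetilde{\mathcal{H}}_{n-1}$ is exact, so $\mathcal{E}\mathbf{d}\widetilde{\mathcal{H}}_{n-1}$ is exact relative to $\mathcal{D}$ by the vanishing of $\mathrm{H}^2$ of the variational complex used in Theorem 2.8. I would then set $\widetilde{\mathcal{H}}_n=b\,h_n[u_1;c]+a\,h_n[u_2;\bar c]$, which reproduces the displayed $\widetilde{\mathcal{H}}_0,\widetilde{\mathcal{H}}_1$.

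For commutativity I would repeat the paper's argument. $\bar{\mathcal{K}}_0$ is a linear combination of the translations $u_i^{(1)}\partial/\partial u_i$, which preserve the $x$-independent operators $\mathcal{E},\mathcal{D}$, so $\Lie_{\bar{\mathcal{K}}_0}\mathcal{R}=0$. Then $\Lie_{\bar{\mathcal{K}}_n}\mathcal{R}=\mathcal{R}^n\Lie_{\bar{\mathcal{K}}_0}\mathcal{R}=0$ by the Nijenhuis property, and $\Lie_{\bar{\mathcal{K}}_n}\bar{\mathcal{K}}_m=(\Lie_{\bar{\mathcal{K}}_n}\mathcal{R}^m)\bar{\mathcal{K}}_0-\mathcal{R}^m\Lie_{\bar{\mathcal{K}}_0}\bar{\mathcal{K}}_n=0$. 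Involution of the $\widetilde{\mathcal{H}}_n$ in both brackets then follows from the Lenard relations by the usual shifting argument, $\{\widetilde{\mathcal{H}}_n,\widetilde{\mathcal{H}}_m\}_{\mathcal{D}}=\{\widetilde{\mathcal{H}}_{n},\widetilde{\mathcal{H}}_{m-1}\}_{\mathcal{E}}=\{\widetilde{\mathcal{H}}_{n+1},\widetilde{\mathcal{H}}_{m-1}\}_{\mathcal{D}}=\cdots$, ending at a skew-symmetric pairing equal to its own negative. This completes integrability.
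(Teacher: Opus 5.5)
Your proposal is correct and takes essentially the paper's route: the Gel'fand--Dorfman Hamiltonian pair from the direct-sum bracket and the cocycle, the Nijenhuis operator $\mathcal{R}=\mathcal{E}\mathcal{D}^{-1}$, the Lenard recursion staying in the image of $\mathcal{D}$, and commutativity from $\Lie_{\bar{\mathcal{K}}_0}\mathcal{R}=0$; your reduction to two decoupled KdV hierarchies just makes explicit what the paper leaves at ``as for the $\mathfrak{bms}_3$ situation''. The one thing to drop is the alternative justification of the Lenard step via the vanishing of $\mathrm{H}^2$: the appendix is a direct integration-by-parts check, and what the step actually needs is that $u_i^{(1)}\,\delta_{u_i}\widetilde{\mathcal{H}}_n$ be a total derivative for each $i$ (automatic here because $\widetilde{\mathcal{H}}_n$ splits into $x$-translation-invariant $u_1$- and $u_2$-parts, and $(-c\partial^3+2u\partial+u^{(1)})f=\partial\bigl(-cf^{(2)}+2uf\bigr)-u^{(1)}f$), after which closedness and exactness of the preimage come from Dorfman's result and $\mathrm{H}^1=0$, not $\mathrm{H}^2$.
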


It is known how the \textit{flat limit} allows us to extract the asymptotically flat $\mathfrak{bms}_3$ algebra from the above $\mathfrak{ads}_3$ one. Finally, we show how the latter could be used to deduce the previous $\mathfrak{bms}_3$ bi-Hamiltonian structure from the two copies of KdV proper to $\mathfrak{ads}_3$.

Actually, because the flat limit acts at the level of the commutation relations, with the Gel'fand-Dorfman formalism in hand, it is automatic to see how it impacts the Hamiltonian structures. Following the standard construction in \cite{BarGomGon12}, we first shift the original $\mathfrak{ads}_3-$generators to 
\begin{equation}
    X_{m} := x_m-\bar{x}_m \hspace{0.3cm} \mathrm{and} \hspace{0.3cm} \alpha_{m}:=l^{-1}\big(x_m+\bar{x}_m\big)\hspace{0.1cm},
    \label{Eq.63}
\end{equation}
and one has the associated commutation relations: 
\begin{equation}
    \begin{split}
        [X_{m},X_{n}]&=(m-n)X_{m+n}+\big(c-\bar{c}\big)m^3\delta_{n+m,0} \hspace{0.1cm}, \\
        [X_{m},\alpha_{n}]&=(m-n)\alpha_{m+n}+\bigg(\frac{c+\bar{c}}{l}m^3-m\bigg)\delta_{n+m,0}\hspace{0.1cm}, \\
        [\alpha_{m},\alpha_{n}]&=l^{-2}(m-n)X_{m+n}+l^{-2}(c-\bar{c})m^3\delta_{n+m,0}\hspace{0.1cm}.
    \end{split}
    \label{Eq.64}
\end{equation}

With respect to the relabeling of the generators, we then obtain the bi-Hamiltonian structure: 
\begin{equation}
    \begin{pmatrix}
        (\bar{c}-c)\partial^3+2u_1\partial+u_1^{(1)} & -\frac{c+\bar{c}}{l}\partial^3+2u_2\partial+u_2^{(1)} \\
        -\frac{c+\bar{c}}{l}\partial^3+2u_2\partial+u_2^{(1)} & \frac{\bar{c}-c}{l^2}\partial^3+\frac{2}{l^2}u_1\partial+\frac{1}{l^2}u_1^{(1)}
    \end{pmatrix}\hspace{0.2cm} \mathrm{and} \hspace{0.2cm} \begin{pmatrix}
        0 & \partial \\
        \partial & 0
    \end{pmatrix}\hspace{0.1cm}.
    \label{Eq.65}
\end{equation}

Hence, taking the flat limit $l\to+\infty$, we do recover the bi-Hamiltonian structure attached to $\mathfrak{bms}_3$ (see Eq.\ref{Eq.41}) in the asymptotically flat case ($c_1=0$ and $c_2=3/G$). 

\textit{Remark.} Usually, we set $c_1:=\mathrm{lim_{l\to\infty}}\big(c-\bar{c}\big)$ and $c_2:=\mathrm{lim}_{l\to\infty} l^{-1}\big(c+\bar{c}\big)$. From the Brown-Henneaux charges, it is evident that the former vanishes and the latter is proportional to $G^{-1}$.

\textit{Remark.} It is well-known that the non-centrally extended $\mathfrak{bms}_3$ is isomorphic to the $2\mathrm{d}-$Galilean conformal algebra $\mathfrak{gca}_2$ (\cite{Bag10}); the construction above is similar to the Inönü-Wigner contraction of the Poincaré algebra to the Galilei algebra. At the classical level\footnote{See \cite{Obl18} for a discussion at the quantum level.}, both are identical. Thus, $\mathfrak{gca}_2$ has a bi-Hamiltonian integrable structure. In particular, it would be relevant to address how integrability could step in the question of flat space holography \cite{Bag10}.

\newpage

\section{A Lie-Poisson perspective on the bms$_3$ bi-Hamiltonian hierarchy}

For many integrable systems their phase spaces are the coadjoint orbits of some appropriate Lie group. This is justified from the standard fact that the dual space of a Lie algebra $\mathfrak{g}$ carries a natural \textit{Poisson} structure through the \textit{Lie-Poisson} bracket: 
\begin{equation}
    \big\{\tilde{\varphi},\tilde{\psi}\big\}_{\text{LP}}(m):=\big<m,\big[\mathbf{d}_m\tilde{\varphi},\mathbf{d}_m\tilde{\psi}\big]_\mathfrak{g} \big>, \hspace{0.2cm} m\in\mathfrak{g}^*, \tilde{\varphi},\tilde{\psi}\in C^\infty(\mathfrak{g}^*)\hspace{0.1cm}.
    \label{Eq.66}
\end{equation}

From \cite{Kir76, Wei83, AleMal94}, we know that the symplectic leaves of the Lie-Poisson bracket are the coadjoint orbits of $\mathfrak{g}$; and in particular, the Casimirs are coadjoint-invariant functions. Any $\widetilde{\mathcal{H}}\in C^\infty(\mathfrak{g}^*)$ defines a \textit{Hamiltonian vector field} $\kappa$ on $\mathfrak{g}^*$ via the Lie derivative by $\Lie_{\kappa}\tilde{\psi}=\big\{\widetilde{\mathcal{H}},\tilde{\psi}\big\}$, $\forall\tilde{\psi}\in C^\infty(\mathfrak{g}^*)$. Then, for $\tilde{\varphi}\in C^\infty(\mathfrak{g}^*)$, the Hamiltonian equations of motion with respect to the Lie-Poisson bracket are 
\begin{equation}
    \frac{d}{dt}m(t)=\text{ad}^*_\mathfrak{g}\mathbf{d}_{m(t)}\tilde{\varphi}\cdot m(t), \hspace{0.2cm} m\in \mathfrak{g}^*\hspace{0.1cm}.
    \label{Eq.67}
\end{equation}

We shall see the importance of such dynamical equations in the next section. 

The Lie-Poisson bracket is then a natural candidate to exhibit a Hamiltonian operator for a given Lie algebra, and the centrally extended $\mathfrak{bms}_3$ is no exception. 

Consider a point $m=(u_1,c_1;u_2,c_2)\in\mathfrak{bms}_3^*$. The differential of an arbitrary smooth function $\tilde{\varphi}$ in the dual is formally identified with  $\mathbf{d}_m\tilde{\varphi}\equiv\big(\delta_{u_1}\tilde{\varphi},\delta_{c_1}\tilde{\varphi}; \delta_{u_2}\tilde{\varphi},\delta_{c_2}\tilde{\varphi}\big)$. These variational components are determined via the following identity, for any $(j dx^2,a; pdx^2,b)\in\mathfrak{bms}_3^*$: 
\begin{equation}
    \bigg<\big(jdx^2,a;pdx^2,b\big),\bigg(\frac{\delta\tilde{\varphi}}{\delta u_1},\frac{\delta\tilde{\varphi}}{\delta c_1}; \frac{\delta\tilde{\varphi}}{\delta u_2}, \frac{\delta\tilde{\varphi}}{\delta c_2}\bigg)\bigg>=\frac{d}{d\varepsilon}\bigg|_{\varepsilon=0}\tilde{\varphi}\big(u_1+\varepsilon j, c_1+\varepsilon a; u_2+\varepsilon p, c_2+\varepsilon b\big),
    \label{Eq.68}
\end{equation}
whose pairing on the left-hand side is the one chosen to identify $\mathfrak{bms}_3$ with its dual. Then, a direct computation leads to (recall Eq.\ref{Eq.42}): 

\begin{equation}
    \big\{\tilde{\varphi},\tilde{\psi}\big\}_\mathrm{LP}^{\mathfrak{bms}_3^*}(m)=\int dx\bigg(\delta_{u_1}\tilde{\varphi}\cdot\mathcal{J}_1\delta_{u_1}\tilde{\psi}+\delta_{u_1}\tilde{\varphi}\cdot\mathcal{J}_2\delta_{u_2}\tilde{\psi}+\delta_{u_2}\tilde{\varphi}\cdot\mathcal{J}_2\delta_{u_1}\tilde{\psi}\bigg)\equiv\big\{\tilde{\varphi},\tilde{\psi}\big\}_\mathcal{E}\hspace{0.1cm}.
    \label{Eq.69}
\end{equation}

We recognize that the Lie-Poisson bracket of $\mathfrak{bms}_3$ coincides with the formal bracket associated with the first Hamiltonian structure identified above, which is a natural result since we have previously seen how such a structure was connected to a Kirillov-Konstant one. 

Regarding a second compatible Hamiltonian structure, it can be found from the Lie-Poisson bracket at a fixed point $m_0\in\mathfrak{bms}_3^*$ : we speak about the \textit{frozen point method}, and the bracket is denoted by $\{,\}_\mathrm{0}$:
\begin{equation}
    \big\{\tilde{\varphi},\tilde{\psi}\big\}^{\mathfrak{bms}_3^*}_\mathrm{0}(m_0):=\big<m_0,\big[\mathbf{d}_m\tilde{\varphi},\mathbf{d}_m\tilde{\psi}\big]_{\mathfrak{bms}_3}\big>\hspace{0.1cm}.
    \label{Eq.70}
\end{equation}

It is immediate to check that the compatibility between the Lie-Poisson bracket and its frozen partner holds for every choice of $m_0$.

We wonder how to find such a frozen point. We first state the result before justifying it.

\begin{proposition}
    A Lie-Poisson bi-Hamiltonian hierarchy for $\mathfrak{bms}_3$ is found at $m_0=\big(\mathds{1}_{c_1\not=0}\big\{\frac{1}{2}\big\}\cdot dx^2,0; \frac{1}{2}\cdot dx^2,0\big)$:\vspace{-0.5cm}
    \begin{equation}
        \big\{\tilde{\varphi},\tilde{\psi}\big\}^{\mathfrak{bms}_3^*}_\mathrm{0}(m_0)\equiv\big\{\tilde{\varphi},\tilde{\psi}\big\}_\mathcal{D}\hspace{0.1cm}.
        \label{Eq.71}
    \end{equation}
\end{proposition}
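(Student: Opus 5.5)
The plan is to read off the frozen bracket directly from the explicit form Eq.\ref{Eq.69} of the Lie--Poisson bracket, since Eq.\ref{Eq.70} differs from Eq.\ref{Eq.66} only in that the point at which the pairing is taken is fixed to $m_0$ instead of the running point $m$.

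\textbf{Step 1: the operator at a general point.} First I would note that the computation leading to Eq.\ref{Eq.69} uses only the $\widehat{\mathfrak{bms}}_3$ bracket Eq.\ref{Eq.16} and the pairing Eq.\ref{Eq.19}. It therefore holds verbatim at any point $m=(u_1,c_1;u_2,c_2)$ of $\mathfrak{bms}_3^*$. Concretely:
\begin{itemize}
\item The $\mathrm{Vect}(S^1)$ parts of the bracket, paired with $u_1$ and $u_2$, produce the terms $2u_i\partial+u_i^{(1)}$ after integration by parts.
\item The Gel'fand--Fuks cocycles $\omega$, paired with the central components $c_1,c_2$, produce the terms $-c_i\partial^3$.
\end{itemize}
Thus $\{\tilde\varphi,\tilde\psi\}_{\mathrm{LP}}(m)=\int \mathbf{d}\tilde\varphi\cdot\mathcal{E}(m)\,\mathbf{d}\tilde\psi$, where $\mathcal{E}(m)$ is the matrix of Eq.\ref{Eq.41} with its entries $\mathcal{J}_i$ evaluated at the coordinates of $m$. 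This expression is linear in $m$.

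\textbf{Step 2: evaluation at $m_0$.} The point $m_0$ has constant densities $u_1=\mathds{1}_{c_1\neq0}\cdot\tfrac12$ and $u_2=\tfrac12$, and vanishing central components. Substituting these:
\begin{itemize}
\item all $\partial^3$ terms drop out;
\item the derivatives $u_i^{(1)}$ vanish;
\item $2u_i\partial$ becomes $\mathds{1}\,\partial$.
\end{itemize}
Hence $\mathcal{J}_1(m_0)=\mathds{1}_{c_1\neq0}\partial$ and $\mathcal{J}_2(m_0)=\partial=\mathds{1}_{c_2\neq0}\partial$, using the standing assumption $c_2\neq0$. This is exactly the matrix $\mathcal{D}$ of Eq.\ref{Eq.41}, which gives Eq.\ref{Eq.71}. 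This also explains the normalization $\tfrac12$: it is the constant that cancels the factor $2$ in $2u_i\partial$. The indicator in the first slot is what allows $\mathcal{D}_{11}$ to vanish in the flat case $c_1=0$.

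\textbf{Step 3: the bi-Hamiltonian property.} Linearity in $m$ from Step 1 gives $\{\cdot,\cdot\}_{\mathrm{LP}}(m)+\mu\{\cdot,\cdot\}_0(m_0)=\{\cdot,\cdot\}_{\mathrm{LP}}(m+\mu m_0)$, where the right-hand side is the Lie--Poisson bracket after the constant translation $m\mapsto m+\mu m_0$. A translation by a constant preserves the Jacobi identity. The pencil is therefore Poisson, and compatibility holds. This agrees with the Remark after Eq.\ref{Eq.41} that shifting $u_i\mapsto u_i+\beta_i$ is a coboundary shift. Together with Eq.\ref{Eq.69} identifying the Lie--Poisson bracket with $\{\cdot,\cdot\}_{\mathcal{E}}$, the pair $(\{\cdot,\cdot\}_{\mathrm{LP}},\{\cdot,\cdot\}_0)$ reproduces $(\mathcal{E},\mathcal{D})$, and hence the hierarchy of Section 2.

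\textbf{Main obstacle.} I expect the main difficulty to be bookkeeping of conventions rather than anything conceptual. Three points need care:
\begin{itemize}
\item the sign choice of the $\mathrm{Vect}(S^1)$ bracket (see the footnote to Eq.\ref{Eq.32});
\item the sign conventions in the pairing Eq.\ref{Eq.68};
\item the requirement that the frozen central components are zero, rather than equal to $c_1,c_2$.
\end{itemize}
All three must be handled consistently so that the frozen operator comes out as $+\mathcal{D}$ with no $\partial^3$ residue. I would first verify this on the $(1,1)$ Virasoro block alone and then extend it to the off-diagonal semidirect entries.
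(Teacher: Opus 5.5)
Your proposal is correct and is essentially the paper's argument. The paper evaluates the coadjoint action at a general constant frozen point and writes it as $\mathcal{I}\cdot\mathcal{D}$ with the inertia operator of Eq.~\ref{Eq.72}; choosing $\theta_1=\iota_1=\iota_2=0$, $\theta_2=1$ makes $\mathcal{I}=\mathrm{Id}$, which is exactly your direct substitution of $m_0$ into $\mathcal{E}(m)$, and your translation argument $m\mapsto m+\mu m_0$ spells out the compatibility that the paper only calls ``immediate''.
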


The situation extends the Virasoro scenario discussed in \cite{KheMis02}. Namely, let $(j_0\cdot dx^2,\mathds{1}_{c_1\not=0}\{c_1^0\}; p_0\cdot 
dx^2,c_2^0)$ be a constant frozen point in $\mathfrak{bms}_3^*$ and identify $\big(X\partial_x; \alpha\partial_x\big)$ with the variational components with respect to $u_1$ and $u_2$ of $\mathbf{d}_m\tilde{\varphi}$ for $\tilde{\varphi}\in C^\infty(\mathfrak{bms}_3^*)$. The local coadjoint action writes
\begin{equation}
    \mathrm{ad}^*_{\mathfrak{bms}_3}\big(X;\alpha\big)\cdot\big(j_0,\mathds{1}_{c_1\not=0}\{c_1^0\}; p_0,c_2^0\big)=\mathcal{\mathcal{I}}\cdot\mathcal{D}\begin{pmatrix}
        X \\ \alpha
    \end{pmatrix} \hspace{0.2cm} \mathrm{for} \hspace{0.2cm} \mathcal{I}=\begin{pmatrix}
        \theta_2-\iota_2\partial^2 & \theta_1-\iota_1\partial^2 \\
        0 & \theta_2-\iota_2\partial^2
    \end{pmatrix}, 
    \label{Eq.72}
\end{equation}
where we have defined $\theta_1=2(j_0-\mathds{1}_{c_1\not=0}\{p_0\})$, $\theta_2=2p_0$, $\iota_1=\mathds{1}_{c_1\not=0}\{c_1^0-c_2^0\}$, $\iota_2=c_2^0$.

We view $\mathcal{I}:\mathfrak{bms}_3\to\mathfrak{bms}_3^*$ as an inertia operator in the terminology developed by \cite{Arn89}. More importantly, such an operator may be directly related to the pairing between $\mathfrak{bms}_3$ and its dual, via the so-called $H^1-$energies (a $4-$parameter family of quadratic forms on $\mathfrak{bms}_3$): 
\vspace{-0.5cm}
\begin{equation}
    \begin{split}
    &\big<(j\partial_x,c_1; p\partial_x,c_2),(X\partial_x,a;\alpha\partial_x,b)\big>_{\mathrm{H}^1_{(\theta_1,\iota_1;\theta_2,\iota_2)}}\\
    &=\int_{\mathrm{S}^1}dx\bigg(\theta_1j\alpha+\theta_2\big(jX+p\alpha\big)+\iota_1j^{(1)}\alpha^{(1)}+\iota_2\big(j^{(1)}X^{(1)}+p^{(1)}\alpha^{(1)}\big)\bigg)+c_1a+c_2b\\
    &=\big<\big(\mathcal{I}(j,c_1; p,c_2)\cdot dx^2\big),  \big(X\partial_x, a; \alpha\partial_x,b\big) \big>_{\mathfrak{bms}_3}\hspace{0.1cm}.
    \end{split}
    \label{Eq.73}
\end{equation}

We have slightly abused the notation in the last equality. Since $c_{1,2}$ are real, they are not affected by $\mathcal{I}$. Moreover, it should be understood that the vector field components $(j\partial_x;p\partial_x)$ are sent to quadratic densities via the action $\mathcal{I}(j,p)^\top$. 

It is reminiscent of the situation depicted in \cite{KheMis02} for the Virasoro algebra, where it was shown how extra Hamiltonian hierarchies, such as the Camassa-Holm equations \cite{CamHol93} or the Hunter-Saxton equations \cite{HunSax91}, are linked to specific $\mathrm{H}^1-$energies on $\mathfrak{vir}$.

The non-degeneracy nature of the above operator $\mathcal{I}$ depends on $\theta_2-\iota_2\partial_x^2$; the latter is non-degenerate whenever $\theta_2\not=0$, while for $\theta_2=0$, it has a non-trivial kernel made of constant vector fields on $\mathrm{S}^1$.

The $\mathfrak{bms}_3$ bi-Hamiltonian hierarchy we have considered so far happens to be attached to the natural choice: $\theta_1=0,\iota_1=0; \theta_2=1,\iota_2=0$. In fact, then $\mathcal{I}$ is simply the identity operator.
By construction, the "freezing" point is of the form given in \textbf{Proposition.4.1} and it can easily be checked that the resulting constant bracket confuses with the one for $\mathcal{D}$. Equivalently, at the level of the Hamiltonian equations of motion, we have that: 
\begin{proposition}
    For some $\widetilde{\mathcal{H}}_n\in\mathrm{C}^\infty(\mathfrak{bms}_3^*)$, the Lie-Poisson flows on the (regular) dual of $\mathfrak{bms}_3$ are bi-Hamiltonian: 
    \begin{equation}
    \begin{split}
        \frac{d}{dt}\big(u_1dx^2;u_2dx^2\big)&=\mathrm{ad}^*_{\mathfrak{bms}_3}\big(\delta_{u_1}\widetilde{\mathcal{H}}_n\cdot\partial_x;\hspace{0.1cm}\delta_{u_2}\widetilde{\mathcal{H}}_n\cdot\partial_x\big)\cdot(u_1dx^2,c_1; u_2dx^2,c_2)\\
        &=\mathrm{ad}^*_{\mathfrak{bms}_3}\big(\delta_{u_1}\widetilde{\mathcal{H}}_{n+1}\cdot\partial_x;\hspace{0.1cm}\delta_{u_2}\widetilde{\mathcal{H}}_{n+1}\cdot\partial_x\big)\cdot\bigg(\frac{1}{2}dx^2,0;\frac{1}{2}dx^2,0\bigg)\hspace{0.1cm}.
    \end{split}
    \label{Eq.74}
    \end{equation}
\end{proposition}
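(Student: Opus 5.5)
The approach is to translate both lines of Eq.\ref{Eq.74} into matrix differential operators acting on $(\delta_{u_1}\widetilde{\mathcal{H}},\delta_{u_2}\widetilde{\mathcal{H}})^\top$. We then identify them with $\mathcal{E}$ and $\mathcal{D}$ from Eq.\ref{Eq.41}. Once that is done, the statement reduces to the Lenard relation $\bar{\mathcal{K}}_{n+1}=\mathcal{E}\mathbf{d}\widetilde{\mathcal{H}}_n=\mathcal{D}\mathbf{d}\widetilde{\mathcal{H}}_{n+1}$ established in Section 2.

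\textbf{Step 1: the first line is $\mathcal{E}$.} Using the coadjoint action of Proposition 1.2 with $X=\delta_{u_1}\widetilde{\mathcal{H}}_n$, $\alpha=\delta_{u_2}\widetilde{\mathcal{H}}_n$, $j=u_1$, $p=u_2$, we read off the two components. Since $Xu_1^{(1)}+2u_1X^{(1)}=(2u_1\partial+u_1^{(1)})X$, the first component is $\mathcal{J}_1X+\mathcal{J}_2\alpha$ and the second is $\mathcal{J}_2X$. This is exactly $\mathcal{E}\,\mathbf{d}\widetilde{\mathcal{H}}_n$, with $\mathcal{J}_i$ as in Eq.\ref{Eq.42}. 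It agrees with the identification of the Lie--Poisson bracket with $\{,\}_\mathcal{E}$ in Eq.\ref{Eq.69}, up to the sign convention on $\mathrm{Vect}(S^1)$ fixed in the footnote to Eq.\ref{Eq.32}; I will track that sign so that both lines carry the same one.

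\textbf{Step 2: the second line is $\mathcal{D}$.} We evaluate the same coadjoint formula at the frozen point $m_0=(\tfrac12\mathds{1}_{c_1\neq0},0;\tfrac12,0)$. All central terms vanish and the densities are constant, so the action reduces to $\delta j=2\cdot\tfrac12\mathds{1}_{c_1\neq0}X^{(1)}+2\cdot\tfrac12\alpha^{(1)}$ and $\delta p=X^{(1)}$. This is Eq.\ref{Eq.72} with $\theta_1=0$, $\theta_2=1$, $\iota_i=0$, so $\mathcal{I}=\mathrm{Id}$, and it coincides with $\mathcal{D}$ of Eq.\ref{Eq.41} (with $c_2\neq0$ assumed), as recorded in Proposition 4.1. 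One point needs care: in the flat case $c_1=0$ the $(1,1)$ entry of $\mathcal{D}$ must vanish, and the indicator in $m_0$ is chosen precisely so that it does.

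\textbf{Step 3: choice of Hamiltonians.} We take $\widetilde{\mathcal{H}}_n$ to be the Hamiltonians of the Lenard chain, which exist by Proposition 2.7 and the surrounding argument (Theorem 2.8). Then $\mathcal{E}\mathbf{d}\widetilde{\mathcal{H}}_n=\bar{\mathcal{K}}_{n+1}=\mathcal{D}\mathbf{d}\widetilde{\mathcal{H}}_{n+1}$ for all $n\geq-1$, which is Eq.\ref{Eq.74}. The index shift in Eq.\ref{Eq.74} matches the convention in Eq.\ref{Eq.21}, where $\bar{\mathcal{K}}_n=\mathcal{E}\mathcal{P}_{n-1}=\mathcal{D}\mathcal{P}_n$.

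\textbf{Main obstacle.} The difficulty is bookkeeping rather than conceptual: keeping the sign and orientation conventions consistent. These are (i) the opposite bracket on $\mathrm{Vect}(S^1)$, (ii) whether $\mathrm{ad}^*$ produces $\mathcal{E}$ or $-\mathcal{E}$ (and likewise for $\mathcal{D}$), and (iii) the placement of the supertranslation entry in the first row versus the second. I would first verify on $\widetilde{\mathcal{H}}_{-1}=\int bu_1+au_2$ that both sides reproduce $\bar{\mathcal{K}}_0$ of Eq.\ref{Eq.43}. That check fixes all signs, after which the general case follows by linearity of the operators in $\mathbf{d}\widetilde{\mathcal{H}}$.
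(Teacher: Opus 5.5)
Your proposal is correct and follows the paper's own argument: the paper identifies the Lie--Poisson bracket with $\{,\}_\mathcal{E}$ (Eq.~\ref{Eq.69}) and the frozen bracket at $m_0$ with $\{,\}_\mathcal{D}$ via Eq.~\ref{Eq.72} with $\theta_1=\iota_1=\iota_2=0$, $\theta_2=1$, $\mathcal{I}=\mathrm{Id}$ (Proposition~4.1), and then reads Eq.~\ref{Eq.74} off the Lenard relation $\mathcal{E}\mathbf{d}\widetilde{\mathcal{H}}_n=\mathcal{D}\mathbf{d}\widetilde{\mathcal{H}}_{n+1}$ of Section~2, exactly as in your Steps 1--3. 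The only slip is in the citations: the Lenard-chain result you need is Proposition~2.8, with integrability in Theorem~2.9, not Proposition~2.7 and Theorem~2.8.
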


A given flow is then seen either \textit{1)} from a first Lie-Poisson Hamiltonian structure on the whole dual of $\mathfrak{bms}_3$, or \textit{2)} from a second compatible Hamiltonian structure on a specific coadjoint orbit.

Of course, the asymptotic flat case is included in this viewpoint. The vanishing of $c_1$ implies to pick $m_0=\big(0,0;1/2,0)$, which thus lead to a compatible Hamiltonian operator of the form 
\begin{equation}
    \mathcal{D}=\begin{pmatrix}
        0 & \partial \\ \partial & 0
    \end{pmatrix}\hspace{0.1cm}.
    \label{Eq.75}
\end{equation}

Notice that we recover this result by a shift of $-1/2$ on $j_0$. This is in accordance, in the Gel'fand-Dorfman formalism, with a shift of the cocyle by a coboundary. 

More globally, a $\mathfrak{bms}_3$ bi-Hamiltonian pair should exist with regard to $\mathrm{H}^1_{(\theta_1,\iota_1;\theta_2,\iota_2)}$ whose frozen bracket is computed at 
\begin{equation}
    m_0=\bigg(\frac{\theta_1+\mathds{1}_{c_1\not=0}\{\theta_2\}}{2}dx^2, \mathds{1}_{c_1\not=0}\{\iota_1+\iota_2\}; \hspace{0.1cm} \frac{\theta_2}{2}dx^2, \iota_2\bigg)\hspace{0.1cm}.
    \label{Eq.76}
\end{equation}

\textit{Remark.} A detailed study in terms of \textit{Euler equations} \cite{Arn89} is suggested. It would be based on the behavior at the level of the group rather than the (dual) Lie algebra. Given a dynamical system $\frac{dv}{dt}=\mathcal{F}(v)$ on $\mathfrak{g}$, the so-called Euler equation then refers to the description of the evolution of the velocity vector $v$ along a geodesic with respect to a right-invariant metric\footnote{Equivalently a left-invariant one; the difference being a global minus sign in the Euler equation.} of the Lie group $G$. For instance, regarding the KdV equation, we could characterize it as the Euler equation describing the geodesic flow on the Virasoro group given a right-invariant $L^2$-metric (\cite{OvsKhe87,Mis97}). We refer the interested reader to \cite{ArnKhe98} for more details.

Actually, it is instructive to see how this second compatible Hamiltonian operator is related to a choice of coadjoint orbit. The codimension of the orbit going through $m_0$ is usually called the \textit{codimension} of the Lie-Poisson pair $\{,\}_\mathrm{LP}$ and $\{,\}_0$.

Let $\mathcal{W}_m$ and $G_m$ be, respectively, the coadjoint orbit of $m$ and the little group at $m$ with respect to the coadjoint representation. For the constant 'freezed' brackets, their symplectic leaves are the translations of the tangent space $\mathrm{T}_{m_i}\mathcal{W}_{m_i}$ through the chosen point $m_i$. We formally identify the codimension of the orbit with the dimension of $\mathfrak{g}_{m}=\mathrm{Lie}\big(G_m\big)$.

For an exceptional semi-direct product $G\ltimes_\mathrm{Ad} \mathfrak{g}_{\mathrm{ab}}$, it is known (\cite{Raw75, Bag98, BarObl15}) that the coadjoint orbits $\mathcal{W}_{(j,p)}$ correspond to the bundle of little group orbits "up to" the cotangent spaces at each point of $\{\mathrm{Ad}^*f\cdot p|\hspace{0.1cm} f\in G\}$.

Denote by $\mathcal{W}_{(j,c_1:p,c_2)}$ the coadjoint orbit of $(j,c_1;p,c_2)\in\mathfrak{bms}_3^*$. Recall that the central charges $c_1$ and $c_2$ are invariants on the orbits in $\mathfrak{bms}_3^*$; thus, we restrict ourselves to the orbits in the hyperplane $\big\{(jdx^2,c_1; pdx^2,c_2)|\hspace{0.1cm} c_1=c_1^0; \hspace{0.1cm} c_2=c_2^0\big\}$. With the chosen frozen point $m_0=(1/2,0; 1/2, 0)$, the orbit in the previous hyperplane is of codimension $2$ (since it is characterized by two distinct vector fields on $\mathrm{S}^1$). Hence, $\mathcal{W}_{(1/2,0; 1/2, 0)}$ is of codimension 4.

Therefore, expanding the Virasoro case in \cite{KheMis02}, we argue that the freezing points in $\mathfrak{bms}_3^*$ correspond to the types of $\mathfrak{bms}_3$ coadjoint orbits of codimension 4: 
\begin{conjecture}
    For a non-vanishing $c_1$, all the Poisson pairs $\big(\{,\}_\mathrm{LP}^{\mathfrak{bms}_3^*},\{,\}_0^{\mathfrak{bms}_3^*}\big)$ of codimension 4 on $\mathfrak{bms}_3^*$ are classified into fifteen classes depending on the type of the frozen point:
    \begin{equation}
        \begin{split}
            &\bigg(\frac{dx^2}{2},0;\frac{dx^2}{2},0\bigg),\hspace{0.1cm}\bigg(dx^2,0;\frac{dx^2}{2},0\bigg),\hspace{0.1cm} \bigg(\frac{dx^2}{2},1;\frac{dx^2}{2},0\bigg),\hspace{0.1cm}\bigg(dx^2,1;\frac{dx^2}{2},0\bigg);\\
            & \big(0,1;0,1\big),\hspace{0.1cm}\bigg(\frac{dx^2}{2},1;0,1\bigg),\hspace{0.1cm} \big(0,2;0,1\big),\hspace{0.1cm} \bigg(\frac{dx^2}{2},2;0,1\bigg);\\
            & \bigg(\frac{dx^2}{2},1;\frac{dx^2}{2},1\bigg),\hspace{0.1cm} \bigg(dx^2,1;\frac{dx^2}{2},1\bigg),\hspace{0.1cm} \bigg(\frac{dx^2}{2},2;\frac{dx^2}{2},1\bigg),\hspace{0.1cm} \bigg(dx^2,2;\frac{dx^2}{2},1\bigg);\\
            & \bigg(\frac{dx^2}{2},0; 0,0\bigg),\hspace{0.1cm} \big(0,1; 0,0\big),\hspace{0.1cm} \bigg(\frac{dx^2}{2},1; 0,0\bigg)\hspace{0.1cm}.
        \end{split}
        \label{Eq.77}
    \end{equation}
\end{conjecture}

It is proposed that the $\mathfrak{bms}_3$ integrable hierarchy uncovered in \textbf{Section.2.} is not unique. In future work, we will address an exhaustive classification of all possible $\mathfrak{bms}_3-$like hierarchies related to general $\mathrm{H}^1_{(\theta_1,\iota_1; \theta_2,\iota_2)}$ energies. Regarding the degenerate situation (for instance $(0,1;0,1)$), we may work on homogeneous spaces. One should also wonder about their integrability character. Additionally, as shown in \cite{Fue.al.17}, suitable boundary conditions reduce the Einstein equations to the known $\mathfrak{bms}_3-$PDEs (that is, Eqs.\ref{Eq.43}, \ref{Eq.44} etc). Thus, this conjecture offers a perspective on solving the $3\mathrm{d}$ Einstein equations in specific sectors using integrability tools. 
\\[1em]

Finally, we take a step back. So far, we have covered the whole bottom part of the introductory schematic figure \ref{fig:bms3_hierarchy}. Another line of research can be investigated, again based on the KdV scenario. 

We mentioned in the end of \textbf{Section.1.} that the dual of the Virasoro algebra is naturally associated with the space of Schrödinger operators $\big\{c^*\partial_x^2+u(x)\big\}$. In addition, there is a one-to-one correspondence between the conjugacy classes of the monodromy attached to these operators and the set of coadjoint orbits, so that the trace of the monodromy furnishes a Casimir for the pencil $\{,\}_\mathrm{LP}^{\mathfrak{vir}^*}+\mu\{,\}_0^{\mathfrak{vir}^*}$. It would be tempting to see whether this type of result still holds for $\mathfrak{bms}_3$. A promising first hint is suggested from the connection between $\mathfrak{bms}_3^*$ and the so-called \textit{energy-dependent Schrödinger operators}. The last section addresses this point. 

\section{An energy-dependent Lax pair}

It is a well-known fact that the KdV hierarchy stems from the Schrödinger operator $\mathcal{L}:=\partial_x^2-u(x)$, where $u\in\mathrm{C}^\infty(\mathbb{R})$. Indeed, one may associate to $\mathcal{L}$ the differential part of its $m-$roots $\big(\mathcal{L}^{m/2}\big)_+$ in such a way that \textit{i)} the commutators $\big[\big(\mathcal{L}^{m/2}\big)_+,\mathcal{L}\big]$ yield the flows of non-linear (but integrable) PDEs and \textit{ii)} the isospectrality of the spectrum for the linear problem $\mathcal{L}\phi(x)=\lambda\phi(x)$ is preserved along these flows. We say that $\big\{\big(\mathcal{L},(\mathcal{L}^{m/2})_+\big),\hspace{0.1cm} m\hspace{0.1cm} \mathrm{odd}\big\}$ are \textit{Lax pairs}\footnote{An even value of $m$ leads to a trivial flow.}.

Specifically, in the linear problem, we deform both $\phi(x)\to\phi(x;t)$ and $u(x)\to u(x;t)$ with respect to some $t-$parameter (which should be thought of as an evolutionary parameter) so that $\partial_t\phi=\mathcal{P}\phi$, for $\mathcal{P}$ a differential operator. The \textit{Lax equation} $\partial_t\mathcal{L}=[\mathcal{P},\mathcal{L}]$ expresses the \textit{compatibility}\footnote{\textit{Compatibility} in the sense that a solution $\phi(x;0)$ of the linear problem at $t=0$ is still a solution at time $t$ (after it evolves with respect to $\mathcal{P}$) if and only if $u$ solves the Lax equation.} between the spectral problem and the $t-$deformation of the solution. `Isospectrality' then means that the spectrum is independent of $t$.

Regarding the Schrödinger operator, the Lax operator is given by $\mathcal{P}=\big(\mathcal{L}^{m/2}\big)_+$, $m\in2\mathbb{N}+1$ and the infinite number of flows $\partial_{t_m}\mathcal{L}=\big[(\mathcal{L}^{m/2})_+,\mathcal{L}]$ is the \textit{KdV hierarchy}. In particular, one should bear in mind that the KdV flows identify themselves with the Hamiltonian equations on the (regular) dual of the Virasoro algebra. The literature on the subject is extensive. We refer the reader to some of the classic papers \cite{Adl79, DriSok84, Wil79, ZakFad71}, reviews \cite{Man79, Dic03} and references therein.
\\[1em]

In this context, we now show that the coadjoint orbits of $\mathfrak{bms}_3$ describe the Lax flows of a specific linear problem.

\begin{definition}
    A $N-$\textit{energy-dependent operator} is a Schrödinger-like operator of the form
    \begin{equation}
    \mathbb{L}:=\bigg(\sum_0^Na_i\lambda^i\bigg)\partial_x^2+\sum_0^Nu_i(x)\lambda^i\hspace{0.1cm},
    \label{Eq.78}
\end{equation}
where $\lambda$ refers to the spectral parameter, the $a_i$ are constants and $u_i$ functions of $x$.    
\end{definition}
The $N=2$ case is of particular interest. We consider the linear spectral problem
\begin{equation}
    \mathbb{L}\phi=0\hspace{0.1cm}.
    \label{Eq.79}
\end{equation}

Integrability aspects of energy-dependent Schrödinger problems have been studied primarily by \cite{Mar80} and \cite{AntFor87, AntFor87-2}. In general, one cannot hope to exhibit a Lax pair by computing fractional powers of such operators, and therefore the previous Gel'fand-Dickey procedure fails. 

Nevertheless, \cite{AntFor89} found an efficient way to still obtain Lax pairs. As before, deform the solution and the potential components in terms of an evolution parameter $t$. We search for functions $\mathcal{A}\big[u_{i}^{(j)};\lambda\big]$ and $\mathcal{B}\big[u_{i}^{(j)};\lambda\big]$ such that the Lax equation $\partial_t\mathbb{L}=\big[\mathbb{P},\mathbb{L}\big]$ is the compatibility condition between Eq.\ref{Eq.79} and the evolution equation 
\begin{equation}
    \partial_t\phi=\mathbb{P}\phi=\big(\mathcal{A}\partial_x+\mathcal{B}\big)\phi\hspace{0.1cm}.
    \label{Eq.80}
\end{equation}

The Lax flows rewrite in terms of familiar $3$rd-order differential operators
\begin{equation}
    \sum_0^2 u_{i,t}\lambda^i=\bigg(\sum_0^2\mathcal{J}_i\lambda^i\bigg)\mathcal{A} \hspace{0.5cm}\text{with}\hspace{0.5cm} \mathcal{J}_i=\bigg(\frac{1}{2}a_i\partial_x^3+u_i\partial_x+\partial_x\circ u_i\bigg)\hspace{0.1cm},
    \label{Eq.81}
\end{equation}
where $\mathcal{B}$ must be proportional to $\mathcal{A}$ in order to cancel the first order derivative term. 

It is convenient\footnote{This ansatz can be fully justified via the $r-$matrix formalism applied on the Virasoro-valued loop algebra. For details, see \cite{ForReySem89}.} to look for $\mathcal{A}$ as an \textit{arbitrary} polynomial expansion with respect to the spectral parameter:
\begin{equation}
    \mathcal{A}\big[u_i^{(j)};\lambda \big]\equiv\sum_{i=0}^mA_{m-i}\big[u_i^{(j)}\big]\lambda^i, \hspace{0.1cm} \forall m>0 \hspace{0.5cm} \text{with} \hspace{0.5cm} A_i=0, i<0\hspace{0.1cm}.
    \label{Eq.82}
\end{equation}

The above Lax equations Eq.\ref{Eq.81} split into two sets of equations with respect to powers of $\lambda$ leading to \textit{i)} non-trivial flows and \textit{ii)} $m$ recursion relations. The latter provide a recipe to compute recursively the $A_i$ coefficients, \textit{up to} $A_{m-1}$. To get the first, we should fix $u_2=\text{Cte}\equiv-1$ so that the flow becomes trivial\footnote{An alternative would be to fix $A_m=0$ so that $u_0=\mathrm{Cte}$. It brings us to the so-called \textit{Harry-Dym hierarchy} \cite{AntFor88}.}. Since it corresponds to the recursion at $m$, the whole construction closes: 
\begin{equation}
    \mathcal{J}_0A_{k-2}+\mathcal{J}_1A_{k-1}+\mathcal{J}_2A_k=0, \hspace{0.1cm} k=0,\cdots,m\hspace{0.1cm}.
    \label{Eq.83}
\end{equation}
The constraint equations start with $A_0\in\text{Ker}\mathcal{J}_2$. It is solvable whenever $a_2\equiv0$ \cite{AntFor89}. Therefore, an infinite sequence of isospectral flows is derived from the Lax construction.
\begin{proposition}
    The Lax equations of motion for the $2-$energy-dependent Schrödinger problem take the form
    \begin{equation}
        \begin{pmatrix}
            u_0\\u_1
        \end{pmatrix}_{,t_m}=\begin{pmatrix}
            0 & \mathcal{J}_0 \\ \mathcal{J}_0 & \mathcal{J}_1
        \end{pmatrix}\begin{pmatrix}
            \mathcal{A}_{m-1}\\ \mathcal{A}_m
        \end{pmatrix}\hspace{0.1cm},
        \label{Eq.84}
    \end{equation}
    and the associated matrix differential operators is Hamiltonian.
\end{proposition}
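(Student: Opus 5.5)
Two things have to be shown: the Lax equation reduces to the displayed matrix form, and the matrix operator is Hamiltonian.

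\textbf{Step 1: reduction of the Lax equation.} I substitute $\mathbb{P}=\mathcal{A}\partial_x+\mathcal{B}$ into $\partial_t\mathbb{L}=[\mathbb{P},\mathbb{L}]$ and collect the coefficients of $\partial_x^2$, $\partial_x$ and $\partial_x^0$.
\begin{itemize}
\item The $\partial_x$-coefficient forces $\mathcal{B}=-\tfrac12\mathcal{A}^{(1)}$, up to an additive constant which is irrelevant.
\item The $\partial_x^0$-coefficient then reproduces Eq.~\ref{Eq.81}. The $\partial_x^2$-coefficient is automatically consistent because the $a_i$ are constant.
\end{itemize}

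\textbf{Step 2: splitting in powers of $\lambda$.} I impose $a_2=0$ and $u_2\equiv-1$, so that $\mathcal{J}_2=-2\partial$. Inserting the ansatz Eq.~\ref{Eq.82} into Eq.~\ref{Eq.81} and expanding in $\lambda$:
\begin{itemize}
\item The powers $\lambda^{m+2},\dots,\lambda^{2}$ give exactly the recursion Eq.~\ref{Eq.83} for $k=0,\dots,m$.
\item The powers $\lambda^0$ and $\lambda^1$ retain only the terms not absorbed by that recursion. This gives $u_{0,t}=\mathcal{J}_0A_m$ and $u_{1,t}=\mathcal{J}_0A_{m-1}+\mathcal{J}_1A_m$.
\end{itemize}
Since $u_2$ is constant, the $\lambda^2$ equation must give $u_{2,t}=0$. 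This is precisely the $k=m$ instance of Eq.~\ref{Eq.83}, so the system closes.

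Reordering the pair as $(A_{m-1},A_m)$, I expect to match the stated matrix $\begin{pmatrix}0&\mathcal{J}_0\\ \mathcal{J}_0&\mathcal{J}_1\end{pmatrix}$. I will need to keep track of the index bookkeeping carefully here, since $\mathcal{A}$ is written with $A_{m-i}$ multiplying $\lambda^i$.

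\textbf{Step 3: the operator is Hamiltonian.} Write $\mathcal{J}_i=\tfrac12a_i\partial^3+2u_i\partial+u_i^{(1)}$. After swapping the two rows and columns, the matrix becomes
$$\begin{pmatrix}\mathcal{J}_1&\mathcal{J}_0\\ \mathcal{J}_0&0\end{pmatrix}.$$
Swapping the indices $u_0\leftrightarrow u_1$ is a relabeling that preserves the Hamiltonian property. Under the identification $u_1\leftrightarrow u_1$, $u_0\leftrightarrow u_2$, $c_1=-a_1/2$, $c_2=-a_0/2$, this is exactly the $\widehat{\mathfrak{bms}}_3$ operator $\mathcal{E}$ of Eq.~\ref{Eq.41}.

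The result then follows from Theorem~2.3, since the bracket Eq.~\ref{Eq.33} is a Lie bracket (the semi-direct sum). It also follows from Dorfman's cocycle criterion, because the $\partial^3$ terms come from the $2$-cocycle Eq.~\ref{Eq.38}. Equivalently, by Eq.~\ref{Eq.69} it is the Lie--Poisson bracket on $\mathfrak{bms}_3^*$, which is Poisson.

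\textbf{Main obstacle.} I expect the main difficulty to be the bookkeeping in Step 2. The boundary powers of $\lambda$ have to separate cleanly into the recursion relations and the two genuine flows. I also have to confirm that the constraint $A_0\in\mathrm{Ker}\,\mathcal{J}_2$, solvable since $a_2=0$ and solved by a constant, makes every $A_k$ a differential polynomial. This requires each recursion step to lie in $\mathrm{Im}\,\partial$, which I would import from \cite{AntFor89} or from Proposition~2.9's analogue rather than reprove.
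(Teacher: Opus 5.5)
\textbf{Step 1 fails as written.} Put $a=\sum_i a_i\lambda^i$ and $u=\sum_i u_i\lambda^i$, so that $\mathbb{L}=a\partial_x^2+u$. A direct computation gives
\[
[\mathbb{P},\mathbb{L}]=-2a\mathcal{A}^{(1)}\partial_x^2-a\big(\mathcal{A}^{(2)}+2\mathcal{B}^{(1)}\big)\partial_x+u^{(1)}\mathcal{A}-a\mathcal{B}^{(2)}.
\]
The left side $\partial_t\mathbb{L}=u_t$ is a multiplication operator. So the $\partial_x^2$-coefficient of the literal equation $\partial_t\mathbb{L}=[\mathbb{P},\mathbb{L}]$ is not automatically consistent: it forces $a\mathcal{A}^{(1)}=0$, which leaves only the trivial flows $u_t=\mathcal{A}u^{(1)}$ with $\mathcal{A}$ constant. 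Moreover, after setting $\mathcal{B}=-\tfrac12\mathcal{A}^{(1)}$, the zeroth-order part is $\tfrac12 a\mathcal{A}^{(3)}+u^{(1)}\mathcal{A}$. This lacks the term $2u\mathcal{A}^{(1)}$ of $\big(\sum_i\mathcal{J}_i\lambda^i\big)\mathcal{A}$, so it does not reproduce Eq.~\ref{Eq.81}.

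The missing idea is that compatibility of $\mathbb{L}\phi=0$ with $\phi_t=\mathbb{P}\phi$ only requires $\partial_t\mathbb{L}-[\mathbb{P},\mathbb{L}]$ to annihilate $\ker\mathbb{L}$. That operator must therefore be a left multiple of $\mathbb{L}$: $\partial_t\mathbb{L}=[\mathbb{P},\mathbb{L}]+\mathcal{C}\mathbb{L}$.
\begin{itemize}
\item Matching the $\partial_x^2$-terms gives $\mathcal{C}=2\mathcal{A}^{(1)}$.
\item The $\partial_x$-terms still give $\mathcal{B}=-\tfrac12\mathcal{A}^{(1)}$.
\item The extra contribution $\mathcal{C}u=2u\mathcal{A}^{(1)}$ is exactly what completes the zeroth-order part to $\big(\sum_i\mathcal{J}_i\lambda^i\big)\mathcal{A}$.
\end{itemize}
Equivalently, differentiate $\phi_t=\mathcal{A}\phi_x+\mathcal{B}\phi$ twice in $x$ and eliminate $\phi_{xx}$ using $a\phi_{xx}=-u\phi$. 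The paper's own phrase ``Lax equation $\partial_t\mathbb{L}=[\mathbb{P},\mathbb{L}]$'' has to be read in this weak sense; it takes Eq.~\ref{Eq.81} from Antonowicz--Fordy rather than deriving it.

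With this repair, the rest of your argument is correct and follows the paper's route. Your $\lambda$-expansion already yields $u_{0,t}=\mathcal{J}_0A_m$ and $u_{1,t}=\mathcal{J}_0A_{m-1}+\mathcal{J}_1A_m$, which is exactly Eq.~\ref{Eq.84}, so no bookkeeping issue remains. Your identification with $\mathcal{E}$ via $c_1=-a_1/2$, $c_2=-a_0/2$ is the one the paper uses in Eq.~\ref{Eq.85}.

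One precision on Step 3. Theorem 2.3 applied to Eq.~\ref{Eq.33} only gives the operator Eq.~\ref{Eq.34}, without the $\partial^3$ terms. The central terms $\tfrac12 a_i\partial^3$ need the cocycle/compatibility argument of Eq.~\ref{Eq.38}, or the Lie--Poisson identification Eq.~\ref{Eq.69}. Your second justification is therefore a necessary complement to the first, not an alternative.
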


One recognizes precisely that the matrix above is similar to the first Hamiltonian operator $\mathcal{E}$ that we found working on the variational complex. 

It is worth mentioning that whenever $a_1=0$ and $u_1=0$, we do recover the standard KdV hierarchy. 

The next result unravels the connection with the $\mathfrak{bms}_3$ algebra.
\begin{theorem}
    The Lax flows of the $2-$energy-dependent Schrödinger problem are the Hamiltonian equations on the (regular) dual of $\mathfrak{bms}_3$
    \begin{equation}
        \big(u_1,u_0\big)_{,t_m}=\mathrm{ad}^*_{\mathfrak{bms}_3}\bigg(A_m=\delta_{u_1}\widetilde{\mathcal{H}}_m;A_{m-1}=\delta_{u_0}\widetilde{\mathcal{H}}_m\bigg).\bigg(u_1,-\frac{a_1}{2};u_0,-\frac{a_0}{2}\bigg)\hspace{0.1cm},
        \label{Eq.85}
    \end{equation}
    for some functional $\widetilde{\mathcal{H}}_m$.
\end{theorem}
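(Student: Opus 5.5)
The plan is to compare the Lax flows of Proposition 5.3 with the coadjoint action of Proposition 1.2, component by component, and then to supply the functional $\widetilde{\mathcal{H}}_m$ separately.

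\emph{Step 1 (reduction to Eq.\ref{Eq.84}).} By Proposition 5.3 the Lax flows read
\begin{equation*}
u_{0,t_m}=\mathcal{J}_0A_m,\qquad u_{1,t_m}=\mathcal{J}_0A_{m-1}+\mathcal{J}_1A_m,
\end{equation*}
with $\mathcal{J}_i=\frac{1}{2}a_i\partial^3+2u_i\partial+u_i^{(1)}$. Here I use $u_i\partial+\partial\circ u_i=2u_i\partial+u_i^{(1)}$, which holds as operators.

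\emph{Step 2 (matching with the coadjoint action).} In Proposition 1.2 I set $X=A_m$, $\alpha=A_{m-1}$, $j=u_1$, $p=u_0$, $c_1=-a_1/2$ and $c_2=-a_0/2$. The second component of the coadjoint action becomes
\begin{equation*}
Xp^{(1)}+2pX^{(1)}-c_2X^{(3)}=\mathcal{J}_0A_m.
\end{equation*}
The first component becomes
\begin{equation*}
\mathcal{J}_1A_m+\big(\alpha u_0^{(1)}+2u_0\alpha^{(1)}+\tfrac{a_0}{2}\alpha^{(3)}\big)=\mathcal{J}_1A_m+\mathcal{J}_0A_{m-1}.
\end{equation*}
So, once the pairs $(u_1,u_0)$ and $(j,p)$ are ordered correctly, the right-hand sides coincide. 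Equivalently, Eq.\ref{Eq.84} is the operator $\mathcal{E}$ of Eq.\ref{Eq.41} with the two variables swapped and with $c_i\mapsto -a_i/2$. I would also use the reduction $u_2=-1$ with $a_2=0$, which makes $\mathcal{J}_2=-2\partial$ constant. This identifies the operator $\mathcal{D}$ and explains why the central charges are exactly $-a_i/2$.

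\emph{Step 3 (existence of $\widetilde{\mathcal{H}}_m$).} It remains to show that $(A_m,A_{m-1})$ is exact, meaning $A_m=\delta_{u_1}\widetilde{\mathcal{H}}_m$ and $A_{m-1}=\delta_{u_0}\widetilde{\mathcal{H}}_m$. For this I use the recursion Eq.\ref{Eq.83} with $\mathcal{J}_2=-2\partial$:
\begin{equation*}
2\partial A_k=\mathcal{J}_0A_{k-2}+\mathcal{J}_1A_{k-1}.
\end{equation*}
This says that $\mathcal{D}'(A_k,A_{k-1})=\mathcal{E}'(A_{k-1},A_{k-2})$ for the $\mathcal{E}$-type operator of Eq.\ref{Eq.84}, where $\mathcal{D}'$ is the constant operator with $\partial$ placed in the appropriate slot. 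In other words, this is exactly a Lenard chain of the form Eq.\ref{Eq.21} for this bi-Hamiltonian pair. Starting from $A_0\in\mathrm{Ker}\,\partial$, a constant, the seed $(A_0,0)$ is the differential of a linear Casimir, as in Eq.\ref{Eq.43}.

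By the result quoted after Eq.\ref{Eq.21} (Dorfman), every 1-form in such a chain is closed. Exactness of the variational complex (Theorem 2.8 context) then makes each closed form exact, $\mathbf{d}\widetilde{\mathcal{H}}_m$. Solvability of each step, i.e.\ that the right-hand side lies in $\mathrm{Im}\,\partial$, is the analogue of Proposition 2.7 and can be imported from its appendix proof. Alternatively, it follows from the standard Antonowicz–Fordy argument that $\int(\mathcal{J}_0A_{k-2}+\mathcal{J}_1A_{k-1})$ vanishes by the skew-adjointness of the $\mathcal{J}_i$.

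\emph{Main obstacle.} I expect the delicate point to be Step 3: making the index bookkeeping of the chain consistent, which pair $(A_k,A_{k-1})$ plays the role of $\mathbf{d}\widetilde{\mathcal{H}}$, and verifying the compatibility of the two operators. For the latter I would invoke Proposition 2.4 after the relabeling $c_i\mapsto -a_i/2$ and a coboundary shift of $u_0$, as in the Remark following Proposition 2.4.
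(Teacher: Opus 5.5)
Your Steps 1--2 are correct and give exactly the identification the paper relies on. With $X=A_m$, $\alpha=A_{m-1}$, $j=u_1$, $p=u_0$, $c_1=-a_1/2$, $c_2=-a_0/2$, the two components of Proposition 1.2 reproduce Eq.\ref{Eq.84}. The value $-a_i/2$ comes from matching $\tfrac12 a_i\partial^3$ with $-c_i\partial^3$, not from the reduction $u_2=-1$.

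The gap is in Step 3, where you obtain $\widetilde{\mathcal{H}}_m$ from a Lenard chain for $\mathcal{E}'$ and a \emph{constant} operator $\mathcal{D}'$. The recursion Eq.\ref{Eq.83} is a single scalar identity, $2\partial A_k=\mathcal{J}_0A_{k-2}+\mathcal{J}_1A_{k-1}$. It accounts only for the first entry of $\mathcal{E}'(A_{k-1},A_{k-2})$. The second entry is $\mathcal{J}_0A_{k-1}$, and no constant-coefficient operator produces it from $(A_k,A_{k-1})$. Concretely, with $A_0=1$ you get $A_1=\tfrac12u_1$ and $A_2=\tfrac12u_0+\tfrac{a_1}{8}u_1''+\tfrac38u_1^2$. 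But $\mathcal{J}_0A_1=\tfrac{a_0}{4}u_1'''+u_0u_1'+\tfrac12u_0'u_1$ contains the product $u_0u_1'$, and no constant-coefficient linear expression in $(A_2,A_1)$ contains such a term.

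The operator that actually closes the chain is $\mathrm{diag}(2\partial,\mathcal{J}_0)$ in your ordering, i.e.\ the paper's $\mathcal{F}$ of Eq.\ref{Eq.86}, and one step further $\mathcal{G}$. In the $(u_0,u_1)$ ordering, $\mathcal{E}\xi_m=\mathcal{F}\xi_{m+1}=\mathcal{G}\xi_{m+2}$ with $\xi_m=(A_{m-1},A_m)$. These operators are not constant. So Proposition 2.4, a cocycle criterion for constant-coefficient operators, says nothing about their Hamiltonian property or their compatibility with $\mathcal{E}$. Likewise, the Appendix argument (Proposition 2.8) concerns $\mathcal{E}\mathcal{D}^{-1}$, which generates a different hierarchy; the paper says after Eq.\ref{Eq.87} that the two do not coincide. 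For instance, in Section 2 the $u_2$-component of $\bar{\mathcal{K}}_1$ (the analogue of $u_0$ here) is a KdV flow in $u_2$ alone. Here, by contrast, $u_{0,t_1}=\mathcal{J}_0(u_1/2)$ involves $u_1$.

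What is missing is the input the paper itself uses for the existence of $\widetilde{\mathcal{H}}_m$: the Antonowicz--Fordy tri-Hamiltonian structure $\{\mathcal{E},\mathcal{F},\mathcal{G}\}$ of Eqs.\ref{Eq.86}--\ref{Eq.87}. Granting it, your Dorfman-closedness-plus-exactness argument goes through for the pair $(\mathcal{F},\mathcal{G})$, with $\mathcal{G}$ nondegenerate. The chain starts from the closed form $\xi_0=(0,A_0)$, with $\mathcal{F}\xi_0=0=\mathcal{G}\xi_1$. This yields $\xi_m=\mathbf{d}\widetilde{\mathcal{H}}_m$, so the flow is $\mathcal{E}\,\mathbf{d}\widetilde{\mathcal{H}}_m$ as the theorem requires.

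Your alternative solvability argument also needs repair. Skew-adjointness alone only gives $\int(\mathcal{J}_0A_{k-2}+\mathcal{J}_1A_{k-1})=-\int(A_{k-2}u_0'+A_{k-1}u_1')$. This vanishes only because $(A_{k-2},A_{k-1})$ is already the variational gradient of an $x$-independent density. Solvability and exactness therefore have to be established together, in a single induction.
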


Of course, this theorem is translated at the level of the Poisson structure. 
\begin{corollary}
    The Lie-Poisson bracket on the extended dual $\mathfrak{bms}_3$ coincides with the $\mathcal{E}-$bracket.
\end{corollary}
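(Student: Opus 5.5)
We read the corollary as a direct translation of the preceding theorem (the Lax flows are $\mathfrak{bms}_3$ coadjoint flows) to the level of brackets. It then coincides with the computation of Eq.\ref{Eq.69}, now in the variables $(u_1,u_0)$ of the energy-dependent problem. The plan is to compute the Lie-Poisson bracket of Eq.\ref{Eq.66} on $\mathfrak{bms}_3^*$ at the point $m=(u_1,-a_1/2;u_0,-a_0/2)$. We use the pairing Eq.\ref{Eq.19} and the identification Eq.\ref{Eq.68} of $\mathbf{d}_m\tilde{\varphi}$ with $(\delta_{u_1}\tilde{\varphi},\delta_{u_0}\tilde{\varphi})$. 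We then compare the result with the bracket $\int \delta\tilde{\varphi}\cdot\mathcal{E}\,\delta\tilde{\psi}$ built from the matrix of Proposition 5.2.

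The first step rewrites the Lie-Poisson bracket as a coadjoint action. For $\xi=\mathbf{d}_m\tilde{\varphi}$ and $\eta=\mathbf{d}_m\tilde{\psi}$ we have $\{\tilde{\varphi},\tilde{\psi}\}_{\mathrm{LP}}(m)=\langle m,[\xi,\eta]\rangle=\pm\langle \mathrm{ad}^*_\eta m,\xi\rangle$, where the sign is fixed by the paper's convention on $\mathrm{Vect}(S^1)$ (see the footnote to Eq.\ref{Eq.32}). We then insert Proposition 1.2 with $X=\delta_{u_1}\tilde{\psi}$, $\alpha=\delta_{u_0}\tilde{\psi}$, $j=u_1$, $p=u_0$, $c_1=-a_1/2$ and $c_2=-a_0/2$. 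The first component becomes $Xu_1'+2u_1X'+\tfrac{a_1}{2}X'''+\alpha u_0'+2u_0\alpha'+\tfrac{a_0}{2}\alpha'''$, which is exactly $\mathcal{J}_1X+\mathcal{J}_0\alpha$ with $\mathcal{J}_i$ as in Eq.\ref{Eq.81}. The second component is $\mathcal{J}_0X$.

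Pairing with $\xi$ via Eq.\ref{Eq.19} gives
\[
\int\Big(\delta_{u_1}\tilde{\varphi}\,(\mathcal{J}_1\delta_{u_1}\tilde{\psi}+\mathcal{J}_0\delta_{u_0}\tilde{\psi})+\delta_{u_0}\tilde{\varphi}\,\mathcal{J}_0\delta_{u_1}\tilde{\psi}\Big).
\]
After reordering the variables as $(u_0,u_1)$, this is precisely the bracket of the operator $\begin{pmatrix}0&\mathcal{J}_0\\ \mathcal{J}_0&\mathcal{J}_1\end{pmatrix}$ of Eq.\ref{Eq.84}. Under the relabelling $u_1\leftrightarrow u_1$, $u_0\leftrightarrow u_2$ and $c_i\leftrightarrow -a_i/2$, with the normalization $\mathcal{J}_i\leftrightarrow$ Eq.\ref{Eq.42} up to the factor-$2$ rescaling of the potentials, it is the operator $\mathcal{E}$ of Proposition 2.4. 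The central components contribute nothing, since $\delta_{c_i}$ does not enter the bracket of functionals of $u$. The bracket is therefore the $\mathcal{E}$-bracket, consistent with Eq.\ref{Eq.69}.

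The consistency check with Theorem 5.3 is immediate. The Hamiltonian equation Eq.\ref{Eq.67} for $\widetilde{\mathcal{H}}_m$ reproduces Eq.\ref{Eq.84}, and the identity holds for all linear functionals, so the two Poisson tensors agree. The main obstacle is bookkeeping: matching signs and factors of $2$ between Eq.\ref{Eq.20}, Eq.\ref{Eq.42} and Eq.\ref{Eq.81}, and the ordering of the variables $(u_1,u_0)$ versus $(u_0,u_1)$. We would fix these by checking the coefficient of the $\partial^3$ term, which yields $c_i=-a_i/2$, and the coefficient of the $u\partial$ term, which fixes the overall sign convention.
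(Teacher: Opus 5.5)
Your proposal is correct and is essentially the paper's own argument: the corollary is just the direct computation recorded in Eq.~\ref{Eq.69}, namely the Lie--Poisson bracket evaluated through the coadjoint action of Proposition~1.2 at $(u_1,-a_1/2;u_0,-a_0/2)$, which is how the paper passes from Theorem~5.3 to the level of brackets. Two of your hedges can be dropped: with the bracket and cocycle of Section~1.3 one has $\langle m,[\xi,\eta]\rangle=\langle \mathrm{ad}^*_\eta m,\xi\rangle$ with a plus sign, and no factor-$2$ rescaling of the potentials is needed, since $u_i\partial+\partial\circ u_i=2u_i\partial+u_i^{(1)}$ makes Eq.~\ref{Eq.81} and Eq.~\ref{Eq.42} agree exactly once $c_i=-a_i/2$.
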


Two extra Hamiltonian structures can be extracted from the constraint recursive relations of the Lax pair \cite{AntFor89}. These are, respectively, 
\begin{equation}
    \mathcal{F}=\begin{pmatrix}
        \mathcal{J}_0 & 0 \\ 0 & -\mathcal{J}_2
    \end{pmatrix} \hspace{0.2cm} \text{and} \hspace{0.2cm} \mathcal{G}=\begin{pmatrix}
        -\mathcal{J}_1 & -\mathcal{J}_2 \\ -\mathcal{J}_2 & 0
    \end{pmatrix}\hspace{0.1cm}.
    \label{Eq.86}
\end{equation}

The triplet $\big\{\mathcal{E},\mathcal{F},\mathcal{G}\big\}$ forms a \textit{tri-Hamiltonian} structure. Since $\mathcal{G}$ is non-degenerate, $\mathcal{R}:=\mathcal{F}\cdot\mathcal{G}^{-1}$ is a well-defined recursion operator, and we conclude to the existence of an infinite sequence of functionals $\widetilde{\mathcal{H}}_m$ such that: 
\begin{equation}
    \big(u_0,u_1\big)^\top_{,t_m}=\mathcal{E}\mathbf{d}\widetilde{\mathcal{H}}_m=\mathcal{F}\mathbf{d}\widetilde{\mathcal{H}}_{m+1}=\mathcal{G}\mathbf{d}\widetilde{\mathcal{H}}_{m+2}\hspace{0.1cm}.
    \label{Eq.87}
\end{equation}

Hence, it appears that the $2-$energy-dependent hierarchy and the $\mathfrak{bms}_3$ hierarchy do not fully coincide. The former may be a reduction of the latter.
\\[1em]

Actually, on the one hand, these last results on energy-dependent Lax pairs are fully understood from an \textit{r-matrix} approach. Let $\mathcal{L}(\mathfrak{vir})=\oplus_{i\in\mathbb{Z}}\mathfrak{vir}\lambda^i$ be the Virasoro-valued loop algebra. Such algebras offer an arena to work with $r-$matrices since they present a canonical decomposition between subalgebras of negative and positive powers in Laurent series \cite{Sem08}. The mapping $r:=P_+-P_-$, where $P_{\pm}$ denotes the projection along $\mathcal{L}(\mathfrak{vir})_\pm$, is a natural choice of $r-$matrix. For the $2-$energy dependent case, \cite{ReySem89} showed\footnote{Actually, they did it for any $N-$energy dependent Schrödinger operators.} that there exists a sequence of 3 compatible brackets $\{,\}_{r\lambda^k}$, $k\in\{0,1,2\}$ given in terms of the coefficients $(u_0+u_1\lambda-\lambda^2, \hspace{0.1cm}a_0+a_1\lambda)$; they find again the previous tri-Hamiltonian structure: 
\begin{equation}
    \{,\}_{r\lambda^0}\equiv\{,\}_\mathcal{G} \hspace{0.2cm} \text{,} \hspace{0.2cm} \{,\}_{r\lambda^1}\equiv\{,\}_\mathcal{F} \hspace{0.2cm} \text{,} \hspace{0.2cm} \{,\}_{r\lambda^2}\equiv\{,\}_\mathcal{E}\hspace{0.1cm} .
    \label{Eq.88}
\end{equation}

On the other hand, exceptional semi-direct structures may be described from graded algebras. A noteworthy fact is that $\lambda^2\mathcal{L}(\mathfrak{vir})_+$ is an ideal of $\mathcal{L}(\mathfrak{vir)}_+$ with respect to the $r-$bracket, and it happens that the quotient precisely recovers the $\mathfrak{bms}_3$ algebra: 
\begin{equation}
    \mathcal{L}(\mathfrak{vir})_+/_r\lambda^2\mathcal{L}(\mathfrak{vir})_+\simeq\mathfrak{bms}_3\hspace{0.1cm}.
    \label{Eq.89}
\end{equation}

Based on this observation, we expect to soon provide a fully detailed description of the $\mathfrak{bms}_3$ integrable hierarchy using the $r-$matrix formalism.

\newpage

\section*{Conclusions and Perspectives}

We conclude by enumerating the main results of this work.

$\textit{1)}$ We have revisited the proof of the integrability of the hierarchy based on the extended Lie algebra of BMS$_3$, introduced in \cite{Fue.al.17}, as possessing a bi-Hamiltonian structure. Specifically, we have established a `variational' and a `geometrical' proof:

\textit{a)} Working on the variational complex, we have seen how Gel'fand and Dorfman results on Hamiltonian structures could be used on the centrally extended $\mathfrak{bms}_3$ to get a compatible pair of such. We were able to obtain a well-defined Nijenhuis operator and justify the existence of an infinite number of independent flows. 

\textit{b)} On a Lie-Poisson perspective, starting from the Lie-Poisson bracket on the dual of $\mathfrak{bms}_3$, we recovered this bi-Hamiltonian structure thanks to the "frozen" method, by looking at the previous bracket on a well-suited point in the dual. 

$\textit{2)}$ More interestingly, from the non-uniqueness of the 'frozen' point, we conjecture that there are other $\mathfrak{bms}_3-$like Hamiltonian pair. Bearing in mind one of the key results of \cite{Fue.al.17}, in connection with solutions of $3$d Einstein equations, we then wonder whether it extends to these new hierarchies.

$\textit{3)}$ Alternatively, we have depicted the construction of a $\tau-$scheme for the asymptotically flat scenario.  

$\textit{4)}$ An integrable bi-Hamiltonian structure was also exhibited for the centrally extended $\mathfrak{ads}_3$ algebra. It consisted mainly of two copies of the KdV structure, as we could have expected from the algebraic definition. The `flat limit' argument allowed us to recover the $\mathfrak{bms}_3$ bi-Hamiltonian structure from the one of $\mathfrak{ads}_3$.

$\textit{5)}$ Based on well-studied links between Lie-Poisson structures, Schrödinger operators, and exact pencils, we started to glimpse a connection between $\mathfrak{bms}_3$ integrable features and a subclass of energy-dependent Schrödinger operators, even if more work needs to be done in this direction. 
\\[1mm]
In the view of the previously stated results, we consider at least four research perspectives that we should naturally address in the future. We list them below: 

\textit{i)} To provide a complete $r-$matrix description of the $\mathfrak{bms}_3$ integrable system here discussed.

\textit{ii)} To investigate the $\mathfrak{bms}_3$ mastersymmetries. This also suggests that a \textit{`matrix-$\mathfrak{bms}_3$'} integrable hierarchy should exist, from which the $\mathfrak{bms}_3$ hierarchy we have considered would be a reduced version (in a similar fashion to what is done in KdV, see \cite{CasMagPed93}). This is akin to the `Sato's perspective' on soliton equations.

\textit{iii)} To address the situation of the Euler equation on the BMS$_3$ group together with the classification of all admissible $\mathfrak{bms}_3-$like hierarchies. 

\textit{iv)} To look into the existence of a $\mathfrak{bms}_4$ bi-Hamiltonian (and potentially integrable) structure. The last idea mentioned in \textit{iii)} should be tackled in this case too. 

\section*{Acknowledgments}

The author warmly thanks his advisors José Luis Jaramillo, Michele Lenzi and Carlos F. Sopuerta, and is also grateful to Maxime Fairon and Mikhail Semenov-Tian-Shansky for their comments. %This work has been achieved in the frame of the EIPHI Graduate school (contract "ANR-17-EURE-0002").

\section*{Appendix: Proof of Proposition 2.8.}

\begin{proof}
 The proof relies essentially on a series of computations. We proceed by induction on $n$. Assume that $\bar{\mathcal{K}}_n=\mathcal{D}\mathcal{P}_n$ for some $\mathcal{P}_n\in\widetilde{\Omega}_1$. We may write it $\mathcal{P}_n=\mathcal{P}_{n,1}\mathbf{d}u_1+\mathcal{P}_{n,2}\mathbf{d}u_2$, $\mathcal{P}_{n,i}\in\mathcal{A}$. We need to prove that $\bar{\mathcal{K}}_{n+1}=\mathcal{R}\bar{\mathcal{K}}_n$ lies in the image of $\mathcal{D}$. To fix notations, let us write $\bar{\mathcal{K}}_n=\mathcal{K}_{n,1}\partial_{u_1}+\mathcal{K}_{n,2}\partial_{u_2}$ and call $r_2:=\mathcal{J}_2\partial^{-1}$ and $r_1:=\mathcal{J}_1\partial^{-1}-\mathcal{J}_2\partial^{-1}$ so that: 
\begin{equation}
    \mathcal{R}=\begin{pmatrix}
        r_2 & r_1 \\ 0 & r_2
    \end{pmatrix}\hspace{0.1cm}.
    \label{Eq.90}
\end{equation}

A computation gives: 
\begin{equation}
        \bar{\mathcal{K}}_{n+1}=\mathcal{D}
        \begin{pmatrix}
        -c_2\partial\mathcal{K}_{n,2}+u_2\partial^{-1}\mathcal{K}_{n,2}+\partial^{-1}\cdot u_2\mathcal{K}_{n,2} \\ \\
        \begin{split}
            -c_2\partial\mathcal{K}_{n,1}&+u_2\partial^{-1}\mathcal{K}_{n,1}+\partial^{-1}\cdot u_2\mathcal{K}_{n,1}+\big(2c_2-c_1\big)\partial\mathcal{K}_{n,2} \\
            &+\big(u_1-2u_2\big)\partial^{-1}\mathcal{K}_{n,2}+\partial^{-1}\cdot\big(u_1-2u_2\big)\mathcal{K}_{n,2}
        \end{split}
        \end{pmatrix}\hspace{0.1cm}.
        \label{Eq.91}
\end{equation}

If one manages to check that $u_2\mathcal{K}_{n,2}$ and $u_2\mathcal{K}_{n,1}+(u_1-2u_2)\mathcal{K}_{n,2}$ can be written as a total derivative with respect to $\partial$, the induction will be complete. By assumption, 
\begin{equation}
    \begin{split}
    \mathcal{K}_{n,1}&=r_2^n\big(u_1^{(1)}+u_2^{(1)}\big)+\sum_{k=0}^{n-1}r_2^{n-1-k}r_1r_2^{k}u_2^{(1)}\hspace{0.1cm}, \\
    \mathcal{K}_{n,2}&=r_2^nu_2^{(1)}\hspace{0.1cm}.
    \end{split}
    \label{Eq.92}
\end{equation}

Performing successive integration by parts, we first check that 
\begin{equation}
    u_2\mathcal{K}_{n,2}=u_2^{(1)}\big(r_2^*\big)^nu_2+\partial\mathcal{A}_{n,1}=u_2^{(1)}\partial^{-1}\cdot\mathcal{K}_{n,2}+\partial\mathcal{A}_{n,1}=-u_2\mathcal{K}_{n,2}+\partial\big(\mathcal{A}_{n,1}+\mathcal{B}_{n,1}\big), 
    \label{Eq.93}
\end{equation}
for $\mathcal{A}_{n,1}$ and $\mathcal{B}_{n,1}\in\mathcal{A}$. Thus, $u_2\mathcal{K}_{n,2}=\partial\mathcal{S}_{n,1}$ where $\mathcal{S}_{n,1}=\frac{1}{2}\big(\mathcal{A}_{n,1}+\mathcal{B}_{n,1}\big)\in\mathcal{A}$. 

Similarly with $\mathcal{A}_{n,2}\cdots\mathcal{E}_{n,2}\in\mathcal{A}$, 
\begin{equation}
    \begin{split}
        \big(u_1-2u_2\big)\mathcal{K}_{n,2}+u_2\mathcal{K}_{n,1}&=-u_2r_2^nu_1^{(1)}-u_1r_2^nu_2^{(1)}+u_2r_2^nu_2^{(1)}
        +u_2^{(1)}\bigg(\sum_{k=0}^{n-1}r_2^{n-1-k}r_1r_2^k\bigg)^*u_2\\ &+\partial\big(\mathcal{A}_{n,2}+\mathcal{B}_{n,2}+\mathcal{C}_{n,2}+\mathcal{D}_{n,2}+\mathcal{E}_{n,2}\big)\hspace{0.1cm}.
    \end{split}
    \label{Eq.94}
\end{equation}

Working out the remaining adjoint term, we are led to
\begin{equation}
    u_2^{(1)}\bigg(\sum_{k=0}^{n-1}r_2^{n-1-k}r_1r_2^k\bigg)^*u_2=-u_2\sum_{k=0}^{n-1}r_2^kr_1r_2^{n-1-k}u_2^{(1)}+\partial\mathcal{F}_{n,2}, \hspace{0.2cm} \mathcal{F}_{n,2}\in\mathcal{A}\hspace{0.1cm}.
    \label{Eq.95}
\end{equation}
such that $\big(u_1-2u_2\big)\mathcal{K}_{n,2}+u_2\mathcal{K}_{n,1}=\frac{1}{2}\partial\big(\mathcal{A}_{n,2}+\cdots+\mathcal{F}_{n,2}\big)\equiv\partial\mathcal{S}_{n,2}$, $\mathcal{S}_{n,2}\in\mathcal{A}$.

\end{proof}

\end{document}